%% file: main.tex
\documentclass{article}
\usepackage[T1]{fontenc}
\usepackage[english]{babel}
\usepackage[utf8]{inputenc}
\usepackage{csquotes}

\usepackage{fancyhdr}

\usepackage{amsthm}
\usepackage{amsmath}
\usepackage{amssymb}
\usepackage{mathrsfs}
\usepackage{stmaryrd}
\usepackage{verbatim}
\usepackage{tkz-tab}
\usepackage[all,cmtip]{xy}
\usepackage{mathtools}
\usepackage{stmaryrd}
\usepackage{mathdots}

\usepackage{tikz}
\usetikzlibrary{arrows.meta,positioning,decorations.pathreplacing,calc}
\usepackage{tikz-cd}

\usepackage[
backend=biber,
style=alphabetic,
sorting=anyt
]{biblatex}
\usepackage{graphicx}
\usepackage{esvect} 
\usepackage{hyperref}

\usepackage{tcolorbox}
\usepackage{xcolor}
\hypersetup{
    colorlinks = true,
    linkcolor=blue,
    citecolor=blue,
    urlcolor={blue!80!black}
}
\usepackage{setspace}
\usepackage{dsfont}
\usepackage{float}
\usepackage{pythonhighlight}

\newcommand{\hooklongrightarrow}{\lhook\joinrel\longrightarrow}

\everymath{\displaystyle}

\makeatletter
\@ifpackageloaded{stix}{%
}{%
  \DeclareFontEncoding{LS2}{}{\noaccents@}
  \DeclareFontSubstitution{LS2}{stix}{m}{n}
  \DeclareSymbolFont{stix@largesymbols}{LS2}{stixex}{m}{n}
  \SetSymbolFont{stix@largesymbols}{bold}{LS2}{stixex}{b}{n}
  \DeclareMathDelimiter{\lBrace}{\mathopen} {stix@largesymbols}{"E8}%
                                            {stix@largesymbols}{"0E}
  \DeclareMathDelimiter{\rBrace}{\mathclose}{stix@largesymbols}{"E9}%
                                            {stix@largesymbols}{"0F}
}
\makeatother

\makeatletter
\newcommand*{\boxwedge}{%
  \mathbin{%
    \mathpalette\@boxwedge{}%
  }%
}
\newcommand*{\@boxwedge}[2]{%
  \sbox0{$#1\boxplus\m@th$}%
  \dimen2=.5\dimexpr\wd0-\ht0-\dp0\relax % side bearing
  \dimen@=\dimexpr\ht0+\dp0\relax
  \def\lw{.06}% linw width as factor for height of \boxplus
  \kern\dimen2 % side bearing
  \tikz[
    line width=\lw\dimen@,
    line join=round,
    x=\dimen@,
    y=\dimen@,
  ]
  \draw
    (\lw/2,0) rectangle (1-\lw,1-\lw)
    (\lw,0) -- (.5,1-\lw-\lw/2) -- (1-\lw-\lw/2 ,0)
  ;%
  \kern\dimen2 % side bearing
}
\makeatletter

\usepackage[left=2cm,right=2cm,top = 2.1cm,bottom=2cm,headheight=28pt]{geometry}
\newtheorem{Lemma}{Lemma}[section]
\newtheorem{Proposition}[Lemma]{Proposition}
\newtheorem{Corollary}[Lemma]{Corollary}
\newtheorem{Theorem}[Lemma]{Theorem}
\newtheorem{Definition}[Lemma]{Definition}

\theoremstyle{definition}
\newtheorem{Remark}[Lemma]{Remark}

\theoremstyle{definition}
\newtheorem{Example}[Lemma]{Example}

\everymath{\displaystyle}

\usepackage{varwidth}
\usepackage[shortlabels]{enumitem}

\title{Equivariant Poisson $2$-Algebra Bundles over Configuration Spaces}
\author{Hai Ch\^au Nguy\^en \\ Universit\'e Claude Bernard Lyon 1, ICJ, UMR5208}
\date{\today}

\usepackage[size=tiny]{todonotes}%leo

\begin{document}

\maketitle

\begin{abstract}
We study equivariant vector bundles over configuration spaces with diagonals included, viewed as orbifold quotients $M^n/\mathfrak{S}_n$ by symmetric groups. Working in the equivalent language of equivariant vector bundles, we construct an induced-equivariance functor and prove its adjunction with restriction. We then define Hadamard and Cauchy tensor products and show that they form a symmetric $2$-monoidal structure. We construct the corresponding tensor and symmetric algebra bundles and prove that, for a local vector bundle $V \rightarrow M$, the bundle $\mathbf{S}^{\boxtimes}(\mathbf{S}^{\otimes}(V))$ is the free commutative $2$-algebra generated by $V$. Finally, we show that any skew-symmetric bundle map $k : V \boxtimes V \rightarrow \mathbf{I}_{\otimes}$ induces a compatible Poisson bracket on this $2$-algebra bundle.
\end{abstract}

\tableofcontents

\newpage

\setcounter{section}{-1}

\section{Introduction}

In this paper, we study an algebraic description of observables in relativistic classical field theory. Multilocal observables in classical covariant field theory are most naturally described by integral kernels depending on several spacetime points. The analytic difficulty is that these kernels are typically distributional and become singular precisely when points collide, i.e. along partial diagonals. These singularities lead to the problem of multiplication of distributions, which is controlled by wavefront-sets \cite{hormander}. They play a central role in causal and microlocal approaches to covariant field theory \cite{brunetti+fredenhagen}, and more generally in perturbative algebraic quantum field theory and in the locally covariant framework of \cite{brunetti+fredenhagen+verch} and \cite{rejzner}. 

A classical field on a spacetime $M$ is a section $\varphi \in \Gamma(E)$ of a vector bundle $E \rightarrow M$. An observable is a smooth functional on the space of fields $\Gamma(E)$. Such functionals admit descriptions as integral kernels: a local observable is one whose kernel is supported at a single spacetime point, for instance a section $f$ of the dual bundle $E^*$ induces a local observable ${\mathcal F}_f(\varphi) = \int_{M} \big \langle f(x), \varphi(x) \big \rangle \mathrm{d} \mu(x)$ that is linear in $\varphi$. More generally, polynomial local observables are described by kernels $f_n \in \Gamma \Big( \mathbf{S}^{\otimes n}(E^*) \Big)$ integrated against powers $\varphi(x)^{\otimes n}$ at a single point. Here $\mathbf{S}^{\otimes n}(E^*)$ denotes the $n$-th symmetric algebra (coinvariants of the $\mathfrak{S}_n$-action on $(E^*)^{\otimes n}$), which is naturally dual to the space $\mathbf{\Sigma}^{\otimes n}(E)$ of symmetric tensors (invariants in $E^{\otimes n}$). Multilocal observables arise naturally by multiplying local observables supported at different points: for instance
\[ \big( {\mathcal F}_f \cdot {\mathcal F}_g \big)(\varphi) = \int_{M \times M} \big \langle f(x), \varphi(x) \big \rangle \big \langle g(y), \varphi(y) \big \rangle \mathrm{d} \mu(x) \otimes \mathrm{d} \mu(y). \]
The integration measures $\mathrm{d} \mu$ appearing here implicitly require a choice of volume form or, more intrinsically, densities as sections of the density bundle. We do not address this issue in this paper.
In this way, the space of multilocal observables is naturally indexed by configurations of points in $M$, and carries two algebraic operations: the fibrewise (Hadamard) product, corresponding to local polynomial observables, and the Cauchy product, corresponding to combining observables supported on separate configurations. The singular behavior occurs precisely when the points in a configuration collide (for instance $y=x$) and controlling these singularities is one of the central problems of perturbative quantum field theory.

Recently, in \cite{fkr}, the authors introduced an algebraic formulation over purely geometric objects that encode multilocality by vector bundles over unordered configuration spaces. These carry two natural tensor products: a fiberwise Hadamard product, which describes polynomial local observables, and a Cauchy product encoding disjoint union, which describes multilocal observables, thus forming a $2$-monoidal category in the sense of \cite{aguiar+mahajan}. The presence of two compatible tensor products in the algebra of observables was first observed in \cite{borcherds} and made explicit as a $2$-monoidal structure in \cite{herscovich}, where it serves as the categorical setting for the renormalization picture and for the quantization of perturbative QFT via Laplace pairings developed in \cite{brouder+fauser+frabetti+oeckl}. The construction in \cite{fkr} realizes this structure geometrically over open configuration spaces. Using symmetric algebras for both products, the authors construct a Poisson $2$-algebra bundle whose distributional sections are designed to model multilocal classical observables. For a different but related local-to-global approach to organizing the observables of QFT, see the framework of factorization algebras of \cite{costello+gwilliam}.

While removing the diagonals allows one to remain in the well-known territory of smooth manifolds, it also excludes precisely the singular locus where important analytic phenomena occur. On the analytic side, the open configuration spaces lack the compactness properties enjoyed by $M^n$, which are needed for functional-analytic constructions on section spaces. On the physical side, renormalization by Epstein-Glaser consists in extending distributions from $M^n \setminus \Delta_n$ to $M^n$ \cite{epstein+glaser}, and the ambiguity of this extension encodes the renormalization issues. An algebraic framework that excludes the diagonals therefore cannot capture this ambiguity within its own structure.

The main idea of this paper is to restore the diagonals that are removed in the previous approach. The diagonals appear as singular points of the orbifolds $M^n/\mathfrak{S}_n$. While the generalization of manifolds to orbifolds seems natural - orbifold structures appear in various physical contexts, notably in string theory \cite{dixon+harvey+vafa+witten} - phenomena specific to orbifolds appear, such as the "rank drop" phenomenon for orbifold vector bundles. Indeed, over a singular point of the base orbifold, orbifold sections are constrained by the isotropy and only detect the invariant fibre directions. Concretely, over a point $(x,x) \in M^2$, equivariant sections of a bundle $V_2 \rightarrow M^2$ are constrained to the $\mathfrak{S}_2$-invariant part of the fibre $(V_2)_{(x,x)}$, so that the space of observable quantities over coinciding configurations is smaller than over the generic configuration $(x,y)$. Equivariant vector bundles over powers $M^n$ and constructions involving the quotients $M^n/\mathfrak{S}_n$ also appear independently in algebraic geometry \cite{tevelev+torres} and in conformal field theory on symmetric orbifolds \cite{lunin+mathur}. We postpone the analytic study of distributional sections and renormalization to later work. 

Since the quotient spaces considered here are global quotients by finite groups, we do not need the full general theory of orbifolds; we refer to \cite{adem+leida+ruan} and \cite{caramello} for general background. Instead, we use the equivalent language of equivariant vector bundles. More precisely, vector orbibundles over $M^n/\mathfrak{S}_n$ may be described as $\mathfrak{S}_n$-equivariant vector bundles over $M^n$. This equivariant viewpoint is the framework adopted throughout the paper. 

Our first result is an induced-equivariance construction for equivariant vector bundles, analogous to induced representations in finite group representation theory. This gives a functorial way to pass from $H$-equivariant bundles to $G$-equivariant bundles and yields a Frobenius-type adjunction with restriction. 

We then construct the equivariant analogue of the $2$-monoidal framework of \cite{fkr}. On the category $\mathcal{VB}_{\mathfrak{S}_{\bullet}}(M^{\bullet})$ of equivariant vector bundles over the family of spaces $M^{\bullet} = (M^n)_{n \in \mathbb{N}}$, we define two tensor products: the Hadamard tensor product, given degreewise, and the Cauchy tensor product, obtained from the external tensor product together with induced equivariance. We prove that these two monoidal structures are compatible and form a symmetric $2$-monoidal category. 

Next, we study the associated algebra objects. We construct tensor and symmetric algebra bundles for both monoidal products and show that, for a local vector bundle $V \rightarrow M$, the bundle $\mathbf{S}^{\boxtimes} \big( \mathbf{S}^{\otimes}(V) \big)$ carries a natural structure of commutative $2$-algebra. 

Finally, given a skew kernel $k: V \boxtimes V \rightarrow \mathbf{I}_{\otimes}$, we show that it induces a Poisson bracket on $\mathbf{S}^{\boxtimes} \big( \mathbf{S}^{\otimes}(V) \big)$ compatible with the two monoidal products. This provides the equivariant counterpart of the Poisson $2$-algebra structure previously obtained away from the diagonals. 

The Poisson $2$-algebra bundle $\mathbf{S}^{\boxtimes} \Big( \mathbf{S}^{\otimes}(V) \Big)$ constructed here is the classical algebraic structure underlying multilocal observables on multiconfiguration spaces. It provides the foundation on which a theory of distributional sections, and ultimately a geometric treatment of renormalization, can be developed.

The paper is organized as follows. In Section 1, we recall the basic theory of equivariant vector bundles and prove the induced-equivariance theorem. In Section 2, we introduce the Hadamard and Cauchy tensor products and establish the symmetric $2$-monoidal structure. In Section 3, we construct the corresponding algebra bundles, including the free commutative $2$-algebra generated by a local vector bundle. In Section 4, we define Poisson structures in this setting and prove the extension theorem for the bracket induced by a skew kernel, and illustrate the construction explicitly for trivial bundles over $M = \mathbb{R}$, recovering in particular the canonical Poisson bracket of classical mechanics.

\paragraph{Notations.} We conclude the introduction by fixing some notational conventions used throughout the paper. \begin{itemize}
\item[]If $V$ and $W$ are vector bundles over the same base manifold $X$, we distinguish between $\mathrm{Mor}(V,W)$, the vector space of vector bundle morphisms from $V$ to $W$ covering the identity, and $\mathrm{Hom}(V,W)$, the vector bundle over $X$ defined by $\mathrm{Hom}(V,W) = V^* \otimes W$. Although these objects are different in nature, they are related by the canonical identification $\Gamma \Big( \mathrm{Hom}(V,W) \Big) \cong \mathrm{Mor}(V,W)$. Because of this, the two notations may easily be confused, and in this paper, we will adhere strictly to this distinction. 
\item[] We denote the Hadamard tensor product by $\otimes$ and the Cauchy tensor product by $\boxtimes$; this should not be confused with the usual external tensor product, denoted by $\boxtimes^{\mathrm{ext}}$. 
\item[] If $M$ is a manifold, we write $M^0$ for a singleton whose unique element is denoted by $\emptyset$ and is called the vacuum state. 
\item[] Ordinary vector bundles $V \rightarrow M$, viewed in contrast to bundles $V_n \rightarrow M^n$, will be referred to as local vector bundles.
\item[] Given a permutation $\sigma \in \mathfrak{S}_n$, we use the abbreviated matrix notation $\sigma = \begin{bmatrix} \sigma(1) & \cdots & \sigma(n) \end{bmatrix}$.
\end{itemize}

\paragraph{Acknowledgments.} The author is deeply grateful to Alessandra Frabetti and Leonid Ryvkin for their guidance, support, and constant encouragement throughout this work. Special thanks are due to Anastasis Fotiadis and Sacha Amiel for many stimulating discussions and for constantly keeping the exchange alive. The author also thanks Sebastian Daza, Marco d'Agostino, Olga Kravchenko for helpful discussions and valuable comments.

For the purpose of Open Access, a CC-BY-NC-SA public copyright license has been applied by the author
to the present document and will be applied to all subsequent versions up to the Author Accepted Manuscript arising from this submission.

\input{2_Equivariant_VB}

\input{3_Two-Monoidal_Category}

\input{4_Algebras}

\input{5_Poisson}

\section{Conclusion}

The main outcome of this work is that the algebraic formalism developed previously away from the diagonals extends naturally to the setting where repeated points are allowed. More precisely, by working equivariantly on the spaces $M^n$, or equivalently on the quotient orbifolds $M^n/\mathfrak{S}_n$, we obtained a symmetric $2$-monoidal category of equivariant vector bundles carrying both the Hadamard and Cauchy tensor products. This gives a natural framework for defining tensor, symmetric, and Poisson algebra bundles on multiconfiguration spaces, that is, on unordered configuration spaces with diagonals retained by considering the space of orbits under the action of $\mathfrak{S}_n$. 

In this way, the passage from smooth configuration spaces to the singular quotient spaces $M^n/\mathfrak{S}_n$ does not destroy the algebraic structures underlying multilocal observables, but instead allow them to be formulated in a larger setting. The equivariant point of view is especially well suited to this extension, since it allows one to work on smooth manifolds while keeping track of the orbifold geometry of the quotient. A distinctive feature of this enlarged setting is the rank drop phenomenon: over configurations with coinciding points, the equivariance constraints reduce the effective dimension of the fibres, so that fewer independent sections - and hence fewer independent observables - can be detected. This phenomenon is a direct consequence of the orbifold structure and has no counterpart in the smooth setting of \cite{fkr}.

A natural next step is to study sections of the bundles constructed here. From the equivariant perspective, one is first led to consider the spaces of sections of the corresponding equivariant bundles over $M^n$; passing to the quotient should then amount to restricting to the $\mathfrak{S}_n$-invariant sections, which are precisely the orbifold sections over the multiconfiguration spaces. In this way, the algebraic structures obtained in the present article should induce corresponding algebraic structures at the level of sections, and ultimately at the level of distributional sections relevant for field-theoretic applications.

The analytic theory of sections and distributional sections on these multiconfiguration spaces remains to be developed. In particular, the intrinsic treatment of densities required for integration, the study of singularities along the diagonals and the corresponding renormalization procedures are left for future work. The present article provides the algebraic and categorical foundation for such an extension. 

\newpage

\printbibliography

\end{document}

%% file: 2_Equivariant_VB.tex
\section{Induced Equivariance for Equivariant Vector Bundles}

In this section, we recall the basic definitions and constructions concerning equivariant vector bundles. We then prove an induced-equivariance theorem, inspired by induction in representation theory, together with its adjunction with restriction.

\subsection{Equivariant Vector Bundles} \label{1.1}

We follow the standard definition going back to \cite{segal}.

\begin{Definition} 
Let $G$ be a finite group acting smoothly (on the left) on a smooth manifold $X$. A \textbf{$G$-equivariant vector bundle} over $X$ is a vector bundle $V \rightarrow X$ such that: \begin{enumerate} 
\item $G$ acts on the total space $V$,
\item The projection map $\pi: V \rightarrow X$ is $G$-equivariant,
\item Every group element $g$ induces a linear isomorphism fibrewise: $g: V_x \longrightarrow V_{g \cdot x}$.
\end{enumerate}
A morphism between $G$-equivariant vector bundles $V$ and $W$ is a $G$-equivariant bundle map $V \longrightarrow W$ covering the identity on $X$. The space of $G$-equivariant bundle maps between $V$ and $W$ will be denoted by $\mathrm{Mor}_{G}(V,W)$. We denote by $\mathcal{VB}_G(X)$ the category of $G$-equivariant vector bundles over $X$.
\end{Definition}

Notice that if $x \in X$ is fixed by the $G$-action, then the fibre $V_x$ carries a natural representation of the isotropy group $G_x = \left\{ g \in G, \; g \cdot x = x \right\}$.

An equivalent formulation of the equivariant structure, which is sometimes convenient, is the following.

\begin{Proposition} 
Let $G$ be a finite group, $X$ be a left $G$-space and $\pi: V \rightarrow X$ be a vector bundle. The following data are equivalent: \begin{enumerate}
    \item A $G$-equivariant vector bundle structure on $V$. 
    \item For each $g \in G$, a vector bundle isomorphism $\varphi_g: V \rightarrow g_X^* V$ over $X$, where $g_X : X \rightarrow X$ is the action of $g$ on the base $X$, such that $\varphi_{1} = \mathrm{Id}_V$ and 
    \[ \forall g,h \in G, \qquad \varphi_{gh} = h_X^*(\varphi_g) \circ \varphi_h. \]
\end{enumerate}
\end{Proposition}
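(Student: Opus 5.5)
The plan is to construct explicit, mutually inverse assignments between the two kinds of data, using the pullback convention $(g_X^*V)_x = V_{g\cdot x}$, so that $g_X^*V = X \times_{X} V$ with the fibre product taken along $g_X$ and $\pi$. The only real content is bookkeeping: checking that smoothness, linearity and the group law translate into the stated conditions.

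\emph{From (1) to (2).} Given a $G$-action on the total space satisfying the three axioms, define $\varphi_g : V \to g_X^*V$ by $\varphi_g(v) = (\pi(v), g\cdot v)$. This lands in $g_X^*V$ because $\pi(g\cdot v) = g\cdot \pi(v)$ by equivariance of $\pi$. It covers the identity of $X$. It is smooth, being built from the smooth maps $\pi$ and $v \mapsto g\cdot v$. It is fibrewise linear by axiom 3. Its inverse is $(x,w) \mapsto g^{-1}\cdot w$, so it is an isomorphism of vector bundles. Clearly $\varphi_1 = \mathrm{Id}_V$. For the cocycle identity, evaluate at $v \in V_x$. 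The map $\varphi_h$ sends $v$ to $h\cdot v \in V_{h\cdot x}$. Then $h_X^*(\varphi_g)$, whose component at $x$ is $(\varphi_g)_{h\cdot x} : V_{h\cdot x} \to V_{gh\cdot x}$, sends this to $g\cdot(h\cdot v) = (gh)\cdot v$, which is $\varphi_{gh}(v)$.

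\emph{From (2) to (1).} Conversely, given the family $(\varphi_g)$, define $g\cdot v := \mathrm{pr}_2(\varphi_g(v))$, where $\mathrm{pr}_2 : g_X^*V \to V$ is the canonical smooth map. Each map $v \mapsto g\cdot v$ is smooth, sends $V_x$ linearly and isomorphically onto $V_{g\cdot x}$ (this gives axiom 3), and satisfies $\pi(g\cdot v) = g\cdot \pi(v)$ (this gives axiom 2). The relations $1\cdot v = v$ and $(gh)\cdot v = g\cdot(h\cdot v)$ are exactly $\varphi_1 = \mathrm{Id}$ and the cocycle identity read fibrewise, by the same computation as above run backwards. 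So we obtain a smooth left action (axiom 1). The two constructions are visibly inverse to each other, since both are determined by the fibre maps $V_x \to V_{g\cdot x}$.

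The main obstacle I expect is not conceptual but notational: making sure the order of composition in $\varphi_{gh} = h_X^*(\varphi_g)\circ\varphi_h$ matches a \emph{left} action under the chosen pullback convention. I would settle this first by writing out the fibre at a point $x$ explicitly, as above, before any other verification. Smoothness of the inverse and of $\mathrm{pr}_2$ is standard and I would cite it without detail. If desired, one can also note that morphisms correspond as well: a bundle map $f$ is $G$-equivariant if and only if $g_X^*(f)\circ\varphi_g = \varphi'_g \circ f$ for all $g$. This upgrades the bijection of data to an equivalence of categories.
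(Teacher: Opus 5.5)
Your proposal is correct and follows the paper's proof essentially step for step. It uses the same map $\varphi_g(v) = (\pi(v), g\cdot v)$, the same fibrewise check of $\varphi_{gh} = h_X^*(\varphi_g)\circ\varphi_h$, and the same converse via $g_V = \mathrm{pr}_2\circ\varphi_g$; your explicit inverse $(x,w)\mapsto g^{-1}\cdot w$ and the remark on morphisms are harmless additions the paper does not include.
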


\begin{proof} 
Assume first that $V \rightarrow X$ is a $G$-equivariant vector bundle. For each $g \in G$, denote respectively by 
\[ g_V: V \rightarrow V \qquad \text{and} \qquad g_X: X \rightarrow X \] 
the actions of $g$ on the total space $V$ and on the base $X$. Since the projection $\pi$ is $G$-equivariant, the following diagram commutes 
\[ \xymatrix{
V \ar[rr]^{g_V} \ar[d]_{\pi} && V \ar[d]^{\pi} \\
X \ar[rr]_{g_X} && X
} \]
Hence, we have a well-defined bundle map $\varphi_g: V \rightarrow g_X^* V$ defined by 
\[ \forall v \in V, \qquad \varphi_g(v) = \Big( \pi(v), g_V(v) \Big). \]
Fibrewise, this map is $(\varphi_g)_x: V_x \longrightarrow (g_X^* V)_x = V_{g_X(x)}$ and it is linear because the action of $g$ on the total space is fibrewise linear. Since $g_V$ is invertible, $\varphi_g$ is a vector bundle isomorphism. For the identity element $1 \in G$, the map $1_V$ is the identity on $V$, hence $\varphi_1 = \mathrm{Id}_V$. Now let $g,h \in G$. For $v \in V_x$, one has 
\[ \varphi_h(v) = \Big( x, h_V(v) \Big) \in (h_X^* V)_x, \] 
and then
\[ h_X^*(\varphi_g) \big( \varphi_h(v) \big) = \Big( x, g_V \big( h_V(v) \big) \Big) = \Big( x, (gh)_V(v) \Big) = \varphi_{gh}(v). \]
Thus
\[ \varphi_{gh} = h_X^*(\varphi_g) \circ \varphi_h. \]

Conversely, suppose that for each $g \in G$, we are given a vector bundle isomorphism $\varphi_g: V \rightarrow g_X^* V$ satisfying the two stated compatibility conditions. Let $\mathrm{pr}_2: g_X^* V \rightarrow V$ be the second projection, and define 
\[ g_V: \mathrm{pr}_2 \circ \varphi_g: V \longrightarrow V. \] 
Since $\varphi_g$ is a bundle map over $\mathrm{Id}_X$, and since $\mathrm{pr}_2$ is a bundle map covering $g_X$, it follows that $g_V$ is a bundle map covering $g_X$. In particular, one has
\[ \pi \circ g_V = g_X \circ \pi. \] 
Moreover, fibrewise $g_V$ is linear because $\varphi_g$ is fibrewise linear. We now check the group action axioms. Since $\varphi_1 = \mathrm{Id}_V$, the corresponding map $1_V$ is the identity on $V$. Next, using
\[ \varphi_{gh} = h_X^*(\varphi_g) \circ \varphi_h \] 
one sees that both $(gh)_V$ and $g_V \circ h_V$ are obtained by composing 
\[ V \overset{\varphi_h} \longrightarrow h_X^* V \overset{h_X^*(\varphi_g)} \longrightarrow h_X^* g_X^* V \cong (gh)_{X}^* V \] 
with the natural projection onto $V$. Hence
\[ (gh)_V = g_V \circ h_V. \] 
Therefore the maps $g_V$ define an action of $G$ on the total space $V$, and this action makes $\pi: V \rightarrow X$ into a $G$-equivariant vector bundle. The two constructions are inverse to one another, and therefore the two kinds of data are equivalent.
\end{proof}

It is sometimes useful to reinterpret equivariant vector bundles in the language of groupoids. A groupoid ${\mathcal G}$ is a small category in which all arrows are invertible. Recall that a $\mathbb{K}$-linear representation of a groupoid ${\mathcal G}$ is a functor 
\[ {\mathcal F}: {\mathcal G} \longrightarrow \mathbf{Vect}_{\mathbb{K}} \] 
where $\mathbf{Vect}_{\mathbb{K}}$ is the category of $\mathbb{K}$-vector spaces. That is, \begin{itemize} 
\item[$\bullet$] For each object $x$ of ${\mathcal G}$, there is a $\mathbb{K}$-vector space $V_x$,
\item[$\bullet$] Each morphism $g: x \rightarrow y$ in ${\mathcal G}$ induces a linear isomorphism $g: V_x \rightarrow V_y$.
\end{itemize}
Therefore, any $G$-equivariant vector bundle over $X$ can be seen as a linear representation of the action groupoid ${\mathcal G} = G \curvearrowright X$. To include the smooth structure, one can reformulate this as follows: a smooth representation of ${\mathcal G} \rightrightarrows X$ consists of a smooth vector bundle $V \rightarrow X$ together with a smooth Lie groupoid homomorphism 
\[ {\mathcal G} \longrightarrow \mathcal{GL}(V) \]
where $\mathcal{GL}(V)$ is the linear groupoid of $V$ (see \cite{mackenzie}, Section 1.7). 

\begin{Example}[Operations between equivariant vector bundles] Let $V$ and $W$ be $G$-equivariant vector bundles over the same base $X$. The following standard constructions remain equivariant. \begin{enumerate}[(a)]
\item \textbf{The direct sum} $V \oplus W$ is a $G$-equivariant vector bundle with the diagonal action 
\[ \forall g \in G, \quad \forall v \in V, \quad \forall w \in W, \qquad g \cdot (v, w) = (g \cdot v, g \cdot w). \]
The direct sum is the biproduct of the category $\mathcal{VB}_G(X)$.

\item \textbf{The (internal) tensor product} $V \otimes W$ is a $G$-equivariant vector bundle with the diagonal action 
\[ \forall g \in G, \quad \forall v \in V, \quad \forall w \in W, \qquad g \cdot (v \otimes w) = (g \cdot v) \otimes (g \cdot w). \]
The tensor product gives to $\mathcal{VB}_G(X)$ the structure of a symmetric monoidal category whose unit is the trivial line bundle $X \times \mathbb{K}$ with $G$-action given by
\[ \forall g \in G, \quad \forall (x,\lambda) \in X \times \mathbb{K}, \qquad g \cdot (x, \lambda) = (g \cdot x, \lambda). \]

\item \textbf{The dual} $V^*$ is a $G$-equivariant vector bundle with the dual action 
\[ \forall g \in G, \quad \forall \mu \in V^*, \quad \forall v \in V, \qquad (g \cdot \mu)(v) = \mu \big( g^{-1} \cdot v \big). \]

\item \textbf{The internal $\mathrm{Hom}$-bundle} $\mathrm{Hom}(V,W)$ is defined by $\mathrm{Hom}(V,W) = V^* \otimes W$ and inherits a natural $G$-equivariant structure. More precisely, fibrewise, for $x \in X$, $g \in G$ and $T_{g^{-1} \cdot x}: V_{g^{-1} \cdot x} \rightarrow W_{g^{-1} \cdot x}$, one sets 
\[ (g \cdot T)_x = g \circ T_{g^{-1} \cdot x} \circ g^{-1}: V_x \longrightarrow  W_x. \]
Note, however, that the space of sections of $\mathrm{Hom}(V,W)$ identifies with $\mathrm{Mor}(V,W)$, the space of all bundle morphisms from $V$ to $W$. In order to recover the space $\mathrm{Mor}_G(V,W)$ of $G$-equivariant bundle morphisms, one must instead consider the $G$-invariant sections: 
\[ \mathrm{Mor}_G(V,W) = \Gamma \big( \mathrm{Hom}(V,W) \big)^G. \]
In particular, although $\Gamma \big( \mathrm{Hom}(V,W) \big)$ does not directly recover $\mathrm{Mor}_G(V,W)$, the bundle $\mathrm{Hom}(V,W)$ is still the internal Hom object in the category $\mathcal{VB}_G(X)$.

\item Let $V \rightarrow X$ be a $G$-equivariant vector bundle and $W \rightarrow X$ a $G$-equivariant subbundle, i.e. the inclusion $W \hookrightarrow V$ is $G$-equivariant. \textbf{The quotient bundle} $V/W \rightarrow X$ inherits a natural $G$-equivariant structure given by 
\[ \forall g \in G, \quad \forall [v] \in V/W, \qquad g \cdot [v] = \big[ g \cdot v \big]. \]

\item Let $V \rightarrow X$ be a $G$-equivariant vector bundle and $W \rightarrow Y$ be a $H$-equivariant vector bundle. Recall that \textbf{the external tensor product} $V \boxtimes^{\mathrm{ext}} W$ is the vector bundle over $X \times Y$ defined by $\pi_1^* V \otimes \pi_2^* W$ where $\pi_1: X \times Y \rightarrow X$ (resp. $\pi_2: X \times Y \rightarrow Y$) is the projection onto the first factor (resp. onto the second factor). Fibrewise, one has 
\[ \big( V \boxtimes^{\mathrm{ext}} W \big)_{(x,y)} = V_x \otimes W_y. \]
The external tensor product $V \boxtimes^{\mathrm{ext}} W$ is naturally a $(G \times H)$-equivariant vector bundle with action given by 
\[ \forall (g,h) \in G \times H, \quad \forall v \in V, \quad \forall w \in W, \qquad (g,h) \cdot (v \otimes w) = (g \cdot v) \otimes (h \cdot w). \]

\item Let $V \rightarrow X$ be a $G$-equivariant vector bundle and $f: Y \rightarrow X$ be a $G$-equivariant map. \textbf{The pullback} $f^*(V)$ is $G$-equivariant with action 
\[ \forall g \in G, \quad \forall (y,v) \in f^*(V), \qquad g \cdot (y,v) = \big( g \cdot y, g \cdot v \big) \] 
which is well-defined on $f^*(V)$ because $f(g \cdot y) = g \cdot f(y) = g \cdot \pi(v) = \pi(g \cdot v)$.
\end{enumerate}
\end{Example}

\begin{Example}[Trivial bundles and twisted trivial bundles] Let us briefly discuss trivial and twisted trivial equivariant vector bundles. Consider a trivial vector bundle $X \times V$, where $V$ is a finite-dimensional vector space. Any $G$-equivariant structure on $X \times V$ has the form 
\[ g \cdot (x,v) = \big( g \cdot x, \rho_g(x)(v) \big) \] 
where, for each $g \in G$, the map $\rho_g: X \longrightarrow GL(V)$ is smooth and satisfies the cocycle relation 
\[ \rho_{gh}(x) = \rho_g(h \cdot x) \rho_h(x). \]
When the fibre action is independent of $x$, that is, when each $\rho_g: X \rightarrow GL(V)$ is a constant map, the family $(\rho_g)_{g \in G}$ defines a linear representation $\rho: G \rightarrow GL(V)$. In that case, we denote by $X \times_{\rho} V$ the corresponding equivariant bundle and call it a \textbf{twisted trivial bundle}. The \textbf{trivial equivariant bundle} corresponds to the case where $\rho$ is the trivial representation; it will simply be denoted by $X \times V$. For instance, the monoidal unit of $\big( \mathcal{VB}_G(X), \otimes \big)$ is the trivial equivariant line bundle $X \times \mathbb{K}$.
\end{Example}

\subsection{Induced Equivariance and Restricted Equivariance} \label{1.3}

Throughout this subsection, $G$ is a finite group and $H$ is a subgroup of $G$ (not necessarily normal in $G$). When $X = \left\{ * \right\}$, an $H$-equivariant vector bundle over $X$ is simply a linear representation of $H$. Then it is a standard construction of representation theory to produce a linear representation of $G$ called the induced representation $\mathrm{Ind}_H^G(V)$ (see \cite[Chapter 3]{fulton+harris}). We mimic this construction in the smooth case for equivariant vector bundles.

\begin{Definition}[Induced equivariance]
Let $V \rightarrow X$ be an $H$-equivariant vector bundle and suppose that $X$ is a $G$-space. Denote by ${\mathcal R}$ a set of representatives of the right cosets $H \setminus G$. The \textbf{induced equivariant vector bundle} is defined by 
\[ \mathrm{Ind}_H^G(V) = \bigoplus \limits_{\omega \in {\mathcal R}} \omega^* V. \]
\end{Definition}

\begin{Theorem}[Induced equivariance] \label{induced}
Using the previous notations, $\mathrm{Ind}_H^G(V)$ is a $G$-equivariant vector bundle over $X$. Moreover, this construction is independent of the choice of ${\mathcal R}$ up to canonical isomorphism, and therefore defines a functor
\[ \mathrm{Ind}_H^G: \mathcal{VB}_H(X) \longrightarrow \mathcal{VB}_G(X). \]
\end{Theorem}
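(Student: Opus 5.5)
Here $\omega^*V$ means the pullback of $V$ along the diffeomorphism $\omega_X : X\to X$, so that $(\omega^*V)_x = V_{\omega\cdot x}$ and
\[ \mathrm{Ind}_H^G(V)_x = \bigoplus_{\omega\in\mathcal R} V_{\omega\cdot x}. \]
The plan is to define the $G$-action directly by the usual induced-representation recipe, using the right-coset decomposition. Fix $g\in G$ and $\omega\in\mathcal R$. Since $\mathcal R$ is a set of representatives of $H\backslash G$, there are a unique $\omega'\in\mathcal R$ and a unique $h\in H$ with
\[ \omega\, g^{-1} = h\,\omega', \qquad\text{equivalently}\qquad \omega' g = h^{-1}\omega . \]
The assignment $\omega\mapsto\omega'$ is a permutation of $\mathcal R$; it is the right action of $g^{-1}$ on $H\backslash G$.

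For $v\in V_{\omega\cdot x}$, viewed in the $\omega$-summand of the fibre at $x$, I would set
\[ g\cdot v = h^{-1}\cdot v \in V_{h^{-1}\omega\cdot x} = V_{\omega' g\cdot x}, \]
and place the result in the $\omega'$-summand of the fibre at $g\cdot x$. The $H$-action on $V$ makes sense here because $h^{-1}\in H$. This map sends the fibre at $x$ linearly and isomorphically onto the fibre at $g\cdot x$, since it is a permutation of summands composed with linear isomorphisms. It is smooth, because on each summand it is the pullback along $\omega_X$ of the smooth bundle map $h^{-1}_V$, composed with canonical identifications of pullbacks along diffeomorphisms.

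The main computation is the composition law $(g_1g_2)\cdot v = g_1\cdot(g_2\cdot v)$. First apply $g_2$: write $\omega g_2^{-1} = h_2\omega'$. Then apply $g_1$ to the $\omega'$-summand: write $\omega' g_1^{-1} = h_1\omega''$. Combining,
\[ \omega(g_1g_2)^{-1} = h_2\omega' g_1^{-1} = h_2h_1\,\omega'' . \]
So the composite uses the same target representative $\omega''$ and the group element $(h_2h_1)^{-1} = h_1^{-1}h_2^{-1}$. This is exactly what two successive applications give, by associativity of the $H$-action on $V$. For $g=1$ we get $\omega'=\omega$ and $h=1$, so the identity acts trivially. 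This index bookkeeping is the step I expect to be most error-prone, so I would record it as a short lemma about the right action of $G$ on $H\backslash G$ and its cocycle $h(\omega,g)$.

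For independence of $\mathcal R$, take another set of representatives $\tilde\omega = k_\omega\,\omega$ with $k_\omega\in H$. Then
\[ \tilde\omega^* V = (k_\omega\omega)^*V \cong \omega^*V, \]
the isomorphism being induced fibrewise by $k_\omega^{-1} : V_{k_\omega\omega\cdot x}\to V_{\omega\cdot x}$. The direct sum of these isomorphisms is a bundle isomorphism. Rewriting the defining relation $\tilde\omega g^{-1} = \tilde h\,\tilde\omega'$ as $\omega g^{-1} = k_\omega^{-1}\tilde h\,k_{\omega'}\,\omega'$ shows the isomorphism is $G$-equivariant. Canonicity means these isomorphisms compose correctly for three choices of representatives, which follows from the same formula.

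Functoriality is straightforward. An $H$-equivariant morphism $f : V\to W$ induces $\bigoplus_\omega \omega^*f$. This commutes with the $G$-action because $f$ commutes with each $h^{-1}$. It preserves identities and composition summandwise, and it is compatible with the change-of-representatives isomorphisms because $f$ is $H$-equivariant.
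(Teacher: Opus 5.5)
Your proposal is correct and follows essentially the same route as the paper. You define the action through the factorisation $\omega g^{-1} = h\omega'$, sending $v \mapsto h^{-1}\cdot v$ into the $\omega'$-summand, and you check the composition law via $\omega(g_1g_2)^{-1} = h_2h_1\omega''$. Independence of $\mathcal R$ comes from $(k\omega)^*V \cong \omega^*V$, induced by the $H$-action.

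You also go slightly further than the paper in two places. You verify that the change-of-representatives isomorphism is $G$-equivariant (through $h = k_\omega^{-1}\tilde h\, k_{\omega'}$), and you spell out functoriality on morphisms. The paper leaves both of these implicit.
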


\begin{proof} 
Since ${\mathcal R}$ is a set of representatives of right cosets $H \setminus G$, the map $\omega \in {\mathcal R} \longmapsto H \omega \in H \setminus G$ is a bijection. The group $G$ acts (on the left) on $H \setminus G$ by 
\[ \gamma \cdot (Hg) = H g \gamma^{-1}, \] 
and transporting this action through the above bijection yields an action $G \curvearrowright {\mathcal R}$. Equivalently, for any $\omega \in {\mathcal R}$ and $g \in G$, there exist unique elements $h \in H$ and $\omega' \in {\mathcal R}$ such that 
\[ \omega g^{-1} = h \omega'. \]

We now define a $G$-action on $\mathrm{Ind}_{H}^G(V)$, which we denote by $W$ for simplicity. An element of $W$ has the form $\big( x, (v_{\omega})_{\omega \in {\mathcal R}} \big)$, where $v_{\omega}$ is an element of $(\omega^* V)_x = V_{\omega(x)}$. For $g \in G$, we set
\[ g \cdot \big( x, (v_{\omega})_{\omega \in {\mathcal R}} \big) = \big( gx, \;  (v'_{\omega'})_{\omega' \in {\mathcal R}} \big), \qquad \text{where} \qquad v'_{\omega'} = h^{-1} \cdot v_{\omega}, \]
and where $h \in H$ and $\omega' \in {\mathcal R}$ are uniquely determined by the relation $\omega g^{-1} = h \omega'$. Note that this action is smooth because $G$ is finite (hence discrete) and it only involves linear transformations by sending an element from a summand to another one.

We now verify that this defines a $G$-equivariant vector bundle structure. Let $g_1, g_2 \in G$. Write  
\[ g_2 \cdot \big( x, (v_{\omega})_{\omega \in {\mathcal R}} \big) = \Big( g_2 \cdot x, (v_{\omega_2}^{(2)})_{\omega_2 \in {\mathcal R}} \Big), \qquad \text{where} \qquad v_{\omega_2}^{(2)} = h_2^{-1} \cdot v_{\omega} \quad \text{and} \quad \omega g_2^{-1} = h_2 \omega_2, \]
and then 
\[ g_1 \cdot \Big( g_2 \cdot x, (v_{\omega_2}^{(2)})_{\omega_2 \in {\mathcal R}} \Big) = \Big( g_1 \cdot (g_2 \cdot x), (v_{\omega_1}^{(1)})_{\omega_1 \in {\mathcal R}} \Big), \qquad \text{where} \qquad v_{\omega_1}^{(1)} = h_1^{-1} \cdot v_{\omega_2}^{(2)} \quad \text{and} \quad \omega_2 g_1^{-1} = h_1 \omega_1. \]
We have 
\[ v_{\omega_1}^{(1)} = h_1^{-1} \cdot v_{\omega_2}^{(2)} = h_1^{-1} \cdot h_2^{-1} \cdot v_{\omega} = (h_2 h_1)^{-1} \cdot v_{\omega} \]
and 
\[ \omega (g_1 g_2)^{-1} = \omega g_2^{-1} g_1^{-1} = h_2 \omega_2 g_1^{-1} = h_2 h_1 \omega_1. \]
Therefore, since the decomposition $G \longrightarrow H \times {\mathcal R}$ is unique, it follows that
\[ (g_1 g_2) \cdot \Big( x, (v_{\omega})_{\omega \in {\mathcal R}} \Big) = \Big( (g_1 g_2) \cdot x, (v'_{\omega'})_{\omega' \in {\mathcal R}} \Big)\qquad \text{where} \qquad v'_{\omega'} = (h_2 h_1)^{-1} \cdot v_{\omega} = v_{\omega_1}^{(1)}. \]
Thus the action is associative. The identity element acts trivially, since for every $\omega \in {\mathcal R}$ one has 
\[ \omega = \mathrm{Id} \cdot \omega \] 
which is already the required decomposition. Finally, the projection $W \rightarrow X$ is $G$-equivariant by construction, and the action is fibrewise linear because the $H$-action on each summand $\omega^* V$ is fibrewise linear. Hence $W$ is a $G$-equivariant vector bundle.

It remains to show that the construction is independent, up to canonical isomorphism, of the choice of representatives. Let ${\mathcal R}$ and ${\mathcal R}'$ be two systems of representatives of $H \setminus G$. For every $\omega' \in {\mathcal R}'$ there exist unique $h \in H$ and $\omega \in {\mathcal R}$ such that $\omega' = h \omega$. The $H$-equivariant structure of $V$ then induces canonical bundle isomorphisms
\[ (\omega')^{*} V \cong (h \omega)^* V \cong \omega^* V. \]
By assembling these isomorphisms over all representatives, one obtains a canonical isomorphism between the corresponding direct sums. Therefore the bundle $\mathrm{Ind}_H^G(V)$ is well defined up to canonical isomorphism, and the functor $\mathrm{Ind}_H^G: \mathcal{VB}_H(X) \rightarrow \mathcal{VB}_G(X)$ is well defined.
\end{proof}
 
In linear representation theory, the induced representation functor is left-adjoint to the restriction functor. This result, also known as Frobenius reciprocity, admits the following bundle-theoretic analogue.

\begin{Definition}[Restricted equivariance]
Let $W \rightarrow X$ be a $G$-equivariant vector bundle. The \textbf{restricted $H$-equivariant bundle} is the same underlying vector bundle $W \rightarrow X$, endowed with the action obtained by restricting the $G$-action to $H$. We denote it by $\mathrm{Res}_H^G(W)$. This defines a functor 
\[ \mathrm{Res}_H^G: \mathcal{VB}_G(X) \longrightarrow \mathcal{VB}_H(X). \]
\end{Definition}

The following statement means concretely that a $G$-equivariant map out of $\mathrm{Ind}_H^G(V)$ is equivalent to an $H$-equivariant map out of $V$.

\begin{Corollary}[Frobenius reciprocity]
The functor $\mathrm{Ind}_H^G$ is left adjoint to $\mathrm{Res}_H^G$. In other words, there is a bijection
\[ \mathrm{Mor}_G \Big( \mathrm{Ind}_H^G(V), W \Big) \cong \mathrm{Mor}_H \Big( V, \mathrm{Res}_H^G(W) \Big). \]
\end{Corollary}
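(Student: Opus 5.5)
The plan is to make the bijection explicit, mimicking the classical proof of Frobenius reciprocity. By Theorem~\ref{induced}, $\mathrm{Ind}_H^G(V)$ does not depend on the choice of representatives up to canonical isomorphism. We may therefore fix a system ${\mathcal R}$ of representatives of $H \setminus G$ with $1 \in {\mathcal R}$ (representing the coset $H$). The summand $1^* V = V$ then gives a canonical inclusion
\[ \iota: V \longrightarrow \mathrm{Ind}_H^G(V), \qquad v \in V_x \longmapsto \big( x, (v_\omega)_{\omega} \big), \quad v_1 = v, \quad v_\omega = 0 \text{ for } \omega \neq 1. \]

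\textbf{Step 1: $\iota$ is $H$-equivariant.} I would check this first from the formula defining the action in the proof of Theorem~\ref{induced}. For $h_0 \in H$ we have $1 \cdot h_0^{-1} = h_0^{-1} \cdot 1$, so in that formula $\omega' = 1$ and $h = h_0^{-1}$. Hence $h_0$ sends the $1$-slot to the $1$-slot, with value $h_0 \cdot v$. This means $\iota$ is a morphism $V \rightarrow \mathrm{Res}_H^G \mathrm{Ind}_H^G(V)$ in $\mathcal{VB}_H(X)$. This gives the map
\[ \Phi: \mathrm{Mor}_G\big(\mathrm{Ind}_H^G(V), W\big) \longrightarrow \mathrm{Mor}_H\big(V, \mathrm{Res}_H^G(W)\big), \qquad F \longmapsto F \circ \iota. \]

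\textbf{Step 2: the inverse map.} Given an $H$-equivariant $\varphi: V \rightarrow W$, define $\Psi(\varphi) = F$ summand by summand. For $v_\omega \in (\omega^*V)_x = V_{\omega x}$, set
\[ F(v_\omega) = \omega^{-1} \cdot \varphi(v_\omega) \in W_x, \]
and extend additively over the direct sum. This is a smooth bundle map covering $\mathrm{Id}_X$, since each $\omega_W^{-1}$ is a smooth bundle map.

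$G$-equivariance is a short computation. Let $g \in G$ and write $\omega g^{-1} = h\omega'$. Then $g$ sends $v_\omega$ to $h^{-1} v_\omega$ in the slot $\omega'$, and
\[ F\big(g\cdot v_\omega\big) = \omega'^{-1} \varphi(h^{-1} v_\omega) = (h\omega')^{-1} \varphi(v_\omega) = (\omega g^{-1})^{-1}\varphi(v_\omega) = g \cdot F(v_\omega). \]
The middle equality uses the $H$-equivariance of $\varphi$.

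\textbf{Step 3: $\Phi$ and $\Psi$ are mutually inverse.} Clearly $\Psi(\varphi)\circ\iota = \varphi$, since $\omega = 1$ on the $1$-slot. For the other composite, the key observation is that every summand is a $G$-translate of the $1$-summand. Take $v$ in the $1$-slot over the point $\omega x$ and apply $\omega^{-1}$. The decomposition is $1\cdot \omega = 1 \cdot \omega$ with $\omega \in {\mathcal R}$, so $h = 1$ and $\omega' = \omega$. Hence $\omega^{-1}\cdot\iota(v)$ is exactly $v$ placed in the $\omega$-slot over $x$. Consequently, any $G$-equivariant $F$ satisfies
\[ F(v_\omega) = \omega^{-1} \cdot F(\iota(v_\omega)), \]
so $F = \Psi(F \circ \iota)$. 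Naturality of the bijection in $V$ and $W$ is immediate from the formulas, since both $\Phi$ and $\Psi$ are built by pre- and post-composition. This yields the adjunction $\mathrm{Ind}_H^G \dashv \mathrm{Res}_H^G$.

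The only real obstacle is bookkeeping with the right-coset convention $\omega g^{-1} = h\omega'$ and the base points. One must make sure that $\omega^{-1}$ really maps $W_{\omega x}$ to $W_x$, and that the slot reached by applying $\omega^{-1}$ to the $1$-slot is $\omega$ rather than $\omega^{-1}$. I would verify these identities on the fibre level first, before assembling the maps globally.
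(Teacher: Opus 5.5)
Your proposal is correct and follows the paper's proof: you restrict along the inclusion of the identity-representative summand in one direction. In the other direction you assemble the maps $v_\omega \mapsto \omega^{-1}\cdot\varphi(v_\omega)$ on each summand $\omega^*V$, and you check $G$-equivariance through the decomposition $\omega g^{-1} = h\omega'$. Your version is somewhat more careful than the paper's, since you verify the $H$-equivariance of the inclusion and prove that the two maps are mutually inverse (the $\omega$-summand is the $\omega^{-1}$-translate of the identity summand), both of which the paper only asserts.
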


\begin{proof} 
Define a map $\Phi: \mathrm{Mor}_G \Big( \mathrm{Ind}_H^G(V), W \Big) \longrightarrow \mathrm{Mor}_H \Big( V, \mathrm{Res}_H^G(W) \Big)$ by
\[ \forall f \in \mathrm{Mor}_G \Big( \mathrm{Ind}_H^G(V), W \Big), \qquad \Phi(f) = f \circ j \in \mathrm{Mor}_H \Big( V, \mathrm{Res}_H^G(W) \Big) \]
where $j: V \hooklongrightarrow \mathrm{Ind}_H^G(V)$ is the inclusion of the summand corresponding to the identity representative. Since $j$ is $H$-equivariant and every $G$-equivariant map is in particular $H$-equivariant, the composite $\Phi(f) = f \circ j$ is $H$-equivariant. 

Conversely, let $\psi: V \longrightarrow W$ be an $H$-equivariant bundle map. For each $\omega \in {\mathcal R}$, define a bundle map $\widetilde{\psi}_{\omega}: \omega^* V \longrightarrow W$ fibrewise by
\[ \big( \widetilde{\psi}_{\omega} \big)_x: v \in (\omega^* V)_x = V_{\omega \cdot x} \longmapsto \omega^{-1} \cdot \psi_{\omega \cdot x}(v) \in W_x. \]
These maps assemble into a bundle map $\widetilde{\psi} = \bigoplus \limits_{\omega \in {\mathcal R}} \widetilde{\psi}_{\omega}: \mathrm{Ind}_H^G(V) \rightarrow W$. Since ${\mathcal R}$ is finite, this direct sum is finite, so $\widetilde{\psi}$ is well defined, smooth, and fibrewise linear. It remains to check that $\widetilde{\psi}$ is $G$-equivariant. Recall that 
\[ g \cdot \Big( x, (v_{\omega})_{\omega \in {\mathcal R}} \Big) = \Big( g \cdot x, (v'_{\omega'})_{\omega' \in {\mathcal R}} \Big) \qquad \text{where} \qquad v'_{\omega'} = h^{-1} \cdot v_{\omega} \quad \text{and} \quad \omega g^{-1} = h \omega'. \]
We have 
\[ \begin{array}{lll} \widetilde{\psi}_{gx} \Big( g \cdot \big( x, (v_{\omega})_{\omega \in {\mathcal R}} \big) \Big) &=& \sum \limits_{\omega' \in {\mathcal R}} (\omega')^{-1} \cdot \psi_{\omega' g x}(v'_{\omega'}) \\
&=& \sum \limits_{\omega' \in {\mathcal R}} (\omega')^{-1} \cdot \psi_{\omega' g x}(h^{-1} \cdot v_{\omega}) \\
&=& \sum \limits_{\omega' \in {\mathcal R}} (\omega')^{-1} \cdot h^{-1} \cdot \psi_{\omega g x}(v_{\omega}) \\
&=& \sum \limits_{\omega' \in {\mathcal R}} (h \omega')^{-1} \cdot \psi_{\omega g x}(v_{\omega}) \\
&=& \sum \limits_{\omega =  h \omega' g} (g \omega^{-1}) \cdot \psi_{\omega x}(v_{\omega}) \\ 
&=& g \cdot \widetilde{\psi}_x \big( (v_{\omega})_{\omega \in {\mathcal R}} \big) \end{array} \]
where we have used the $H$-equivariance of $\psi$. Hence, $\widetilde{\psi}$ is $G$-equivariant. This defines a map 
\[ \Psi: \psi \in \mathrm{Mor}_H \Big( V, \mathrm{Res}_H^G(W) \Big) \longmapsto \widetilde{\psi} \in \mathrm{Mor}_G \Big( \mathrm{Ind}_H^G(V), W \Big). \]
It is immediate from the constructions that the maps $\Phi$ and $\Psi$ are inverses of each other.
\end{proof}

The previous construction admits a geometric description analogous to the classical associated-bundle model for induced representations in linear representation theory.

\begin{Proposition} 
There is a canonical isomorphism of $G$-equivariant vector bundles 
\[ \mathrm{Ind}_H^G(V) \cong (G \times V)/H \]
where the right action of $H$ on $G \times V$ is given by 
\[ (g,v) \cdot h = \big( gh, h^{-1} \cdot v \big). \]
\end{Proposition}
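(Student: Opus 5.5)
The plan is to write down an explicit fibrewise map using the system of representatives $\mathcal R$ from the definition of $\mathrm{Ind}_H^G(V)$, and then check that it is well defined, bijective and $G$-equivariant.

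\textbf{Bundle structure on the quotient.} First I would make the right-hand side precise as a $G$-equivariant vector bundle over $X$. The projection is $(G\times V)/H\to X$, $[g,v]\mapsto g\cdot\pi(v)$. It is well defined because $(gh)\cdot\pi(h^{-1}\cdot v)=g\cdot\pi(v)$. The $G$-action is $\gamma\cdot[g,v]=[\gamma g,v]$, which commutes with the right $H$-action.

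Since $G$ is discrete, $G\times V$ is a disjoint union of copies of $V$. Quotienting by the free right $H$-action leaves one copy of $V$ for each left coset $gH\in G/H$. The copy labelled $gH$ has fibre $V_{g^{-1}x}$ over $x$, and this is independent of the representative of $gH$ up to the isomorphism induced by $H$.

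The honest statement is therefore that the fibre over $x$ is the direct sum of these $|G/H|$ pieces. Here $[g,v]+[g,w]:=[g,v+w]$ within a coset, and classes from different cosets are added formally. I expect this to be the main subtlety: the literal quotient is a disjoint union of vector bundles, and one must adopt the convention that its linear structure is the fibrewise direct sum. With that convention, local trivialisations come from those of $V$ pulled back along $x\mapsto g^{-1}x$, so one gets a smooth vector bundle on which $G$ acts by fibrewise linear maps.

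\textbf{The isomorphism.} The map $\omega\mapsto \omega^{-1}H$ is a bijection from $\mathcal R$ (right coset representatives of $H\setminus G$) onto $G/H$. So $\{\omega^{-1}\}_{\omega\in\mathcal R}$ is a system of representatives of the left cosets. I would define
\[ \Theta_x\big((v_\omega)_{\omega\in\mathcal R}\big)=\sum_{\omega\in\mathcal R}\,[\omega^{-1},v_\omega],\qquad v_\omega\in (\omega^*V)_x=V_{\omega x}. \]
Each $[\omega^{-1},v_\omega]$ lies over $\omega^{-1}\omega x=x$. The summand $\omega^*V$ is mapped linearly and isomorphically onto the piece of the fibre labelled by the coset $\omega^{-1}H$, and distinct $\omega$ give distinct cosets. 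Hence $\Theta_x$ is a linear isomorphism. It is smooth because on each summand it is just the identification of $\omega^*V$ with the corresponding copy of $V$.

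\textbf{Equivariance.} For $\gamma\in G$, write $\omega\gamma^{-1}=h\omega'$ as in the proof of Theorem~\ref{induced}. Inverting gives $\gamma\omega^{-1}=\omega'^{-1}h^{-1}$, so
\[ \gamma\cdot[\omega^{-1},v_\omega]=[\omega'^{-1}h^{-1},v_\omega]=[\omega'^{-1},h^{-1}\cdot v_\omega]. \]
This is exactly the image under $\Theta$ of the vector whose $\omega'$-component is $h^{-1}\cdot v_\omega$, which is the action defined on $\mathrm{Ind}_H^G(V)$. Hence $\Theta\circ\gamma=\gamma\circ\Theta$.

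\textbf{Canonicity.} Finally, I would check that $\Theta$ is canonical. If $\mathcal R$ is replaced by $\mathcal R'$ with $\omega'=h\omega$, then $[\omega'^{-1},v]=[\omega^{-1},h^{-1}\cdot v]$. So $\Theta$ intertwines the isomorphism $(\omega')^*V\cong\omega^*V$ used in Theorem~\ref{induced}, and the isomorphism does not depend on the choice of representatives.
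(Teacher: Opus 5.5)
Your proposal is correct and follows essentially the same route as the paper: you decompose the fibre of $(G\times V)/H$ over $x$ using left-coset representatives and match it with $\bigoplus_{\omega\in\mathcal R}V_{\omega x}$. You take $\omega^{-1}$ for $\omega\in\mathcal R$, whereas the paper uses a separate system $\mathcal R'$ of representatives of $G/H$ and then inverts.

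You also go further than the paper in three ways. You check $G$-equivariance and independence of $\mathcal R$ explicitly, where the paper only asserts them. You observe that the literal quotient has fibre the disjoint union $\bigsqcup_{gH\in G/H}V_{g^{-1}x}$, so the direct-sum linear structure has to be imposed; the paper's proof passes over this when it writes $E_x\cong\bigoplus_{\omega'\in\mathcal R'}V_{(\omega')^{-1}x}$.
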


\begin{proof} 
The group $G$ acts on $(G \times V)/H$ by 
\[ \gamma \cdot \big[ (g,v) \big] = \big[ (\gamma g, v) \big]. \] 
This is well defined, since 
\[ \gamma \cdot \big( gh, h^{-1} \cdot v \big) = \big( \gamma g h, h^{-1} \cdot v \big) = (\gamma g, v) \cdot h. \]
The projection 
\[ \widetilde{\pi}: (G \times V)/H \longrightarrow X, \qquad \widetilde{\pi} \big( [g,v] \big) = g \cdot \pi(v), \]
where $\pi: V \rightarrow X$, is $G$-equivariant. Thus $E := (G \times V)/H \rightarrow X$ is a $G$-equivariant vector bundle. For a given $x \in X$, the fibre over $x$ is 
\[ E_x = \left\{ \big[(g,v) \big], \; g \cdot \pi(v) = x \right\} = \left\{ \big[ (g,v) \big], \; v \in V_{g^{-1} \cdot x} \right\}. \]
Now let $\big[ (g,v) \big] \in E_x$. Let ${\mathcal R}'$ be a set of representatives of the left cosets $G/H$; note that inversion induces a bijection ${\mathcal R}' \rightarrow {\mathcal R}$. We may write 
\[ g = \omega' h \qquad \text{with} \qquad \omega' \in {\mathcal R}' \quad \text{and} \quad h \in H. \]
Since $V$ is $H$-equivariant, the map 
\[ h: V_{h^{-1} \cdot (\omega')^{-1} \cdot x} \longrightarrow V_{(\omega')^{-1} \cdot x} \]
is an isomorphism. Define $\widetilde{v} = h \cdot v \in V_{(\omega')^{-1} \cdot x}$. Then $v = h^{-1} \cdot \widetilde{v}$, and therefore 
\[ \big[ (g,v) \big] = \big[ (\omega' h, h^{-1} \cdot \widetilde{v}) \big] = \big[ (\omega', \widetilde{v}) \big]. \]
Hence every class in $E_x$ admits a representative of the form $\big[ (\omega', \widetilde{v}) \big]$ with $\omega' \in {\mathcal R}'$ and $\widetilde{v} \in V_{(\omega')^{-1} \cdot x}$. This representative is unique. Indeed, suppose $\big[ (\omega', \widetilde{v}) \big] = \big[ (\omega'', \widetilde{v}') \big]$. Then there exists $h \in H$ such that 
\[ (\omega'', \widetilde{v}') = (\omega', \widetilde{v}) \cdot h. \] 
Thus $\omega'' = \omega' h$. Since $\omega'$ and $\omega''$ are both representatives of $G/H$, this forces $h = \mathrm{Id}$, and hence 
\[ (\omega'', \widetilde{v}') = (\omega', \widetilde{v}). \]
It follows that 
\[ E_x \cong \bigoplus \limits_{\omega' \in {\mathcal R}'} V_{(\omega')^{-1} \cdot x} \cong \bigoplus \limits_{\omega \in {\mathcal R}} V_{\omega \cdot x} = \bigoplus \limits_{\omega \in {\mathcal R}} (\omega^* V)_x. \] 
Since these identifications are natural in $x$, they assemble into an isomorphism of $G$-equivariant vector bundles 
\[ (G \times V)/H \cong \mathrm{Ind}_H^G(V). \]
\end{proof}

The induction functor is also compatible with the tensor product in an oplax sense.

\begin{Proposition} 
The functor $\mathrm{Ind}_H^G$ is an oplax monoidal functor between $\big( \mathcal{VB}_H(X), \otimes \big)$ and $\big( \mathcal{VB}_G(X), \otimes \big)$. More precisely, for all $A,B \in \mathcal{VB}_H(X)$, there is a canonical map 
\[ \mathrm{Ind}_H^G(A \otimes B) \longrightarrow \mathrm{Ind}_H^G(A) \otimes \mathrm{Ind}_H^G(B). \]
Moreover, this map is injective.
\end{Proposition}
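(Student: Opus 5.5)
We will write the map down fibrewise as a diagonal embedding and then check that it is natural, $G$-equivariant and injective. Fix a set of representatives $\mathcal R$ of $H\setminus G$. For $x\in X$ the fibres are
\[ \mathrm{Ind}_H^G(A\otimes B)_x=\bigoplus_{\omega\in\mathcal R} A_{\omega x}\otimes B_{\omega x}, \qquad \big(\mathrm{Ind}_H^G(A)\otimes\mathrm{Ind}_H^G(B)\big)_x=\bigoplus_{\omega,\omega'\in\mathcal R} A_{\omega x}\otimes B_{\omega' x}. \]
Define $\Delta_{A,B}$ as the inclusion of the diagonal summands $\omega=\omega'$, that is, $(a_\omega\otimes b_\omega)_\omega\mapsto \sum_\omega a_\omega\otimes b_\omega$, where the $\omega$-th term sits in the $(\omega,\omega)$ summand. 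In global terms, $\Delta_{A,B}$ is the direct sum over $\omega$ of the canonical isomorphisms $\omega^*(A\otimes B)\cong\omega^*A\otimes\omega^*B$, followed by the inclusion of summands coming from the distributivity isomorphism $(\bigoplus_\omega\omega^*A)\otimes(\bigoplus_{\omega'}\omega'^*B)\cong\bigoplus_{\omega,\omega'}\omega^*A\otimes\omega'^*B$. This makes $\Delta_{A,B}$ a smooth bundle map covering $\mathrm{Id}_X$. It is fibrewise the inclusion of a direct summand, hence injective, and this settles the last claim. Naturality in $A$ and $B$ is immediate, because morphisms act summandwise.

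The main point is $G$-equivariance, and this is where the proof of Theorem \ref{induced} is needed. Let $g\in G$ and $\omega\in\mathcal R$. Write $\omega g^{-1}=h\omega'$ with $h\in H$ and $\omega'\in\mathcal R$ uniquely determined. In $\mathrm{Ind}_H^G(A\otimes B)$, the component $a_\omega\otimes b_\omega$ is sent to the $\omega'$ summand as $h^{-1}\cdot(a_\omega\otimes b_\omega)=(h^{-1}a_\omega)\otimes(h^{-1}b_\omega)$, using the diagonal $H$-action on $A\otimes B$. In the target, $g$ acts diagonally: $a_\omega$ in $\mathrm{Ind}(A)$ goes to $h^{-1}a_\omega$ in slot $\omega'$, and $b_\omega$ in $\mathrm{Ind}(B)$ also goes to $h^{-1}b_\omega$ in slot $\omega'$. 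The decomposition $(h,\omega')$ depends only on $\omega$ and $g$, not on the bundle, so both factors use the same pair. Hence $g$ sends the $(\omega,\omega)$ summand to the $(\omega',\omega')$ summand by the same formula, and $\Delta$ commutes with the action. This remark is the crux: it explains why only the diagonal summands are preserved, and why $\Delta$ is not surjective when $[G:H]>1$.

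Next we check the oplax monoidal axioms. Counit: $\mathrm{Ind}_H^G(X\times\mathbb K)=\bigoplus_\omega X\times\mathbb K$, and we take $\epsilon:\mathrm{Ind}_H^G(X\times\mathbb K)\to X\times\mathbb K$ to be the sum of components, $(\lambda_\omega)\mapsto\sum_\omega\lambda_\omega$. This is $G$-equivariant because $G$ permutes the summands and acts trivially on $\mathbb K$; equivalently, it is the Frobenius adjoint (the Corollary) of the identity of the unit. Coassociativity: both composites $\mathrm{Ind}(A\otimes B\otimes C)\to\mathrm{Ind}A\otimes\mathrm{Ind}B\otimes\mathrm{Ind}C$ equal the inclusion of the triple-diagonal summands $(\omega,\omega,\omega)$. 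Counitality: $(\epsilon\otimes\mathrm{id})\circ\Delta$ sends $a_\omega\otimes b_\omega$ to $b_\omega$ in slot $\omega$, which agrees with the unitor under $\mathrm{Ind}(X\times\mathbb K\otimes B)\cong\mathrm{Ind}(B)$. Each of these reduces to a one-line fibrewise check.

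Finally, $\Delta$ does not depend on the choice of $\mathcal R$, up to the canonical isomorphisms of Theorem \ref{induced}. Changing $\omega$ to $h\omega$ acts by $h$ diagonally on $A\otimes B$, and by $h\otimes h$ on the corresponding $(h\omega,h\omega)$ summand of the target. So $\Delta$ intertwines the two identifications and is canonical.
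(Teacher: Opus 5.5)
Your argument is correct, but it runs in the opposite direction from the paper's.

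The paper does not write the map down first. It tensors the $H$-equivariant summand inclusions $A\to\mathrm{Res}_H^G\mathrm{Ind}_H^G(A)$ and $B\to\mathrm{Res}_H^G\mathrm{Ind}_H^G(B)$, uses that $\mathrm{Res}_H^G$ is strongly monoidal, and defines the comparison map as the Frobenius adjoint of the result. This makes $G$-equivariance and naturality automatic. Only afterwards does it assert that this adjoint is the diagonal inclusion, and it reads injectivity off from that.

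You instead define the diagonal inclusion directly and verify equivariance by hand. The key point is that the decomposition $\omega g^{-1}=h\omega'$ depends only on $(\omega,g)$ and not on the bundle. The two maps coincide: inserting $\psi=j_A\otimes j_B$ into the formula $\widetilde{\psi}_{\omega}=\omega^{-1}\cdot\psi_{\omega x}$ from the proof of the Corollary moves the identity slot to the $\omega$ slot in each factor with no $H$-correction, which is exactly your $\Delta_{A,B}$.

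The adjunction route is shorter and model-independent. Via the general principle that a left adjoint of a strong monoidal functor is canonically oplax, it also gives the coherence axioms formally. Your route makes the diagonal form of the map true by construction, and that form is what the paper only asserts and what its injectivity claim rests on. You also supply the counit $\epsilon$ and check coassociativity and counitality, which the paper omits even though they are part of being oplax monoidal.

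A minor caveat: $\Delta_{A,B}$ fails to be surjective for $[G:H]>1$ only when $A$ and $B$ have positive rank.
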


\begin{proof} 
The restriction functor $\mathrm{Res}_H^G$ is strongly monoidal, that is $\mathrm{Res}_H^G(A \otimes B) \cong \mathrm{Res}_H^G(A) \otimes \mathrm{Res}_H^G(B)$. Therefore, there is an $H$-equivariant map 
\[ A \otimes B \longrightarrow \mathrm{Res}_H^G \Big( \mathrm{Ind}_H^G(A) \otimes \mathrm{Ind}_H^G(B) \Big). \]
By adjunction, this corresponds to a unique $G$-equivariant map 
\[ \mathrm{Ind}_H^G(A \otimes B) \longrightarrow \mathrm{Ind}_H^G(A) \otimes \mathrm{Ind}_H^G(B). \]
To see that this map is injective, note that it is given by the diagonal inclusion: 
\[ \mathrm{Ind}_H^G(A \otimes B) = \bigoplus \limits_{\omega \in {\mathcal R}} \omega^* A \otimes \omega^* B \hooklongrightarrow \mathrm{Ind}_H^G(A) \otimes \mathrm{Ind}_H^G(B) = \bigoplus \limits_{\omega_1, \omega_2 \in {\mathcal R}} \omega_1^* A \otimes \omega_2^* B. \]
Hence it is injective.
\end{proof}

\begin{Proposition} \label{prop6}
Let $K \subset H \subset G$ be subgroups. Then there is a canonical natural isomorphism of functors 
\[ \mathrm{Ind}_K^G \cong \mathrm{Ind}_H^G \circ \mathrm{Ind}_K^H: \mathcal{VB}_K(X) \longrightarrow \mathcal{VB}_H(X) \longrightarrow \mathcal{VB}_G(X). \]
\end{Proposition}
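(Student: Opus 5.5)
Two routes are available: argue by adjunction, or build the isomorphism by hand from coset representatives. I will take the adjunction route as the main argument and keep the explicit construction as a cross-check.

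\textbf{Adjunction argument.} By the Frobenius reciprocity corollary, $\mathrm{Ind}_K^G$ is left adjoint to $\mathrm{Res}_K^G$. The composite $\mathrm{Ind}_H^G \circ \mathrm{Ind}_K^H$ is left adjoint to $\mathrm{Res}_K^H \circ \mathrm{Res}_H^G$, since adjunctions compose. Restriction does not change the underlying bundle and only restricts the action, so $\mathrm{Res}_K^H \circ \mathrm{Res}_H^G = \mathrm{Res}_K^G$ strictly. Uniqueness of left adjoints up to unique natural isomorphism then gives a canonical natural isomorphism $\mathrm{Ind}_K^G \cong \mathrm{Ind}_H^G \circ \mathrm{Ind}_K^H$. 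The only point to check is that the bijection in the Frobenius corollary is natural in both variables. I would verify this quickly from its formula $\Phi(f) = f \circ j$: naturality in $W$ is post-composition, and naturality in $V$ follows because $j$ is natural in $V$.

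\textbf{Explicit isomorphism.} Choose representatives $\mathcal{R}_1$ for $K\setminus H$ and $\mathcal{R}_2$ for $H\setminus G$. Then the products $\mathcal{R}_1\mathcal{R}_2 = \{\omega_1\omega_2\}$ form a system of representatives for $K\setminus G$, with each element written uniquely as $\omega_1\omega_2$. To see this, note that every $g \in G$ decomposes uniquely as $g = h\,\omega_2$ with $h \in H$, and then $h = k\,\omega_1$ with $k \in K$.

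On underlying bundles we then have
\[ \mathrm{Ind}_H^G\big(\mathrm{Ind}_K^H V\big) = \bigoplus_{\omega_2} \omega_2^*\Big(\bigoplus_{\omega_1}\omega_1^*V\Big) = \bigoplus_{\omega_1,\omega_2} (\omega_1\omega_2)^* V. \]
This uses that pullback commutes with finite direct sums and that $\omega_2^*\omega_1^* V = (\omega_1\omega_2)^*V$. At the level of fibres, the latter is $V_{\omega_1\omega_2 x}$.

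The direct sum on the right is $\mathrm{Ind}_K^G(V)$ built with the representative system $\mathcal{R}_1\mathcal{R}_2$. Theorem~\ref{induced} says the result does not depend on this choice, up to canonical isomorphism.

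\textbf{Main obstacle.} The hard step is checking that this identification intertwines the two $G$-actions. Write $\omega_1\omega_2 g^{-1} = k\,\omega_1'\omega_2'$. First decompose $\omega_2 g^{-1} = h\,\omega_2'$, then $\omega_1 h = k\,\omega_1'$. In the iterated induction, $g$ acts on the summand indexed by $\omega_2$ by $h^{-1}$, using the $H$-action on $\mathrm{Ind}_K^H V$. That $h^{-1}$ in turn moves the summand indexed by $\omega_1$ to the one indexed by $\omega_1'$ via $k^{-1}$. This reproduces exactly the $\mathrm{Ind}_K^G$ action, because the decomposition in $G = K\cdot\mathcal{R}_1\mathcal{R}_2$ is unique.

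Naturality in $V$ is immediate, since morphisms act summandwise.
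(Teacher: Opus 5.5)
Your main argument is the paper's proof: compose the adjunctions $\mathrm{Ind}_K^H \dashv \mathrm{Res}_K^H$ and $\mathrm{Ind}_H^G \dashv \mathrm{Res}_H^G$, use $\mathrm{Res}_K^H \circ \mathrm{Res}_H^G = \mathrm{Res}_K^G$, and conclude by uniqueness of left adjoints; your check that the Frobenius bijection is natural in both variables fills in a point the paper leaves implicit. Your explicit cross-check via the representative system $\mathcal{R}_1\mathcal{R}_2$ of $K \setminus G$ is also correct (the identity $\omega_1\omega_2 g^{-1} = \omega_1 h\,\omega_2' = k\,\omega_1'\omega_2'$ matches the paper's conventions), and it exhibits the isomorphism concretely rather than only abstractly.
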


\begin{proof} 
For every $V \in \mathcal{VB}_K(X)$ and every $W \in \mathcal{VB}_G(X)$, one has 
\[ \mathrm{Mor}_G \Big( \mathrm{Ind}_H^G \circ \mathrm{Ind}_K^H(V), W \Big) \cong \mathrm{Mor}_{H} \Big( \mathrm{Ind}_K^H(V), \mathrm{Res}_{H}^G(W) \Big) \]
and 
\[ \mathrm{Mor}_H \Big( \mathrm{Ind}_K^H(V), \mathrm{Res}_H^G(W) \Big) \cong \mathrm{Mor}_K \Big( V, \mathrm{Res}_{K}^{H} \circ \mathrm{Res}_H^G(W) \Big) = \mathrm{Mor}_K \Big( V, \mathrm{Res}_K^G(W) \Big). \]
Therefore, the functor $\mathrm{Ind}_H^G \circ \mathrm{Ind}_K^H$ is left-adjoint to $\mathrm{Res}_K^G$. Since $\mathrm{Ind}_K^G$ is also left adjoint to $\mathrm{Res}_K^G$, the uniqueness of the left adjoints up to canonical natural isomorphism yields 
\[ \mathrm{Ind}_H^G \circ \mathrm{Ind}_K^H \cong \mathrm{Ind}_K^G. \]
\end{proof}

%% file: 3_Two-Monoidal_Category.tex
\section{The $2$-Monoidal Category of Equivariant Vector Bundles over Configuration Spaces}

In this section, we introduce the category of $\mathfrak{S}_{\bullet}$-equivariant vector bundles over the family of spaces $M^{\bullet} = (M^n)_{n \in \mathbb{N}}$, together with the two tensor products that will be used throughout the paper. We first define the ambient category $\mathcal{VB}_{\mathfrak{S}_{\bullet}}(M^{\bullet})$ and the Hadamard tensor product, then construct the Cauchy tensor product by induced equivariance, and finally prove that these two monoidal structures form a symmetric $2$-monoidal category.

\subsection{Equivariant Vector Bundles over Configuration Spaces} \label{2.1}

Let $M$ be a manifold. For each $n \in \mathbb{N}$, the permutation group $\mathfrak{S}_n$ acts smoothly on $M^n$ on the left by
\[ \forall \sigma \in \mathfrak{S}_n, \quad \forall (x_1, \ldots, x_n) \in M^n, \qquad \sigma \cdot (x_1, \ldots, x_n) = \big( x_{\sigma^{-1}(1)}, \ldots, x_{\sigma^{-1}(n)} \big). \]
As a convention, $\mathfrak{S}_0$ denotes the trivial group and $M^0 = \left\{ \emptyset \right\}$ is the one-point manifold whose single element is called the vacuum state. We denote $M^{\bullet}$ the family of spaces $(M^n)_{n \in \mathbb{N}}$.

We now introduce a geometric analogue of species, in the sense of Joyal \cite{joyal}, or equivalently of $\mathbb{S}$-modules appearing in the operadic litterature \cite{loday+vallette}. Recall that an $\mathbb{S}$-module is a collection $(V_n)_{n \in \mathbb{N}}$ in which each $V_n$ is a representation of the symmetric group $\mathfrak{S}_n$. In the geometric setting, we replace each representation $V_n$ by an $\mathfrak{S}_n$-equivariant vector bundle over $M^n$. We do not pursue this analogy further in the present paper.

\begin{Definition} 
An \textbf{$\mathfrak{S}_{\bullet}$-equivariant vector bundle} on $M^{\bullet}$ is a collection $\mathbf{V} = \big( V_n \rightarrow M^n \big)_{n \in \mathbb{N}}$ such that for each $n \in \mathbb{N}$, $V_n \rightarrow M^n$ is an $\mathfrak{S}_n$-equivariant vector bundle over $M^n$. 
We denote by $\mathcal{VB}_{\mathfrak{S}_{\bullet}}(M^{\bullet})$ the category whose objects are $\mathfrak{S}_{\bullet}$-equivariant vector bundles on $M$, and whose morphisms $f: \mathbf{V} \rightarrow \mathbf{W}$ are collections of $\mathfrak{S}_n$-equivariant bundle maps $f_n: V_n \rightarrow W_n$ covering the identity on $M^n$, for every $n \in \mathbb{N}$.
\end{Definition}

The inclusion of $M$ as the $1$-component of $M^{\bullet}$ gives rise to two natural functors.
\[ \iota^*: \mathbf{V} \in \mathcal{VB}_{\mathfrak{S}_{\bullet}}(M^{\bullet}) \longmapsto V_1 \in \mathcal{VB}(M) \]
called the \textbf{pullback} functor and 
\[ \iota_*: \mathcal{VB}(M) \longrightarrow \mathcal{VB}_{\mathfrak{S}_{\bullet}}(M^{\bullet}) \]
called the \textbf{pushforward} functor which sends a vector bundle $V \rightarrow M$ to the $\mathfrak{S}$-equivariant vector bundle concentrated in degree $1$, equal to $V$ in degree $1$ and to the zero vector bundle over $M^n$ in every degree $\neq 1$. Note that the pullback is a left inverse of the pushforward, i.e. $\iota^* \iota_* V = V$ for any $V \in \mathcal{VB}(M)$. However, $\iota_* \iota^* \neq \mathrm{Id}$. \newline
Thus $\mathcal{VB}(M)$ identifies with a full subcategory of $\mathcal{VB}_{\mathfrak{S}_{\bullet}}(M^{\bullet})$. As opposed to general $\mathfrak{S}_{\bullet}$-equivariant vector bundles, the ones coming from $\mathcal{VB}(M)$ will be called \textbf{local vector bundles}.

\begin{Definition}
For $\mathbf{V}, \mathbf{W} \in \mathcal{VB}_{\mathfrak{S}_{\bullet}}(M^{\bullet})$, their \textbf{Hadamard tensor product} $\mathbf{V} \otimes \mathbf{W}$ is defined degreewise by 
\[ \big( \mathbf{V} \otimes \mathbf{W} \big)_{n \in \mathbb{N}} = V_n \otimes W_n. \]
Its unit is given by the trivial equivariant line bundle $\mathbf{I}_{\otimes} = \big( M^n \times \mathbb{K} \rightarrow M^n \big)_{n \in \mathbb{N}}$ with trivial $\mathfrak{S}_n$-action on the fibre $\mathbb{K}$. For each $n \in \mathbb{N}$, we denote by $\mathds{1}_n^{\otimes}$ the constant section equal to $1$ of the trivial line bundle $(\mathbf{I}_{\otimes})_n$.
\end{Definition} 
The usual tensor product of vector bundles immediately yields the following.
\begin{Proposition} 
The Hadamard tensor product $\otimes$ gives $\mathcal{VB}_{\mathfrak{S}_{\bullet}}(M^{\bullet})$ a structure of a symmetric monoidal category, with braiding $\beta_{\mathbf{V}, \mathbf{W}}^{\otimes}: \mathbf{V} \otimes \mathbf{W} \rightarrow \mathbf{W} \otimes \mathbf{V}$ given degreewise by the canonical symmetry $V_n \otimes W_n \cong W_n \otimes V_n$.
\end{Proposition}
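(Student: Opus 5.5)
The plan is to reduce everything to the degreewise statement, using the example on operations between equivariant vector bundles: for each fixed $n$, $\big(\mathcal{VB}_{\mathfrak{S}_n}(M^n), \otimes\big)$ is already a symmetric monoidal category. Its unit is the trivial line bundle $M^n \times \mathbb{K}$, and the diagonal action on $V_n \otimes W_n$ is again $\mathfrak{S}_n$-equivariant. So I will set up the structure on $\mathcal{VB}_{\mathfrak{S}_{\bullet}}(M^{\bullet})$ as the product of these categories over $n \in \mathbb{N}$.

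The steps run as follows.

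\textbf{Functoriality.} First I check that $\otimes$ is a bifunctor. Given morphisms $f\colon \mathbf{V} \rightarrow \mathbf{V}'$ and $g\colon \mathbf{W} \rightarrow \mathbf{W}'$, I set $(f \otimes g)_n = f_n \otimes g_n$. This map is $\mathfrak{S}_n$-equivariant because
\[ (f_n \otimes g_n)\big((\sigma v) \otimes (\sigma w)\big) = \sigma f_n(v) \otimes \sigma g_n(w). \]
Compatibility with composition and identities holds fibrewise.

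\textbf{Coherence isomorphisms.} Next I define the associator, the unitors and the braiding degreewise. The associator is $(V_n \otimes W_n) \otimes U_n \cong V_n \otimes (W_n \otimes U_n)$. The unitors are $(M^n \times \mathbb{K}) \otimes V_n \cong V_n$, given by $\lambda \otimes v \mapsto \lambda v$. The braiding is $v \otimes w \mapsto w \otimes v$. For each of these I verify two things:
\begin{enumerate}
\item It is a smooth bundle isomorphism, being the canonical one of ordinary vector bundles.
\item It is $\mathfrak{S}_n$-equivariant. This is immediate for the diagonal action, and for the unitors it uses that $\mathfrak{S}_n$ acts trivially on the fibre $\mathbb{K}$ of $\mathbf{I}_{\otimes}$.
\end{enumerate}
Naturality in each variable also holds degreewise.

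\textbf{Axioms.} The pentagon, triangle and hexagon identities, together with $\beta_{\mathbf{W},\mathbf{V}}^{\otimes} \circ \beta_{\mathbf{V},\mathbf{W}}^{\otimes} = \mathrm{Id}$, are equalities of morphisms. These can be checked fibrewise at each point of each $M^n$, where they reduce to the corresponding identities for finite-dimensional vector spaces.

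I do not expect a genuine obstacle. The only point requiring care is the equivariance of the unitors, which uses the trivial action on the unit bundle. A twisted unit such as $M^n \times_{\mathrm{sgn}} \mathbb{K}$ would fail to be a unit, so this is where the choice of $\mathbf{I}_{\otimes}$ in the definition matters.
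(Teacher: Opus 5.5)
Your proposal is correct and takes essentially the same approach as the paper. The paper simply observes that the structure is inherited degreewise from the usual tensor product of ($\mathfrak{S}_n$-equivariant) vector bundles, as in the earlier example on operations between equivariant bundles. Your write-up makes the degreewise checks explicit, including the equivariance of the unitors, which relies on the trivial action on $\mathbf{I}_{\otimes}$.
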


\subsection{Cauchy Tensor Product of Equivariant Vector Bundles over Configuration Spaces} \label{2.2}

A natural first attempt to combine vector bundles living over different powers $M^p$ and $M^q$ is to use the external tensor product $\boxtimes^{\mathrm{ext}}$. In the category of vector bundles over varying bases, this construction is symmetric monoidal: its braiding is induced by the block-swap diffeomorphism of the base  
\[ \tau_{p,q}: (x_1, \ldots, x_p, y_1, \ldots, y_q) \in M^p \times M^q \longmapsto (y_1, \ldots, y_q, x_1, \ldots, x_p) \in M^{q} \times M^p, \]
together with the canonical symmetry of tensor products. In the present setting, however, this procedure does not define a braiding in $\mathcal{VB}_{\mathfrak{S}_{\bullet}}(M^{\bullet})$, because morphisms in $\mathcal{VB}_{\mathfrak{S}_{\bullet}}(M^{\bullet})$ are required to cover the identity on the base $M^{p+q}$, whereas the map induced by $\tau_{p,q}$ covers the block permutation of $M^{p+q}$ rather than the identity. For this reason, we introduce the Cauchy tensor product $\boxtimes$ as a symmetrized version of the external tensor product, yielding a symmetric monoidal structure on $\mathcal{VB}_{\mathfrak{S}_{\bullet}}(M^{\bullet})$. 

\begin{Definition}
For $\mathbf{V}, \mathbf{W} \in \mathcal{VB}_{\mathfrak{S}_{\bullet}}(M^{\bullet})$, their \textbf{Cauchy tensor product} $\mathbf{V} \boxtimes \mathbf{W}$ is defined degreewise by 
\[ \forall n \in \mathbb{N}, \qquad (\mathbf{V} \boxtimes \mathbf{W})_n = \bigoplus \limits_{p+q=n} \mathrm{Ind}_{\mathfrak{S}_p \times \mathfrak{S}_q}^{\mathfrak{S}_n} \big( V_p \boxtimes^{\mathrm{ext}} W_q \big). \]
Its unit object, denoted by $\mathbf{I}_{\boxtimes}$, is the trivial line bundle concentrated in degree $0$, that is, 
\[ (\mathbf{I}_{\boxtimes})_0 = \left\{ \emptyset \right\} \times \mathbb{K} \qquad \text{and} \qquad (\mathbf{I}_{\boxtimes})_n = M^n \times \left\{0 \right\} \; \text{for all} \; n \geqslant 1. \]
We denote by $\mathds{1}_0^{\boxtimes}$ the canonical generator of the one-dimensional vector space $(\mathbf{I}_{\boxtimes})_0$.
\end{Definition}

\begin{Theorem} 
The Cauchy tensor product $\boxtimes$ gives $\mathcal{VB}_{\mathfrak{S}_{\bullet}}(M^{\bullet})$ the structure of a symmetric monoidal category.
\end{Theorem}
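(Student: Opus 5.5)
The plan is to make the Cauchy product explicit on fibres, so that the structure maps become simple reindexings. First, using the associated-bundle model of the previous Proposition (or directly from the definition with a fixed set of representatives), I will identify
\[ (\mathbf{V} \boxtimes \mathbf{W})_n \cong \bigoplus_{p+q=n} \big(\mathfrak{S}_n \times (V_p \boxtimes^{\mathrm{ext}} W_q)\big)/(\mathfrak{S}_p \times \mathfrak{S}_q). \]
A convenient choice of representatives is the $(p,q)$-shuffles. With this choice the fibre over $x=(x_1,\dots,x_n)$ decomposes as
\[ \bigoplus_{S \sqcup T = \{1,\dots,n\}} V_{|S|,\,x_S} \otimes W_{|T|,\,x_T}, \]
where $x_S$ denotes the subtuple indexed by $S$ in increasing order. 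A permutation $\sigma$ sends the $(S,T)$-summand to the $(\sigma S,\sigma T)$-summand, acting through the induced permutations on blocks.

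Functoriality of $\boxtimes$ in each variable follows from functoriality of $\boxtimes^{\mathrm{ext}}$ and of $\mathrm{Ind}$ (Theorem~\ref{induced}). With this description I will then define the structure isomorphisms.

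\textbf{Unitors.} Since $\mathbf{I}_{\boxtimes}$ is concentrated in degree $0$, only $p=n, q=0$ survives. There $\mathfrak{S}_n \times \mathfrak{S}_0 = \mathfrak{S}_n$, so $\mathrm{Ind}$ is the identity and $V_n \boxtimes^{\mathrm{ext}} (\{\emptyset\}\times\mathbb{K}) \cong V_n$ canonically.

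\textbf{Associator.} Both $((\mathbf{U}\boxtimes\mathbf{V})\boxtimes\mathbf{W})_n$ and $(\mathbf{U}\boxtimes(\mathbf{V}\boxtimes\mathbf{W}))_n$ will be identified with
\[ \bigoplus_{p+q+r=n}\mathrm{Ind}^{\mathfrak{S}_n}_{\mathfrak{S}_p\times\mathfrak{S}_q\times\mathfrak{S}_r}\big(U_p\boxtimes^{\mathrm{ext}}V_q\boxtimes^{\mathrm{ext}}W_r\big). \]
This uses three ingredients: Proposition~\ref{prop6} (transitivity of induction), the compatibility $\mathrm{Ind}_{H}^{G}(A)\boxtimes^{\mathrm{ext}} B\cong \mathrm{Ind}_{H\times K}^{G\times K}(A\boxtimes^{\mathrm{ext}}B)$ (checked directly from the definition, since representatives of $H\backslash G$ give representatives of $(H\times K)\backslash(G\times K)$), and additivity of $\mathrm{Ind}$ over direct sums. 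Concretely, both sides become sums over ordered decompositions $S\sqcup T\sqcup R$ of $\{1,\dots,n\}$ of $U_{x_S}\otimes V_{x_T}\otimes W_{x_R}$, and the associator is the identity in these coordinates.

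\textbf{Braiding.} $\beta^{\boxtimes}_{\mathbf{V},\mathbf{W}}$ sends the $(S,T)$-summand $V_{x_S}\otimes W_{x_T}$ to the $(T,S)$-summand $W_{x_T}\otimes V_{x_S}$ of $(\mathbf{W}\boxtimes\mathbf{V})_n$ by the flip of tensor factors. Equivalently, it is the map adjoint (via Frobenius reciprocity) to the $\mathfrak{S}_p\times\mathfrak{S}_q$-equivariant map $V_p\boxtimes^{\mathrm{ext}}W_q\to \mathrm{Res}\,\mathrm{Ind}_{\mathfrak{S}_q\times\mathfrak{S}_p}^{\mathfrak{S}_n}(W_q\boxtimes^{\mathrm{ext}}V_p)$ given by the block permutation $\tau_{p,q}$ composed with the tensor flip. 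This is exactly how the objection raised before the definition is resolved: the block permutation is absorbed into the induced action, so the resulting map covers the identity of $M^n$.

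Finally I will verify the axioms. Equivariance of all structure maps is immediate in the subset description, because $\sigma$ acts by relabelling subsets and all maps are natural in the labels. Naturality is clear. The involutivity $\beta_{\mathbf{W},\mathbf{V}}\circ\beta_{\mathbf{V},\mathbf{W}}=\mathrm{Id}$ holds since the swap is involutive. The pentagon, triangle and hexagon identities reduce, in the ordered-decomposition coordinates, to the corresponding identities for the symmetric monoidal structure on vector spaces, together with the trivial set-theoretic identities for reordering decompositions $S_1\sqcup\dots\sqcup S_k$.

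I expect the main obstacle to be the associator. The canonical isomorphisms for changes of representatives and the transitivity isomorphism come only from uniqueness of adjoints, so their coherence must be checked carefully. I will therefore work entirely in the subset model, where no choice of representatives is made. The key point to establish is that this model is canonically isomorphic to $\mathrm{Ind}$ for any choice of $\mathcal{R}$, which follows by evaluating on shuffle representatives and using the canonical isomorphisms of Theorem~\ref{induced}. Once that is done, the pentagon becomes an equality of identity maps.
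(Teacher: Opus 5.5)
Your proposal is correct. Its skeleton is the paper's: unitors from $\mathbf{I}_{\boxtimes}$ being concentrated in degree $0$, the associator from identifying both bracketings with $\bigoplus_{p+q+r=n}\mathrm{Ind}^{\mathfrak{S}_n}_{\mathfrak{S}_p\times\mathfrak{S}_q\times\mathfrak{S}_r}(U_p\boxtimes^{\mathrm{ext}}V_q\boxtimes^{\mathrm{ext}}W_r)$, and the braiding from absorbing the block permutation $\tau_{p,q}$ into the induced equivariant structure so that it covers the identity of $M^n$. Where you differ is in how equivariance and coherence are actually established.

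The paper stays with the abstract functor $\mathrm{Ind}$ throughout:
\begin{itemize}
\item it flattens iterated inductions by Proposition~\ref{prop6}, which is obtained from uniqueness of left adjoints;
\item it builds the braiding from the base isomorphism $V_p\boxtimes^{\mathrm{ext}}W_q\cong\tau_{p,q}^*(W_q\boxtimes^{\mathrm{ext}}V_p)$ plus conjugation of $\mathfrak{S}_p\times\mathfrak{S}_q$ by $\tau_{p,q}$;
\item it then only asserts that the pentagon, triangle and hexagon ``reduce degreewise'' to those of $\boxtimes^{\mathrm{ext}}$.
\end{itemize}

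You instead pass once to the choice-free subset model $\bigoplus_{S\sqcup T=\{1,\dots,n\}}V_{x_S}\otimes W_{x_T}$. There the associator is the identity on summands indexed by ordered decompositions $S\sqcup T\sqcup R$, and the braiding is the swap $(S,T)\mapsto(T,S)$. As a result, equivariance and every coherence diagram become directly checkable rather than asserted.

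You also make explicit two points the paper leaves implicit. First, the compatibility $\mathrm{Ind}_H^G(A)\boxtimes^{\mathrm{ext}}B\cong\mathrm{Ind}^{G\times K}_{H\times K}(A\boxtimes^{\mathrm{ext}}B)$, which the paper uses silently when it rewrites $\mathrm{Ind}\big((\mathbf{U}\boxtimes\mathbf{V})_r\boxtimes^{\mathrm{ext}}W_s\big)$. Second, your Frobenius-reciprocity description of the braiding, which avoids the unstated conjugation-invariance of induction behind the paper's ``by functoriality of induction''.

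The paper's route is shorter and stays in the functorial language of Section~1. Yours costs the one-time identification of the subset model with $\mathrm{Ind}$, but it actually proves the coherence.

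One simplification: in your model Proposition~\ref{prop6} is not needed at all. If $A\subset\{1,\dots,|S'|\}$ and $S\subset S'$ is its image under the order-preserving bijection $\{1,\dots,|S'|\}\to S'$, then $(x_{S'})_A=x_S$ as ordered tuples. So the nested decompositions flatten directly, and you can drop transitivity of induction (with its adjunction-based coherence worries) from your list of ingredients.
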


\begin{proof} 
The definition of $\boxtimes$ is functorial, since it is obtained degreewise from the external tensor product, direct sums, and the induction functors. 

We first identify the unit object. By definition, $(\mathbf{I}_{\boxtimes})_p = 0$ for every $p \geqslant 1$, while $(\mathbf{I}_{\boxtimes})_0$ is the trivial line bundle. Hence, for any $\mathbf{V} \in \mathcal{VB}_{\mathfrak{S}_{\bullet}}(M^{\bullet})$ and any $n \in \mathbb{N}$, only the summand with $p=0$ and $q=n$ contributes to $\big( \mathbf{I}_{\boxtimes} \boxtimes \mathbf{V} \big)_n$, and therefore 
\[ \big( \mathbf{I}_{\boxtimes} \boxtimes \mathbf{V} \big)_n = \bigoplus \limits_{p+q=n} \mathrm{Ind}_{\mathfrak{S}_p \times \mathfrak{S}_q}^{\mathfrak{S}_n} \Big( (\mathbf{I}_{\boxtimes})_p \boxtimes^{\mathrm{ext}} V_q \Big) \cong (\mathbf{I}_{\boxtimes})_0 \boxtimes^{\mathrm{ext}} V_n \cong V_n. \]
These isomorphisms are natural in $\mathbf{V}$ and define a left unit isomorphism $\mathbf{I}_{\boxtimes} \boxtimes \mathbf{V} \overset{\cong} \longrightarrow \mathbf{V}$. The same argument applied to $\mathbf{V} \boxtimes \mathbf{I}_{\boxtimes}$ gives a right unit isomorphism $\mathbf{V} \boxtimes \mathbf{I}_{\boxtimes} \overset{\cong} \longrightarrow \mathbf{V}$. Since both are induced degreewise from the ordinary unit isomorphisms for the external tensor product, their compatibility is inherited from the corresponding compatibility for $\boxtimes^{\mathrm{ext}}$. 

We next construct the associator. Let $\mathbf{U}, \mathbf{V}, \mathbf{W} \in \mathcal{VB}_{\mathfrak{S}_{\bullet}}(M^{\bullet})$. Expanding the definition twice, one finds 
\[ \Big( \big( \mathbf{U} \boxtimes \mathbf{V} \big) \boxtimes \mathbf{W} \Big)_n = \bigoplus \limits_{r+s=n} \mathrm{Ind}_{\mathfrak{S}_r \times \mathfrak{S}_s}^{\mathfrak{S}_n} \Big( \big( \mathbf{U} \boxtimes \mathbf{V} \big)_r \boxtimes^{\mathrm{ext}} W_s \Big) \cong \bigoplus \limits_{p+q+s=n} \mathrm{Ind}_{\mathfrak{S}_p \times \mathfrak{S}_q \times \mathfrak{S}_s}^{\mathfrak{S}_n} \Big( \big( U_p \boxtimes^{\mathrm{ext}} V_q \big) \boxtimes^{\mathrm{ext}} W_s \Big). \]
Similarly, 
\[ \Big( \mathbf{U} \boxtimes \big( \mathbf{V} \boxtimes \mathbf{W} \big) \Big)_n = \bigoplus \limits_{p+q+s=n} \mathrm{Ind}_{\mathfrak{S}_p \times \mathfrak{S}_q \times \mathfrak{S}_s}^{\mathfrak{S}_n} \Big( U_p \boxtimes^{\mathrm{ext}} \big( V_q \boxtimes^{\mathrm{ext}} W_s \big) \Big). \]
The associator of the external tensor product therefore induces, for each triple $(p,q,s)$, a natural isomorphism between the corresponding summands, and these assemble into a natural isomorphism 
\[ \alpha_{\mathbf{U}, \mathbf{V}, \mathbf{W}}^{\boxtimes}: \big( \mathbf{U} \boxtimes \mathbf{V} \big) \boxtimes \mathbf{W} \overset{\cong} \longrightarrow \mathbf{U} \boxtimes \big( \mathbf{V} \boxtimes \mathbf{W} \big). \]
The only point to check is that the repeated induction functors occurring in these identifications are compatible. This follows from Proposition \ref{prop6}, which identifies iterated induction with induction from the product subgroup $\mathfrak{S}_p \times \mathfrak{S}_q \times \mathfrak{S}_s$ to $\mathfrak{S}_n$. Once this is done, the compatibility of the associators is reduced componentwise to the usual associativity of the external tensor product. 

Finally, we construct the braiding. For $p,q \in \mathbb{N}$, let $\tau_{p,q}$ denote the permutation exchanging the first block of length $p$ with the second block of length $q$ 
\[ \tau_{p,q}: (x_1, \ldots, x_p, y_1, \ldots, y_q) \in M^p \times M^q \longmapsto (y_1, \ldots, y_q, x_1, \ldots, x_p) \in M^{q} \times M^p. \]
The corresponding block permutation of the base induces an isomorphism 
\[ V_p \boxtimes^{\mathrm{ext}} W_q \cong \tau_{p,q}^* \big( W_q \boxtimes^{\mathrm{ext}} V_p \big) \]
covering the identity of $M^{p+q}$. Moreover, conjugation by $\tau_{p,q}$ identifies the subgroup $\mathfrak{S}_p \times \mathfrak{S}_q$ with $\mathfrak{S}_q \times \mathfrak{S}_p$ inside $\mathfrak{S}_{p+q}$. By functoriality of induction, we obtain a natural isomorphism 
\[ \mathrm{Ind}_{\mathfrak{S}_p \times \mathfrak{S}_q}^{\mathfrak{S}_{p+q}}(V_p \boxtimes^{\mathrm{ext}} W_q) \cong \mathrm{Ind}_{\mathfrak{S}_q \times \mathfrak{S}_p}^{\mathfrak{S}_{p+q}}(W_q \boxtimes^{\mathrm{ext}} V_p). \]
Summing over all decompositions $n=p+q$ gives a natural isomorphism 
\[ \beta_{\mathbf{V}, \mathbf{W}}^{\boxtimes}: \mathbf{V} \boxtimes \mathbf{W} \overset{\cong} \longrightarrow \mathbf{W} \boxtimes \mathbf{V} \]
Its compatibility with the associator is again reduced degreewise to the corresponding compatibility for the external tensor product, and the relation 
\[ \beta_{\mathbf{W}, \mathbf{V}}^{\boxtimes} \circ \beta_{\mathbf{V}, \mathbf{W}}^{\boxtimes} = \mathrm{Id} \]
follows from the equality $\tau_{p,q} \circ \tau_{q,p} = \mathrm{Id}_{M^{p+q}}$.

We have therefore constructed a unit object, unit isomorphisms, an associator, and a symmetry satisfying the required compatibility conditions. Hence $\big( \mathcal{VB}_{\mathfrak{S}_{\bullet}}(M^{\bullet}), \boxtimes, \mathbf{I}_{\boxtimes} \big)$ is a symmetric monoidal category.
\end{proof}

\begin{Remark} 
The braiding described above corresponds to the case where the fibres are not graded. If they are, then the same construction must be modified by the usual Koszul sign rule: exchanging two homogeneous elements $a$ and $b$ contributes a factor $(-1)^{|a| |b|}$ in addition to the block permutation of the base.
\end{Remark}

\begin{Remark}[Explicit formula for the Cauchy tensor product] A convenient choice of representatives for the quotient $(\mathfrak{S}_p \times \mathfrak{S}_q) \setminus \mathfrak{S}_{p+q}$ is given by unshuffle permutations: 
\[ \mathrm{Ush}(p,q) = \left\{ \sigma \in \mathfrak{S}_{p+q}, \; \sigma^{-1}(1) < \ldots < \sigma^{-1}(p) \; \text{and} \; \sigma^{-1}(p+1) < \ldots < \sigma^{-1}(p+q) \right\}. \]
Hence,
\[ (\mathbf{V} \boxtimes \mathbf{W})_n = \bigoplus \limits_{\substack{p+q=n \\ \omega \in \mathrm{Ush}(p,q)}} \omega^* \big( V_p \boxtimes^{\mathrm{ext}} W_q \big). \]
For fixed $p$ and $q$, the number of such summands is $\mathrm{Card} \big( \mathrm{Ush}(p,q) \big) = \binom{p+q}{p}$, and summing over all decompositions $p+q=n$ yields a total of $2^n$ summands.

For small values of $n$, this gives the following fibrewise decompositions: 
\[ (\mathbf{V} \boxtimes \mathbf{W})_{(x_1, x_2)} = \big( V_{\emptyset} \otimes W_{(x_1, x_2)} \big) \oplus \big( V_{(x_1,x_2)} \otimes W_{\emptyset} \big) \oplus \big( V_{x_1} \otimes W_{x_2} \big) \oplus \big( V_{x_2} \otimes W_{x_1} \big), \] 
\[ \begin{array}{lll} (\mathbf{V} \boxtimes \mathbf{W})_{(x_1,x_2,x_3)} &=& \big( V_{\emptyset} \otimes W_{(x_1, x_2, x_3)} \big) \oplus \big( V_{(x_1, x_2, x_3)} \otimes W_{\emptyset} \big) \\ 
&& \oplus \big( V_{x_1} \otimes W_{(x_2,x_3)} \big) \oplus \big( V_{x_2} \otimes W_{(x_1, x_3)} \big) \oplus \big( V_{x_3} \otimes W_{(x_1, x_2)} \big) \\ 
&& \oplus \big( V_{(x_1, x_2)} \otimes W_{x_3} \big) \oplus \big( V_{(x_1, x_3)} \otimes W_{x_2} \big) \oplus \big( V_{(x_2, x_3)} \otimes W_{x_1} \big). \end{array} \]
The $\mathfrak{S}_n$-action may be described explicitly on this unshuffle decomposition as follows. Fix a summand indexed by $\omega \in \mathrm{Ush}(p,q)$ and let $\sigma \in \mathfrak{S}_n$. Write 
\[ \omega \sigma^{-1} = (\alpha \sqcup \beta) \omega' \] 
where $\alpha \sqcup \beta \in \mathfrak{S}_p \times \mathfrak{S}_q$ and $\omega' \in \mathrm{Ush}(p,q)$. Then $\sigma$ sends the $\omega$-summand to the $\omega'$-summand, and the induced map is obtained by applying $(\alpha \cup \beta)^{-1}$ through the $(\mathfrak{S}_p \times \mathfrak{S}_q)$-equivariance of $V_p \boxtimes^{\mathrm{ext}} W_q$. Thus the $\mathfrak{S}_n$-action permutes the summands indexed by unshuffles, together with the corresponding equivariance correction inside each summand. 

For example, in degree $3$ and for the decomposition $1+2$, this mechanism can be visualized schematically as follows (where we omit the $x_i$'s).
\begin{center}
\centering
\begin{minipage}{14cm}
\begin{tikzpicture}[
  every node/.style={font=\small},
  circ/.style={draw, rounded corners, inner sep=2pt, minimum height=10pt},
  redbox/.style={draw=red, rounded corners, inner sep=2pt},
  greenbox/.style={draw=green!60!black, rounded corners, inner sep=2pt},
  bluebox/.style={draw=blue!60!black, rounded corners, inner sep=2pt},
  arrow/.style={-{Stealth[scale=1]}, thick}
]

\node at (-1.25,-1) {$\sigma = \begin{bmatrix} 2&3&1 \end{bmatrix}^{-1}$};

% Ligne 1 (du haut)
\node (t1) at (0,0) {$\mathrm{Ind}_{\mathfrak{S_1} \times \mathfrak{S}_2}^{\mathfrak{S}_3}(V_1 \boxtimes^{\mathrm{ext}} W_2)_{123}$};
\node (eq1) at (2,0) {$=$};

\node (t2) at (4,0) {$V_1 \otimes W_{23}$};
\node (eq2) at (6,0) {$\oplus$};
\node (t3) at (8,0) {$V_2 \otimes W_{13}$};
\node (eq3) at (10,0) {$\oplus$};
\node (t4) at (12,0) {$V_3 \otimes W_{12}$};

% Ligne 2 (en dessous)
\node (t1bis) at (0,-2) {$\mathrm{Ind}_{\mathfrak{S_1} \times \mathfrak{S}_2}^{\mathfrak{S}_3}(V_1 \boxtimes^{\mathrm{ext}} W_2)_{231}$};
\node (eq1bis) at (2,-2) {$=$};

\node (t2bis) at (4,-2) {$V_2 \otimes W_{31}$};
\node (eq2bis) at (6,-2) {$\oplus$};
\node (t3bis) at (8,-2) {$V_3 \otimes W_{21}$};
\node (eq3bis) at (10,-2) {$\oplus$};
\node (t4bis) at (12,-2) {$V_1 \otimes W_{23}$};

% Flèches de correspondance
\draw[arrow] (t1.south) to[out=-90,in=90] (t1bis.north);
\draw[arrow, blue!70!black] (t2.south) to[out=-90,in=90] (t4bis.north);
\draw[arrow, green!70!black] (t3.south) to[out=-90,in=90] (t2bis.north);
\draw[arrow, red!70!black] (t4.south) to[out=-90,in=90] (t3bis.north);

\end{tikzpicture}
\end{minipage} \end{center}
\end{Remark}

\subsection{The $2$-Monoidal Structure on Equivariant Vector Bundles over Configuration Spaces} \label{2.3}

The Hadamard and Cauchy tensor products $\otimes$ and $\boxtimes$ are compatible in the sense that they endow $\mathcal{VB}_{\mathfrak{S}_{\bullet}}(M^{\bullet})$ with the structure of a symmetric $2$-monoidal category, also known as a symmetric duoidal category. The terminology 'duoidal' for this structure was introduced in \cite{batanin+markl}; we follow the terminology and conventions of \cite{aguiar+mahajan}, in order to avoid confusion with monoidal $2$-categories, which are different in nature.

We briefly recall the definition in order to fix notation. A $2$-monoidal category is a tuple $({\mathcal C}, \boxtimes, \mathbf{I}_{\boxtimes}, \otimes, \mathbf{I}_{\otimes})$, where \begin{itemize} 
\item[$\bullet$] $({\mathcal C}, \boxtimes, \mathbf{I}_{\boxtimes})$ and $({\mathcal C}, \otimes, \mathbf{I}_{\otimes})$ are monoidal categories,
\item[$\bullet$] There is a natural interchange morphism 
\[ \zeta_{A,B,C,D}: (A \otimes B) \boxtimes (C \otimes D) \longrightarrow (A \boxtimes C) \otimes (B \boxtimes D) \]
satisfying the usual associativity compatibility conditions,
\item[$\bullet$] There are three morphisms 
\[ \Delta: \mathbf{I}_{\boxtimes} \longrightarrow \mathbf{I}_{\boxtimes} \otimes \mathbf{I}_{\boxtimes}, \qquad \mu: \mathbf{I}_{\otimes} \boxtimes \mathbf{I}_{\otimes} \longrightarrow \mathbf{I}_{\otimes} \qquad \text{and} \qquad \nu: \mathbf{I}_{\boxtimes} \longrightarrow \mathbf{I}_{\otimes}, \]
such that $(\mathbf{I}_{\boxtimes}, \Delta, \nu)$ is a comonoid in $({\mathcal C}, \otimes, \mathbf{I}_{\otimes})$ and $(\mathbf{I}_{\otimes}, \mu, \nu)$ is a monoid in $({\mathcal C}, \boxtimes, \mathbf{I}_{\boxtimes})$.
\end{itemize}

Our goal is now to construct these data for $\mathcal{VB}_{\mathfrak{S}_{\bullet}}(M^{\bullet})$ with the Hadamard and Cauchy tensor products. The construction is induced by the compatibility between the Hadamard tensor product and the external tensor product: 
\[ (A \otimes B) \boxtimes^{\mathrm{ext}} (C \otimes D) \cong (A \boxtimes^{\mathrm{ext}} C) \otimes (B \boxtimes^{\mathrm{ext}} D), \] 
together with the oplax monoidality of the induction functors.

\begin{Theorem} 
The category $\Big( \mathcal{VB}_{\mathfrak{S}_{\bullet}}(M^{\bullet}), \boxtimes, \mathbf{I}_{\boxtimes}, \otimes, \mathbf{I}_{\otimes} \Big)$ is a symmetric $2$-monoidal category.
\end{Theorem}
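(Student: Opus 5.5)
We construct the data listed in the definition of a $2$-monoidal category degreewise, and reduce every coherence condition to the corresponding identity for the external tensor product and to the oplax structure of induction.

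\textbf{Interchange morphism.} For $\mathbf{A},\mathbf{B},\mathbf{C},\mathbf{D}$ and $n=p+q$, the $(p,q)$-summand of $\big((\mathbf{A}\otimes\mathbf{B})\boxtimes(\mathbf{C}\otimes\mathbf{D})\big)_n$ is
\[ \mathrm{Ind}_{\mathfrak{S}_p\times\mathfrak{S}_q}^{\mathfrak{S}_n}\big((A_p\otimes B_p)\boxtimes^{\mathrm{ext}}(C_q\otimes D_q)\big)\cong \mathrm{Ind}_{\mathfrak{S}_p\times\mathfrak{S}_q}^{\mathfrak{S}_n}\big((A_p\boxtimes^{\mathrm{ext}}C_q)\otimes(B_p\boxtimes^{\mathrm{ext}}D_q)\big). \]
Here the isomorphism is the middle-four interchange for $\boxtimes^{\mathrm{ext}}$, which is $(\mathfrak{S}_p\times\mathfrak{S}_q)$-equivariant, so induction preserves it. We then apply the oplax structure map of induction (Proposition on oplax monoidality) to land in
\[ \mathrm{Ind}(A_p\boxtimes^{\mathrm{ext}}C_q)\otimes \mathrm{Ind}(B_p\boxtimes^{\mathrm{ext}}D_q). \]
Finally we include this into $(\mathbf{A}\boxtimes\mathbf{C})_n\otimes(\mathbf{B}\boxtimes\mathbf{D})_n$ via the diagonal summand $(p,q),(p,q)$ of the double direct sum. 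Summing over $p+q=n$ gives $\zeta$.

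To see that $\zeta$ is $\mathfrak{S}_n$-equivariant and natural, we use the unshuffle description from the explicit-formula Remark. On the summand indexed by $\omega\in\mathrm{Ush}(p,q)$, the map $\zeta$ sends
\[ \omega^*\big((A_p\otimes B_p)\boxtimes^{\mathrm{ext}}(C_q\otimes D_q)\big)\quad\text{to}\quad \omega^*(A_p\boxtimes^{\mathrm{ext}}C_q)\otimes\omega^*(B_p\boxtimes^{\mathrm{ext}}D_q). \]
The $\mathfrak{S}_n$-action permutes the $\omega$'s identically on both sides and applies the same correction $(\alpha\sqcup\beta)^{-1}$, so equivariance follows. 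Alternatively, equivariance comes for free from the Frobenius reciprocity construction of the oplax map.

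\textbf{Unit morphisms.} These are immediate degreewise:
\begin{itemize}
\item $\Delta:\mathbf{I}_\boxtimes\to\mathbf{I}_\boxtimes\otimes\mathbf{I}_\boxtimes$ is the canonical isomorphism $\mathbb{K}\cong\mathbb{K}\otimes\mathbb{K}$ in degree $0$ and zero in other degrees.
\item $\nu:\mathbf{I}_\boxtimes\to\mathbf{I}_\otimes$ is the identity of $\mathbb{K}$ in degree $0$ and zero elsewhere.
\item $\mu:\mathbf{I}_\otimes\boxtimes\mathbf{I}_\otimes\to\mathbf{I}_\otimes$ is, in degree $n$, the sum over $p+q=n$ and $\omega\in\mathrm{Ush}(p,q)$ of the multiplication maps $\omega^*(\mathbb{K}\boxtimes^{\mathrm{ext}}\mathbb{K})\cong\mathbb{K}\to\mathbb{K}$.
\end{itemize}
The map $\mu$ is $\mathfrak{S}_n$-equivariant because all actions on these trivial line bundles are trivial, and the permutation of summands is absorbed by summing. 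Associativity and unitality of $\mu$ reduce to the fact that both iterated composites are the sum over all $\mathrm{Ush}(p,q,s)$ of multiplication, using Proposition \ref{prop6} to identify iterated unshuffles with unshuffles of the product subgroup. The comonoid axioms for $(\mathbf{I}_\boxtimes,\Delta,\nu)$ are trivial, since everything is concentrated in degree $0$.

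\textbf{Coherence and symmetry.} These are the main obstacle. We must check:
\begin{itemize}
\item the associativity axioms of $\zeta$ with respect to both associators;
\item the unit axioms linking $\zeta$ with $\Delta$, $\mu$ and $\nu$;
\item in the symmetric case, compatibility of $\zeta$ with both braidings $\beta^\otimes$ and $\beta^\boxtimes$.
\end{itemize}
Our strategy is to work summand-by-summand in the unshuffle decomposition. After identifying iterated inductions with induction from $\mathfrak{S}_p\times\mathfrak{S}_q\times\mathfrak{S}_s$ via Proposition \ref{prop6}, each diagram restricted to an $\omega$-summand becomes a diagram built only from the interchange, associator and symmetry isomorphisms of ordinary tensor products of vector spaces fibrewise. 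Such diagrams commute by the coherence theorem for symmetric monoidal categories.

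The delicate point is checking that both paths around each diagram select the same summand. The diagonal inclusions compose to the diagonal inclusion, since the oplax structure of induction is itself coassociative, being induced by adjunction from the strong monoidality of $\mathrm{Res}$. For the braiding axiom, the block permutation $\tau_{p,q}$ defining $\beta^\boxtimes$ acts identically on the $\mathbf{A}\boxtimes\mathbf{C}$ and $\mathbf{B}\boxtimes\mathbf{D}$ factors, so it commutes with the diagonal inclusion. This is exactly the sort of compatibility we expect to be the main technical check.
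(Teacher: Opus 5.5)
Your proposal is correct and follows essentially the same route as the paper: $\zeta$ is built degreewise from the middle-four interchange for $\boxtimes^{\mathrm{ext}}$, followed by the (diagonal) oplax structure map of induction, with the same $\Delta$, $\nu$, $\mu$, and every axiom is reduced degreewise to coherence for ordinary tensor products together with transitivity of induction (Proposition \ref{prop6}). Your unshuffle-level checks of $\mathfrak{S}_n$-equivariance and of the compatibility of $\zeta$ with $\beta^{\boxtimes}$ and $\beta^{\otimes}$ spell out details that the paper asserts without verification.
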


\begin{proof}
We first construct the structural morphisms of the $2$-monoidal structure. For $\mathbf{A}, \mathbf{B}, \mathbf{C}, \mathbf{D} \in \mathcal{VB}_{\mathfrak{S}_{\bullet}}(M^{\bullet})$, the interchange map 
\[ \zeta_{\mathbf{A}, \mathbf{B}, \mathbf{C},\mathbf{D}}: (\mathbf{A} \otimes \mathbf{B}) \boxtimes (\mathbf{C} \otimes \mathbf{D}) \longrightarrow (\mathbf{A} \boxtimes \mathbf{C}) \otimes (\mathbf{B} \boxtimes \mathbf{D}) \]
is defined degreewise. In degree $n$, one has, using the canonical compatibility for the external tensor product
\[ \Big[ (\mathbf{A} \otimes \mathbf{B}) \boxtimes (\mathbf{C} \otimes \mathbf{D}) \Big]_n = \bigoplus \limits_{p+q=n} \mathrm{Ind}_{\mathfrak{S}_p \times \mathfrak{S}_q}^{\mathfrak{S}_n} \Big( \big( A \otimes B \big)_p \boxtimes^{\mathrm{ext}} \big( C \otimes D \big)_q \Big) \cong \bigoplus \limits_{p+q=n} \mathrm{Ind}_{\mathfrak{S}_p \times \mathfrak{S}_q}^{\mathfrak{S}_n} \Big( \big( A_p \boxtimes^{\mathrm{ext}} C_q \big) \otimes \big( B_p \boxtimes^{\mathrm{ext}} D_q \big) \Big), \]
then by the oplax monoidality of the induction functors, one obtains a natural map for each summand 
\[ \mathrm{Ind}_{\mathfrak{S}_p \times \mathfrak{S}_q}^{\mathfrak{S}_n} \Big( \big( A_p \boxtimes^{\mathrm{ext}} C_q \big) \otimes \big( B_p \boxtimes^{\mathrm{ext}} D_q \big) \Big) \longrightarrow \mathrm{Ind}_{\mathfrak{S}_p \times \mathfrak{S}_q}^{\mathfrak{S}_n} \big( A_p \boxtimes^{\mathrm{ext}} C_q \big) \otimes \mathrm{Ind}_{\mathfrak{S}_p \times \mathfrak{S}_q}^{\mathfrak{S}_n} \big( B_p \boxtimes^{\mathrm{ext}} D_q \big). \] 
Summing over all decompositions $p+q=n$ gives the desired morphism $\zeta_{\mathbf{A}, \mathbf{B}, \mathbf{C}, \mathbf{D}}$.

The morphism $\Delta: \mathbf{I}_{\boxtimes} \rightarrow \mathbf{I}_{\boxtimes} \otimes \mathbf{I}_{\boxtimes}$ is the obvious degreewise map: in degree $0$, it sends $\mathds{1}_{0}^{\boxtimes}$ to $\mathds{1}_0^{\boxtimes} \otimes \mathds{1}_0^{\boxtimes}$, and in positive degree it is zero. Similarly, 
\[ \nu: \mathbf{I}_{\boxtimes} \rightarrow \mathbf{I}_{\otimes} \] 
is the canonical degreewise inclusion, so that in degree $0$ one has $\nu \big( \mathds{1}_0^{\boxtimes} \big) = \mathds{1}_0^{\otimes}$. Finally, 
\[ \mu: \mathbf{I}_{\otimes} \boxtimes \mathbf{I}_{\otimes} \rightarrow \mathbf{I}_{\otimes} \]
is defined degreewise from the maps 
\[ \mu_{p,q}: \mathrm{Ind}_{\mathfrak{S}_p \times \mathfrak{S}_q}^{\mathfrak{S}_{p+q}} \Big( (\mathbf{I}_{\otimes})_p \boxtimes^{\mathrm{ext}} (\mathbf{I}_{\otimes})_q \Big) \longrightarrow (\mathbf{I}_{\otimes})_{p+q} \]
sending the canonical unit $\big( \mathds{1}_p^{\otimes}, \mathds{1}_{q}^{\otimes} \big)$ to $\mathds{1}_{p+q}^{\otimes}$. \newline

It remains to verify the axioms. The key point is that all structural morphisms introduced above are defined degreewise from the corresponding morphisms for the ordinary tensor product and the external tensor product, together with iterated induction. Thus every required diagram may be checked for each degree $n$.

For the interchange morphism, the associativity compatibility reduces degreewise to the following two facts. First, the canonical isomorphism 
\[ (A_p \otimes B_p) \boxtimes^{\mathrm{ext}} (C_q \otimes D_q) \cong (A_p \boxtimes^{\mathrm{ext}} C_q) \otimes (B_p \boxtimes^{\mathrm{ext}} D_q) \]
is natural and compatible with the associativity of both $\otimes$ and $\boxtimes^{\mathrm{ext}}$. Second, the oplax monoidality of the induction functor is itself natural and associative. Since $\zeta$ is obtained by composing these morphisms and then summing over all decompositions $p+q=n$, the associativity diagram for $\zeta$ commutes in each degree.

The unit compatibilities are equally straightforward. Because $\mathbf{I}_{\boxtimes}$ is concentrated in degree $0$, every Cauchy product involving $\mathbf{I}_{\boxtimes}$ has a unique nonzero summand, namely the one corresponding to the decomposition $0+n=n$ or $n+0=n$. Under the resulting identifications, the maps involving $\Delta$ reduce in degree $0$ to the identity 
\[ \mathds{1}_0^{\boxtimes} \longmapsto \mathds{1}_0^{\boxtimes} \otimes \mathds{1}_0^{\boxtimes} \]
and they vanish in every positive degree. This is exactly what is needed for the unit diagrams involving $\mathbf{I}_{\boxtimes}$ to commute. Similarly, the diagrams involving $\mu$ reduce degreewise to the fact that the multiplication on $\mathbf{I}_{\otimes}$ identifies the canonical section 
\[ \mathds{1}_p^{\otimes} \boxtimes^{\mathrm{ext}} \mathds{1}_q^{\otimes} \qquad \text{with} \qquad \mathds{1}_{p+q}^{\otimes}. \]

Finally, the compatibility of the unit data is checked directly. The object $\mathbf{I}_{\otimes}$ is a monoid for $\boxtimes$ because the multiplication $\mu$ is associative: for all $p,q,r \in \mathbb{N}$, one has 
\[ \mu_{p,q+r} \Big( \mathds{1}_p^{\otimes}, \mu_{q,r} \big( \mathds{1}_q^{\otimes}, \mathds{1}_r^{\otimes} \big) \Big) = \mathds{1}_{p+q+r}^{\otimes} = \mu_{p+q,r} \Big( \mu_{p,q} \big( \mathds{1}_p^{\otimes}, \mathds{1}_q^{\otimes} \big), \mathds{1}_{r}^{\otimes} \Big). \]
Its left and right unit laws follow from the fact that $\nu \big( \mathds{1}_0^{\boxtimes} \big) = \mathds{1}_0^{\otimes}$ and that $\nu$ vanishes in every positive degree. Dually, $\mathbf{I}_{\boxtimes}$ is a comonoid for $\otimes$ because $\Delta$ is coassociative and counital degreewise: in degree $0$, everything reduces to the identity $\mathds{1}_0^{\boxtimes} \mapsto \mathds{1}_0^{\boxtimes} \otimes \mathds{1}_0^{\boxtimes} \otimes \mathds{1}_0^{\boxtimes}$, and in positive degree all maps vanish.

Thus the data $(\zeta, \Delta, \mu, \nu)$ satisfy the axioms of a symmetric $2$-monoidal category, and therefore the category $\big( \mathcal{VB}_{\mathfrak{S}_{\bullet}}(M^{\bullet}), \boxtimes, \mathbf{I}_{\boxtimes}, \otimes, \mathbf{I}_{\otimes} \big)$ is symmetric $2$-monoidal. 
\end{proof}

\begin{Remark} \begin{itemize}
\item[$\bullet$] The object $\mathbf{I}_{\otimes}$ is in fact a commutative monoid in $(\boxtimes, \mathbf{I}_{\boxtimes})$. Indeed, for every $p,q \in \mathbb{N}$, one has
\[ \mu_{p,q} \big( \mathds{1}_p^{\otimes}, \mathds{1}_q^{\otimes} \big) = \mathds{1}_{p+q}^{\otimes} = \mu_{q,p} \big( \mathds{1}_q^{\otimes}, \mathds{1}_p^{\otimes} \big). \]
Thus the multiplication is invariant under exchanging the two factors, and therefore commutative.
\item[$\bullet$] Similarly, $\mathbf{I}_{\boxtimes}$ is a cocommutative comonoid in $(\otimes, \mathbf{I}_{\otimes})$. Indeed, the comultiplication $\Delta: \mathbf{I}_{\boxtimes} \rightarrow \mathbf{I}_{\boxtimes} \otimes \mathbf{I}_{\boxtimes}$ is symmetric with respect to the $\otimes$-braiding: in degree $0$, one has 
\[ \Delta(\mathds{1}_0^{\boxtimes}) = \mathds{1}_0^{\boxtimes} \otimes \mathds{1}_0^{\boxtimes} \]
which is fixed by the symmetry of the tensor product, while in positive degree all components vanish. Hence $\Delta$ is cocommutative. 
\item[$\bullet$] In degree $0$, the unit morphism $\nu: \mathbf{I}_{\boxtimes} \rightarrow \mathbf{I}_{\otimes}$ is an isomorphism. More precisely, $\nu_0: (\mathbf{I}_{\boxtimes})_0 \overset{\cong} \longrightarrow (\mathbf{I}_{\otimes})_0$ identifies the generator $\mathds{1}_0^{\boxtimes}$ and $\mathds{1}_0^{\otimes}$. In particular, when no confusion is possible, we may regard these two elements as canonically identified.
\end{itemize}
\end{Remark}

\begin{Remark}[Alternative choice of interchange map] One may also define another interchange map 
\[ \xi_{\mathbf{A}, \mathbf{B}, \mathbf{C}, \mathbf{D}}: (\mathbf{A} \boxtimes \mathbf{B}) \otimes (\mathbf{C} \boxtimes \mathbf{D}) \longrightarrow (\mathbf{A} \otimes \mathbf{C}) \boxtimes (\mathbf{B} \otimes \mathbf{D}) \]
obtained by projecting onto the diagonal summands. More precisely, in degree $n$ one has
\[ \begin{array}{lll} \Big( (\mathbf{A} \boxtimes \mathbf{B}) \otimes (\mathbf{C} \boxtimes \mathbf{D}) \Big)_n &=& \bigoplus \limits_{\substack{i+j=n \\ \omega \in \mathrm{Ush}(i,j)}} \bigoplus \limits_{\substack{k+\ell = n \\ \widetilde{\omega} \in \mathrm{Ush}(k, \ell)}} \Big[ \omega^* \big( \mathbf{A}_i \boxtimes^{\mathrm{ext}} \mathbf{B}_j \big) \otimes \widetilde{\omega}^* \big( \mathbf{C}_k \boxtimes^{\mathrm{ext}} \mathbf{D}_{\ell} \big) \Big] \\ 
&\twoheadrightarrow& \bigoplus \limits_{\substack{i+j=n \\ \omega \in \mathrm{Ush}(i,j)}} \Big[ \omega^*  \big( \mathbf{A}_i \boxtimes^{\mathrm{ext}} \mathbf{B}_j \big) \otimes \omega^* \big( \mathbf{C}_i \boxtimes^{\mathrm{ext}} \mathbf{D}_j \big) \Big] \\ 
&\cong& \bigoplus \limits_{\substack{i+j=n \\ \omega \in \mathrm{Ush}(i,j)}} \omega^* \Big( \big( \mathbf{A}_i \otimes \mathbf{C}_i \big) \boxtimes^{\mathrm{ext}} \big( \mathbf{B}_j \otimes \mathbf{D}_j \big) \Big) \\ 
&=& \Big( \big( \mathbf{A} \otimes \mathbf{C} \big) \boxtimes \big( \mathbf{B} \otimes \mathbf{D} \big) \Big)_n. \end{array} \]
Here we have used the canonical compatibility between pullback and tensor product, together with the usual interchange between tensor product and external tensor product. Equivalently, $\xi$ is obtained by sending all off-diagonal summands, namely those for which $(i,j,\omega) \neq (k,\ell, \widetilde{\omega})$ to zero.

This yields a second $2$-monoidal structure $\Big( \mathcal{VB}_{\mathfrak{S}_{\bullet}}(M^{\bullet}), \otimes, \mathbf{I}_{\otimes}, \boxtimes, \mathbf{I}_{\boxtimes} \Big)$, where the order of the two tensor products indicates which interchange map is being used. In this alternative setting: \begin{enumerate} 
\item The morphism $\widetilde{\Delta}: \mathbf{I}_{\otimes} \rightarrow \mathbf{I}_{\otimes} \boxtimes \mathbf{I}_{\otimes}$ is the diagonal map. Indeed, 
\[ \Big( \mathbf{I}_{\otimes} \boxtimes \mathbf{I}_{\otimes} \Big)_n \cong \bigoplus \limits_{\substack{i+j=n \\ \omega \in \mathrm{Ush}(i,j)}} \big( \mathbf{I}_{\otimes} \big)_n, \]
so $\widetilde{\Delta}_n: (\mathbf{I}_{\otimes})_n \rightarrow \big(\mathbf{I}_{\otimes} \boxtimes \mathbf{I}_{\otimes} \big)_n$ is naturally defined by the diagonal embedding. Under the above decomposition, $\widetilde{\Delta}_n(\mathds{1}_n^{\otimes})$ is the diagonal element whose entries are all equal to $\mathds{1}_n^{\otimes}$.
\item The morphism $\widetilde{\mu}: \mathbf{I}_{\boxtimes} \otimes \mathbf{I}_{\boxtimes} \longrightarrow \mathbf{I}_{\boxtimes}$ is simply the inverse of the comultiplication $\Delta$ constructed for the previous interchange morphism. In degree $0$, it is therefore characterized by 
\[ \widetilde{\mu} \Big( \mathds{1}_0^{\boxtimes} \otimes \mathds{1}_0^{\boxtimes} \Big) = \mathds{1}_0^{\boxtimes}, \]
and it vanishes in every positive degree.
\item The unit morphism $\widetilde{\nu}: \mathbf{I}_{\otimes} \longrightarrow \mathbf{I}_{\boxtimes}$ is the projection onto the degree $0$ component. Equivalently 
\[\widetilde{\nu}_0(\mathds{1}_0^{\otimes}) = \mathds{1}_0^{\boxtimes}, \qquad \widetilde{\nu}_n(\mathds{1}_n^{\otimes}) = 0 \; \text{for all } n \geqslant 1. \]
\end{enumerate}
In this alternative setting, $\mathbf{I}_{\otimes}$ is a cocommutative comonoid in $(\boxtimes, \mathbf{I}_{\boxtimes})$ and $\mathbf{I}_{\boxtimes}$ is a commutative monoid in $(\otimes, \mathbf{I}_{\otimes})$.
These two interchange maps are related by
\[ \xi_{\mathbf{A}, \mathbf{C}, \mathbf{B}, \mathbf{D}} \circ \zeta_{\mathbf{A}, \mathbf{B}, \mathbf{C}, \mathbf{D}} = \mathrm{Id}_{(\mathbf{A} \otimes \mathbf{B}) \boxtimes (\mathbf{C} \otimes \mathbf{D})}. \]
The two $2$-monoidal structures give rise to different notions of monoid and comonoid objects, and in this sense they are dual to one another. In the present paper, we work with the interchange map $\zeta$, since it yields the largest class of monoid objects.
\end{Remark}

%% file: 4_Algebras.tex
\section{Equivariant Algebra Bundles over Configuration Spaces} 

Let $({\mathcal C}, \otimes, I)$ be a symmetric monoidal category, recall that a monoid object in ${\mathcal C}$ is an object $M \in {\mathcal C}$ with a multiplication map $M \otimes M \rightarrow M$ and a unit map $I \rightarrow M$ satisfying the usual associative and unit axioms. Classical examples include monoids in $(\mathbf{Set}, \times)$ and unital associative algebras in $(\mathbf{Vect}_k, \otimes)$.

The purpose of this section is to study monoid objects associated with the two monoidal structures introduced above, namely the Hadamard tensor product and the Cauchy tensor product, in the setting of equivariant vector bundles over configuration spaces. We describe the corresponding algebra objects and construct their associated free algebras. \newline

We begin with the Hadamard tensor product, whose behavior is close to the standard tensor product of vector bundles and serves as a guiding example. We then turn to the Cauchy tensor product, where the presence of induced equivariance and underlying combinatorics gives rise to a richer class of algebra objects. 

Finally, we introduce the notion of a $2$-algebra, namely an object endowed with both a Hadamard and a Cauchy algebra structure, together with compatibility conditions encoded by an interchange law between the two tensor products. 

\subsection{Equivariant Hadamard Algebra Bundles over Configuration Spaces} \label{3.1}

\begin{Definition} 
An \textbf{equivariant Hadamard algebra} (or \textbf{equivariant $\otimes$-algebra}) is a monoid object in the monoidal category $\big( \mathcal{VB}_{\mathfrak{S}_{\bullet}}(M^{\bullet}), \otimes, \mathbf{I}_{\otimes} \big)$. 
\end{Definition}

Equivalently, an equivariant $\otimes$-algebra is an $\mathfrak{S}$-equivariant vector bundle $\mathbf{A}$ together with two maps 
\[ m^{\otimes}: \mathbf{A} \otimes \mathbf{A} \longrightarrow \mathbf{A} \qquad \text{and} \qquad u^{\otimes}: \mathbf{I}_{\otimes} \longrightarrow \mathbf{A} \]
such that $m^{\otimes}$ and $u^{\otimes}$ satisfy the usual associativity and unit axioms. Equivalently, each $A_n$ is an $\mathfrak{S}_n$-equivariant algebra bundle in the sense that
\[ \forall a,b \in A_n, \quad \forall \sigma \in \mathfrak{S}_n, \qquad m_n^{\otimes}(\sigma \cdot a, \sigma \cdot b) = \sigma \cdot m_n^{\otimes}(a,b) \] 
and the unit $\mathds{1}_n^{\otimes}$ of $A_n$ is fixed by the action, i.e. 
\[ \forall \sigma \in \mathfrak{S}_n, \qquad \sigma \cdot \mathds{1}_n^{\otimes} = \mathds{1}_n^{\otimes}. \] 
We will say that $\mathbf{A}$ is commutative if $m^{\otimes} \circ \beta^{\otimes} = m^{\otimes}$, where $\beta^{\otimes}$ is the $\otimes$-braiding of $\mathcal{VB}_{\mathfrak{S}_{\bullet}}(M^{\bullet})$; equivalently, for all $n \in \mathbb{N}$ and all $a,b \in A_n$, one has $m_n^{\otimes}(a,b) = m_n^{\otimes}(b,a)$. One may also consider graded-commutative $\otimes$-algebras where the braiding is twisted by the Koszul sign rule, i.e. $\beta_{\mathrm{sgn}}^{\otimes}(a \otimes b) = (-1)^{|a| |b|} b \otimes a$. Thus the condition becomes 
\[ m_n^{\otimes}(a,b) = (-1)^{|a| |b|} m_n^{\otimes}(b,a). \]

A $\otimes$-morphism of equivariant $\otimes$-algebras is an equivariant bundle map $f: \mathbf{A} \longrightarrow \mathbf{B}$ that is compatible with the $\otimes$-multiplication and the $\otimes$-unit in the usual sense; equivalently, it is given degreewise by maps $f_n: A_n \longrightarrow B_n$ satisfying
\[ f_n(\sigma \cdot a) = \sigma \cdot f_n(a) \qquad \text{and} \qquad  
f_n \big( m_n^{\otimes}(a,b) \big) = m_n^{\otimes} \big( f_n(a), f_n(b) \big). \]
We will denote by: \begin{itemize} 
\item[$\bullet$] $\mathbf{Alg}^{\otimes} \big( \mathcal{VB}_{\mathfrak{S}_{\bullet}}(M^{\bullet}) \big)$ the category of equivariant $\otimes$-algebras.
\item[$\bullet$] $\mathbf{CAlg}^{\otimes} \big( \mathcal{VB}_{\mathfrak{S}_{\bullet}}(M^{\bullet}) \big)$ the full subcategory of commutative ones.
\item[$\bullet$] $\mathbf{CAlg}_{\mathrm{sgn}}^{\otimes} \big( \mathcal{VB}_{\mathfrak{S}_{\bullet}}(M^{\bullet}) \big)$ the full subcategory of graded-commutative ones.
\end{itemize}
Since the Hadamard tensor product is defined degreewise, a $\otimes$-algebra structure on $\mathbf{A}$ is equivalently given by independent $\mathfrak{S}_n$-equivariant algebra bundle structures on each $A_n$, with no interaction between different degrees.

\begin{Example} If $(\mathbf{A}_1, m_1^{\otimes}, u_1^{\otimes})$ and $(\mathbf{A}_2, m_2^{\otimes}, u_2^{\otimes})$ are commutative $\otimes$-algebras, then $\mathbf{A}_1 \otimes \mathbf{A}_2$ is naturally a commutative $\otimes$-algebra with multiplication $m : (\mathbf{A}_1 \otimes \mathbf{A}_2) \otimes (\mathbf{A}_1 \otimes \mathbf{A}_2) \longrightarrow \mathbf{A}_1 \otimes \mathbf{A}_2$ given by
\[ \forall a_1, a_2 \in (\mathbf{A}_1)_n, \quad \forall b_1, b_2 \in (\mathbf{A}_2)_n, \qquad m_n \Big( (a_1 \otimes a_2) \otimes (b_1 \otimes b_2) \Big) = m_{1;n}^{\otimes}(a_1, b_1) \otimes m_{2;n}^{\otimes}(a_2, b_2) \] 
and unit $u : \mathbf{I}_{\otimes} \longrightarrow \mathbf{A}_1 \otimes \mathbf{A}_2$ given by 
\[ u_n \big( \mathds{1}_n^{\otimes} \big) = u_{1;n}^{\otimes}(\mathds{1}_n^{\otimes}) \otimes u_{2;n}^{\otimes}(\mathds{1}_n^{\otimes}) \]
In fact, by using the componentwise description, one easily obtains that $\otimes$ is the coproduct of the category $\mathbf{CAlg} \big( \mathcal{VB}_{\mathfrak{S}_{\bullet}}(M^{\bullet}) \big)$.
\end{Example}

\subsubsection{Equivariant Hadamard Tensor Algebra Bundles} \label{3.1.1}

\begin{Definition} 
The \textbf{equivariant Hadamard tensor algebra bundle} generated by $\mathbf{V} \in \mathcal{VB}_{\mathfrak{S}_{\bullet}}(M^{\bullet})$ is defined by
\[ \mathbf{T}^{\otimes}(\mathbf{V}) = \bigoplus \limits_{n \in \mathbb{N}} \mathbf{V}^{\otimes n} \qquad \text{with} \qquad \mathbf{V}^{\otimes 0} = \mathbf{I}_{\otimes}. \]
\end{Definition}

There is an obvious injection map $\mathbf{V} \hookrightarrow \mathbf{T}^{\otimes}(\mathbf{V})$, and one has
\[ \big[ \mathbf{T}^{\otimes}(\mathbf{V}) \big]_k = \bigoplus \limits_{n \in \mathbb{N}} V_k^{\otimes n} = \mathbf{T}^{\otimes}(V_k). \]
Thus the Hadamard tensor algebra is constructed degreewise: each bundle $V_k \rightarrow M^k$ generates the usual tensor algebra bundle $\mathbf{T}^{\otimes}(V_k)$ over $M^k$. 

Its multiplication $\mathbf{T}^{\otimes}(\mathbf{V}) \otimes \mathbf{T}^{\otimes}(\mathbf{V}) \longrightarrow \mathbf{T}^{\otimes}(\mathbf{V})$ is given by elementary maps 
\[ m_{p,q}^{\otimes}: \Big( (v_1 \otimes \ldots \otimes v_p), (w_1 \otimes \ldots \otimes w_q) \Big) \in  \mathbf{V}^{\otimes p} \times \mathbf{V}^{\otimes q} \longmapsto v_1 \otimes \ldots \otimes v_p \otimes w_1 \otimes \ldots \otimes w_q \in \mathbf{V}^{\otimes (p+q)}. \]
As usual, we will write $m^{\otimes}(a,b)$ simply as $a \otimes b$.

\begin{Proposition} 
The equivariant Hadamard tensor algebra $\mathbf{T}^{\otimes}(\mathbf{V})$ is the free equivariant $\otimes$-algebra bundle generated by $\mathbf{V} \in \mathcal{VB}_{\mathfrak{S}_{\bullet}}(M^{\bullet})$. More precisely, for every equivariant $\otimes$-algebra $\mathbf{A} \in \mathbf{Alg}^{\otimes} \big( \mathcal{VB}_{\mathfrak{S}_{\bullet}}(M^{\bullet}) \big)$ and every equivariant linear bundle map $f: \mathbf{V} \longrightarrow \mathbf{A}$, there exists a unique $\otimes$-algebra bundle morphism $\widetilde{f}: \mathbf{T}^{\otimes}(\mathbf{V}) \longrightarrow \mathbf{A}$ such that the following diagram commutes 
\[
\xymatrix{
\mathbf{V} \ar[rr]^{\iota} \ar[rrd]_{f} && \mathbf{T}^{\otimes}(\mathbf{V}) \ar@{.>}[d]^{\exists! \widetilde{f}} \\ 
&& \mathbf{A}
}
\]
\end{Proposition}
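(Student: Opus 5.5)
The plan is to reduce the statement to the classical universal property of the tensor algebra of a single vector bundle, applied degreewise. I then check that equivariance is preserved.

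Since $\otimes$ is computed degreewise, the remark after the definition gives $\big[\mathbf{T}^{\otimes}(\mathbf{V})\big]_k = \mathbf{T}^{\otimes}(V_k)$. Likewise, a $\otimes$-algebra structure on $\mathbf{A}$ is just a family of $\mathfrak{S}_k$-equivariant algebra bundle structures on the $A_k$, with no interaction between degrees. So both existence and uniqueness can be verified separately for each $k$.

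\textbf{Existence.} Fix $k$ and an equivariant bundle map $f_k : V_k \to A_k$. On the summand $V_k^{\otimes n}$ I define
\[ \widetilde{f}_k(v_1 \otimes \cdots \otimes v_n) = m_k^{\otimes}\big(f_k(v_1), m_k^{\otimes}(f_k(v_2), \ldots)\big), \]
the iterated product in $A_k$. On $V_k^{\otimes 0} = (\mathbf{I}_{\otimes})_k$ I set $\widetilde{f}_k = u_k^{\otimes}$. This is well defined and smooth because $(v_1, \ldots, v_n) \mapsto f_k(v_1)\cdots f_k(v_n)$ is a fibrewise multilinear smooth bundle map $V_k \times_{M^k} \cdots \times_{M^k} V_k \to A_k$. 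It therefore factors through $V_k^{\otimes n}$; one can check this in local trivializations, where it is the usual linear-algebra statement. The finitely many summands present in each fibre direction assemble into a map on the direct sum.

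I then check three things.
\begin{enumerate}
\item $\widetilde{f}_k$ is multiplicative: $\widetilde{f}_k(a \otimes b) = m_k^{\otimes}(\widetilde{f}_k(a), \widetilde{f}_k(b))$ on elementary tensors, by associativity of $m_k^{\otimes}$.
\item $\widetilde{f}_k$ is unital, by construction.
\item $\widetilde{f}_k$ is $\mathfrak{S}_k$-equivariant. For $\sigma \in \mathfrak{S}_k$, the action on $V_k^{\otimes n}$ is diagonal, $\sigma\cdot(v_1\otimes\cdots\otimes v_n) = \sigma v_1 \otimes \cdots \otimes \sigma v_n$. Equivariance then follows from $f_k(\sigma v) = \sigma f_k(v)$ together with $m_k^{\otimes}(\sigma a, \sigma b) = \sigma\, m_k^{\otimes}(a,b)$ and $\sigma \cdot \mathds{1}_k^{\otimes} = \mathds{1}_k^{\otimes}$.
\end{enumerate}
The identity $\widetilde{f} \circ \iota = f$ holds on the summand $n=1$.

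\textbf{Uniqueness.} Let $g$ be any $\otimes$-algebra morphism with $g \circ \iota = f$. Unitality forces $g = u^{\otimes}$ on $V^{\otimes 0}$. Multiplicativity then forces $g(v_1 \otimes \cdots \otimes v_n) = f(v_1)\cdots f(v_n)$ on each fibre. Since elementary tensors span each fibre $(V_k^{\otimes n})_x$, two bundle maps that agree on them agree fibrewise, so $g = \widetilde{f}$.

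The main obstacle is only a mild one: making sure the fibrewise definition yields a smooth bundle map on the infinite direct sum $\bigoplus_n V_k^{\otimes n}$. I would treat $\mathbf{T}^{\otimes}(V_k)$ as a direct sum of bundles, that is, a filtered colimit of finite-rank bundles, and define maps on it summand by summand. Smoothness then reduces to smoothness on each $V_k^{\otimes n}$, which follows from the local-trivialization argument above.
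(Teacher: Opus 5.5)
Your proof is correct and takes the same route the paper has in mind. The paper gives no written proof; it relies on the observation that $\big[\mathbf{T}^{\otimes}(\mathbf{V})\big]_k = \mathbf{T}^{\otimes}(V_k)$, which reduces the statement degreewise to the classical universal property of the tensor algebra bundle over $M^k$, and your checks of $\mathfrak{S}_k$-equivariance and smoothness make explicit what it leaves implicit (the smoothness step is even quicker if you note that on $V_k^{\otimes n}$ your map is the composite of $f_k^{\otimes n}$ with the iterated multiplication $A_k^{\otimes n}\to A_k$).
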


Equivalently, the functor $\mathbf{T}^{\otimes}: \mathcal{VB}_{\mathfrak{S}_{\bullet}}(M^{\bullet}) \longrightarrow \mathbf{Alg}^{\otimes} \big( \mathcal{VB}_{\mathfrak{S}_{\bullet}}(M^{\bullet}) \big)$ is left adjoint to the forgetful functor $\mathbf{Alg}^{\otimes} \big( \mathcal{VB}_{\mathfrak{S}_{\bullet}}(M^{\bullet}) \big) \rightarrow \mathcal{VB}_{\mathfrak{S}_{\bullet}}(M^{\bullet})$.

\subsubsection{Equivariant Hadamard Symmetric Algebra Bundles} \label{3.1.2}

In the context of linear algebra, the symmetric algebra generated by a vector space $V$ is constructed as follows: for each $n \in \mathbb{N}$, the space $V^{\otimes n}$ carries a natural $\mathfrak{S}_n$-action by permutation of the factors. The space $S^{\otimes n}(V)$ is then the space of coinvariants for this action, namely, the quotient of $V^{\otimes n}$ by the subspace spanned by vectors of the form $v - \sigma \cdot v$ for $v \in V^{\otimes n}$ and $\sigma \in \mathfrak{S}_n$. The symmetric algebra is then given by the direct sum 
\[ S^{\otimes}(V) = \bigoplus \limits_{n \in \mathbb{N}} S^{\otimes n}(V). \]
We follow this standard construction in the setting of equivariant vector bundles.

Fix $\mathbf{V} \in \mathcal{VB}_{\mathfrak{S}_{\bullet}}(M^{\bullet})$. Since $V_k$ is a $\mathfrak{S}_k$-equivariant vector bundle over $M^k$, the tensor power $V_k^{\otimes n}$ is also a $\mathfrak{S}_k$-equivariant vector bundle over $M^k$. Note that $\mathfrak{S}_n$ acts on $V_k^{\otimes n}$ by permuting the factors
\[ \tau \cdot (v_1 \otimes \ldots \otimes v_n) = v_{\tau^{-1}(1)} \otimes \ldots \otimes v_{\tau^{-1}(n)} \]
and that this action commutes with the $\mathfrak{S}_k$-equivariant action. In particular, the subbundle 
\[ W_{k,n}^{\otimes} = \mathrm{Span} \left\{ \tau \cdot v - v, \; v \in V_k^{\otimes n}, \; \tau \in \mathfrak{S}_n \right\} \subset V_k^{\otimes n} \]
is $\mathfrak{S}_k$-equivariant and hence the quotient 
\[ \mathbf{S}^{\otimes n}(V_k) = V_k^{\otimes n}/W_{k,n}^{\otimes} \]
is a well-defined $\mathfrak{S}_k$-equivariant vector bundle. Thus the Hadamard symmetric algebra is obtained degreewise: each bundle $V_k \rightarrow M^k$ generates the usual symmetric algebra bundle $\mathbf{S}^{\otimes}(V_k)$ over $M^k$.

\begin{Definition} 
The \textbf{equivariant Hadamard symmetric algebra bundle} generated by $\mathbf{V} \in \mathcal{VB}_{\mathfrak{S}_{\bullet}}(M^{\bullet})$ is defined by
\[ \mathbf{S}^{\otimes}(\mathbf{V}) = \bigoplus \limits_{n \in \mathbb{N}} \mathbf{S}^{\otimes n}(\mathbf{V}) \qquad \text{where} \qquad \big[ \mathbf{S}^{\otimes n}(\mathbf{V}) \big]_k = \mathbf{S}^{\otimes n}(V_k). \]
\end{Definition}

The concatenation of tensors $\mathbf{V}^{\otimes p} \times \mathbf{V}^{\otimes q} \rightarrow \mathbf{V}^{\otimes (p+q)}$ is $(\mathfrak{S}_p \times \mathfrak{S}_q)$-equivariant, where the action is given by permutation of the factors. Hence it induces a map 
\[ \mathbf{S}^{\otimes p}(\mathbf{V}) \times \mathbf{S}^{\otimes q}(\mathbf{V}) \longrightarrow \mathbf{S}^{\otimes(p+q)}(\mathbf{V}) \]
which is the multiplication map of $\mathbf{S}^{\otimes}(\mathbf{V})$. If $[v] \in \mathbf{S}^{\otimes p}(\mathbf{V})$ and $[w] \in \mathbf{S}^{\otimes q}(\mathbf{V})$, then we will denote by $[v] \odot [w]$ the equivalence class of $v \otimes w$. Whenever no confusion can arise, we will omit the bracket notation and simply write products like $v \odot w$. Moreover, the tensors $v \otimes w$ and $w \otimes v$ differ by a permutation exchanging the two blocks of sizes $p$ and $q$, which belongs to $\mathfrak{S}_{p+q}$. Therefore they induce the same element in $\mathbf{S}^{\otimes}(\mathbf{V})$ and hence the multiplication on $\mathbf{S}^{\otimes}(\mathbf{V})$ is commutative.

Since the Hadamard symmetric algebra is constructed degreewise, the following holds
\begin{Proposition} 
The equivariant Hadamard symmetric algebra $\mathbf{S}^{\otimes}(\mathbf{V})$ is the free equivariant commutative $\otimes$-algebra generated by $\mathbf{V}$. More precisely, for every equivariant commutative $\otimes$-algebra bundle $\mathbf{A} \in \mathbf{CAlg}^{\otimes} \big( \mathcal{VB}_{\mathfrak{S}_{\bullet}}(M^{\bullet}) \big)$ and every equivariant bundle map $f: \mathbf{V} \longrightarrow \mathbf{A}$, there exists a unique $\otimes$-algebra bundle morphism $\widetilde{f}: \mathbf{S}^{\otimes}(\mathbf{V}) \longrightarrow \mathbf{A}$ such that the following diagram commutes 
\[ \xymatrix{
\mathbf{V} \ar[rr]^{\iota} \ar[rrd]_{f} && \mathbf{S}^{\otimes}(\mathbf{V}) \ar@{.>}[d]^{\exists! \widetilde{f}} \\ 
&& \mathbf{A}
}
\]
\end{Proposition}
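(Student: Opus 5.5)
The plan is to reduce everything to a degreewise statement and then invoke the classical universal property of the symmetric algebra of a vector bundle. The Hadamard tensor product, the unit $\mathbf{I}_{\otimes}$ and the construction of $\mathbf{S}^{\otimes}(\mathbf{V})$ are all defined degreewise. Hence a commutative $\otimes$-algebra structure on $\mathbf{A}$ amounts to independent $\mathfrak{S}_k$-equivariant commutative algebra bundle structures on each $A_k$. Likewise, an equivariant map $f:\mathbf{V}\to\mathbf{A}$ amounts to a family of $\mathfrak{S}_k$-equivariant bundle maps $f_k: V_k \to A_k$ over $M^k$. It therefore suffices to fix $k$ and prove that $\mathbf{S}^{\otimes}(V_k)$ is the free $\mathfrak{S}_k$-equivariant commutative algebra bundle on $V_k$ over $M^k$. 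The family $\widetilde{f}=(\widetilde{f}_k)_k$ is then automatically a morphism in $\mathcal{VB}_{\mathfrak{S}_{\bullet}}(M^{\bullet})$ and a $\otimes$-algebra morphism.

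For existence in fixed degree $k$, I would first build $\widetilde{f}_k$ on the tensor algebra. I set $F_n: V_k^{\otimes n}\to A_k$, $v_1\otimes\cdots\otimes v_n \mapsto m_k^{\otimes}(f_k(v_1),\ldots,f_k(v_n))$, with $F_0$ sending $\mathds{1}_k^{\otimes}$ to the unit of $A_k$. This is the map already furnished by the freeness of $\mathbf{T}^{\otimes}(\mathbf{V})$ in the preceding proposition. Commutativity and associativity of $m_k^{\otimes}$ give $F_n(\tau\cdot v)=F_n(v)$ for all $\tau\in\mathfrak{S}_n$, so $F_n$ vanishes on the subbundle $W^{\otimes}_{k,n}$. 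By the universal property of quotient bundles, $F_n$ therefore descends to a smooth bundle map $\mathbf{S}^{\otimes n}(V_k)\to A_k$.

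The $\mathfrak{S}_k$-equivariance follows because $f_k$ and $m_k^{\otimes}$ are equivariant and the unit is $\mathfrak{S}_k$-fixed. Multiplicativity holds because the product $\odot$ is induced by concatenation and $F_{p+q}(v\otimes w)=m_k^{\otimes}(F_p(v),F_q(w))$. Finally, $\widetilde{f}\circ\iota=f$ holds by construction, since $F_1=f_k$.

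For uniqueness, note that $\mathbf{S}^{\otimes}(V_k)$ is generated fibrewise as an algebra by the image of $V_k$ together with the unit. Any $\otimes$-algebra morphism $g$ with $g\circ\iota=f$ is thus forced to satisfy $g([v_1\otimes\cdots\otimes v_n])=m_k^{\otimes}(f_k(v_1),\ldots,f_k(v_n))$ on every fibre. So $g=\widetilde{f}$.

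The only point requiring care is the descent to the quotient in the smooth setting. One must know that $W^{\otimes}_{k,n}$ is a genuine smooth subbundle, so that the quotient bundle and its universal property are available. I would handle this by noting that in characteristic zero $W^{\otimes}_{k,n}$ is the kernel of the symmetrization projector $\frac{1}{n!}\sum_{\tau}\tau$. This projector is a constant-rank idempotent bundle endomorphism commuting with the $\mathfrak{S}_k$-action, so its kernel is an equivariant smooth subbundle. Everything else is routine verification.
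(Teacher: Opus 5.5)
Your proposal is correct and follows the paper's route. The paper justifies this proposition only by noting that $\mathbf{S}^{\otimes}(\mathbf{V})$ is built degreewise, which reduces it to the classical universal property of the symmetric algebra bundle $\mathbf{S}^{\otimes}(V_k)$ over each $M^k$. Your descent from the tensor algebra through $W^{\otimes}_{k,n}$, together with the symmetrization-projector argument showing that $W^{\otimes}_{k,n}$ is a smooth equivariant subbundle, supplies the details the paper leaves implicit and mirrors the argument the paper writes out for the Cauchy analogue.
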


Equivalently, the functor $\mathbf{S}^{\otimes}: \mathcal{VB}_{\mathfrak{S}_{\bullet}}(M^{\bullet}) \longrightarrow \mathbf{CAlg}^{\otimes} \big( \mathcal{VB}_{\mathfrak{S}_{\bullet}}(M^{\bullet}) \big)$ is left adjoint to the forgetful functor $\mathbf{CAlg}^{\otimes} \big( \mathcal{VB}_{\mathfrak{S}_{\bullet}}(M^{\bullet}) \big) \rightarrow \mathcal{VB}_{\mathfrak{S}_{\bullet}}(M^{\bullet})$.

\begin{Remark}[Hadamard exterior algebra] Similarly, one may construct the \textbf{equivariant Hadamard exterior algebra bundle} degreewise: each bundle $V_k \rightarrow M^k$ generates the usual exterior algebra bundle $\mathbf{\Lambda}^{\otimes}(V_k)$ over $M^k$. It is defined by
\[ \mathbf{\Lambda}^{\otimes}(\mathbf{V}) = \bigoplus \limits_{n \in \mathbb{N}} \mathbf{\Lambda}^{\otimes n}(\mathbf{V}) \qquad \text{where} \qquad \big[ \mathbf{\Lambda}^{\otimes n}(\mathbf{V}) \big]_k = \mathbf{\Lambda}^{\otimes n}(V_k) \]
and it is the free equivariant graded-commutative $\otimes$-algebra generated by $\mathbf{V}$.
\end{Remark}

\subsection{Equivariant Cauchy Algebra Bundles over Configuration Spaces} \label{3.2}

\begin{Definition}
An \textbf{equivariant Cauchy algebra} (or \textbf{equivariant $\boxtimes$-algebra}) is a monoid object in the monoidal category $\big( \mathcal{VB}_{\mathfrak{S}_{\bullet}}(M^{\bullet}), \boxtimes, \mathbf{I}_{\boxtimes} \big)$. \end{Definition}

Equivalently, an equivariant $\boxtimes$-algebra bundle is an $\mathfrak{S}$-equivariant vector bundle $\mathbf{A}$ together with two maps 
\[ m^{\boxtimes}: \mathbf{A} \boxtimes \mathbf{A} \longrightarrow \mathbf{A} \qquad \text{and} \qquad u^{\boxtimes}: \mathbf{I}_{\boxtimes} \longrightarrow \mathbf{A} \]
such that $m^{\boxtimes}$ and $u^{\boxtimes}$ are associative and satisfy the unit axioms in the usual categorical sense. Since
\[ \big[ \mathbf{A} \boxtimes \mathbf{A} \big]_n = \bigoplus \limits_{p+q=n} \mathrm{Ind}_{\mathfrak{S}_p \times \mathfrak{S}_q}^{\mathfrak{S}_n} \big( {A}_p \boxtimes^{\mathrm{ext}} {A}_q \big), \]
a multiplication map $m^{\boxtimes}: \mathbf{A} \boxtimes \mathbf{A} \rightarrow \mathbf{A}$ is determined by maps $\mathrm{Ind}_{\mathfrak{S}_p \times \mathfrak{S}_q}^{\mathfrak{S}_n} \big( A_p \boxtimes^{\mathrm{ext}} A_q \big) \longrightarrow {A}_{p+q}$ and, by the adjunction property of the induced equivariance functor, equivalently by $(\mathfrak{S}_p \times \mathfrak{S}_q)$-equivariant bundle maps 
\[ m_{p,q}^{\boxtimes}: A_p \boxtimes^{\mathrm{ext}} A_q \longrightarrow A_{p+q}. \]
These may be seen as ordinary two-slot multiplication maps: an element $v$ above a configuration $(x_1, \ldots, x_p)$ and an element $w$ above a configuration $(y_1, \ldots, y_q)$ are multiplied and placed above the concatenated configuration $(x_1, \ldots, x_p, y_1, \ldots, y_q)$. The unit is then $\mathds{1}_0^{\boxtimes}$ which lies only over the vacuum state $\emptyset$. Thus, the $\boxtimes$-multiplication combines not only the algebra elements but also the underlying configurations of points. Its associativity is expressed by 
\[ m^{\boxtimes}_{p+q, r} \circ \Big( m_{p,q}^{\boxtimes} \boxtimes^{\mathrm{ext}} \mathrm{Id}_{A_r} \Big) = m_{p,q+r}^{\boxtimes} \circ \Big( \mathrm{Id}_{A_p} \boxtimes^{\mathrm{ext}} m_{q,r}^{\boxtimes} \Big) \]
as an $(\mathfrak{S}_p \times \mathfrak{S}_q \times \mathfrak{S}_r)$-equivariant map covering the concatenation map $M^p \times M^q \times M^r \rightarrow M^{p+q+r}$. 

We say that $\mathbf{A}$ is commutative if $m^{\boxtimes} \circ \beta^{\boxtimes} = m^{\boxtimes}$, where $\beta^{\boxtimes}$ is the $\boxtimes$-braiding of $\mathcal{VB}_{\mathfrak{S}_{\bullet}}(M^{\bullet})$. Equivalently, if $a$ lies above the configuration $(x_1, \ldots, x_p)$ and $b$ lies above the configuration $(y_1, \ldots, y_q)$, we have
\[ m_{p,q}^{\boxtimes}(a,b) = \tau_{p,q} \cdot m_{q,p}^{\boxtimes}(b,a) \]
where $\tau_{p,q}$ is the block permutation $\begin{bmatrix} q+1 & \ldots & q+p &|& 1 & \ldots & q \end{bmatrix}$ sending the configuration $(y_1, \ldots, y_q, x_1, \ldots, x_p)$ to $(x_1, \ldots, x_p, y_1, \ldots, y_q)$. One may likewise consider graded-commutative $\boxtimes$-algebras, for which the braiding is replaced by $\beta^{\boxtimes}_{\mathrm{sign}}$, so that 
\[ m_{p,q}^{\boxtimes}(a,b) = (-1)^{|a| |b|} \tau_{p,q} \cdot m_{q,p}^{\boxtimes}(b,a). \]
A $\boxtimes$-morphism of equivariant $\boxtimes$-algebras is an equivariant bundle map $f: \mathbf{A} \longrightarrow \mathbf{B}$ that is compatible with the $\boxtimes$-multiplication and the $\boxtimes$-unit in the usual sense. Equivalently, it is given degreewise by equivariant $\boxtimes$-algebra bundle maps $f_n: A_n \longrightarrow B_n$ such that 
\[ f_n(\sigma \cdot a) = \sigma \cdot f_n(a) \qquad \text{and} \qquad f_{p+q} \big( m_{p,q}^{\boxtimes}(a,b) \big) = m_{p,q}^{\boxtimes} \big( f_p(a), f_q(b) \big). \]
We will denote by: \begin{itemize} 
\item[$\bullet$] $\mathbf{Alg}^{\boxtimes} \big( \mathcal{VB}_{\mathfrak{S}_{\bullet}}(M^{\bullet}) \big)$ the category of equivariant $\boxtimes$-algebras.
\item[$\bullet$] $\mathbf{CAlg}^{\boxtimes} \big( \mathcal{VB}_{\mathfrak{S}_{\bullet}}(M^{\bullet}) \big)$ the full subcategory of commutative ones.
\item[$\bullet$] $\mathbf{CAlg}_{\mathrm{sgn}}^{\boxtimes} \big( \mathcal{VB}_{\mathfrak{S}_{\bullet}}(M^{\bullet}) \big)$ the full subcategory of graded-commutative ones.
\end{itemize}
In contrast with the Hadamard product, the Cauchy product combines both the algebra elements and the underlying configurations of points in $M$.

\begin{Example}
Let $(\mathbf{A}_1, m_1^{\boxtimes}, u_1^{\boxtimes})$ and $(\mathbf{A}_2, m_2^{\boxtimes}, u_2^{\boxtimes})$ be commutative $\boxtimes$-algebras. \begin{itemize}
\item[$\bullet$] The object $\mathbf{A}_1 \boxtimes \mathbf{A}_2 =: \mathbf{B}$ is naturally a commutative $\boxtimes$-algebra with multiplication $m : (\mathbf{A}_1 \boxtimes \mathbf{A}_2) \boxtimes (\mathbf{A}_1 \boxtimes \mathbf{A}_2) \longrightarrow \mathbf{A}_1 \boxtimes \mathbf{A}_2$ obtained as follows:
\[ m_{p,q}^{\boxtimes} : \mathbf{B}_p \boxtimes^{\mathrm{ext}} \mathbf{B}_q \longrightarrow \mathbf{B}_{p+q}. \]
These maps themselves are obtained by the maps 
\[ m_{p_1, p_2; q_1, q_2} : \Big( (\mathbf{A}_1)_{p_1} \boxtimes^{\mathrm{ext}} (\mathbf{A}_2)_{p_2} \Big) \boxtimes^{\mathrm{ext}} \Big( (\mathbf{A}_1)_{q_1} \boxtimes^{\mathrm{ext}} (\mathbf{A}_2)_{q_2} \Big) \longrightarrow (\mathbf{A}_1)_{p_1 + q_1} \boxtimes^{\mathrm{ext}} (\mathbf{A}_2)_{p_2 + q_2} \] 
with $p = p_1 + p_2$ and $q = q_1 + q_2$, defined by 
\[ m_{p_1, p_2; q_1, q_2} \big( (a_1 \boxtimes a_2) \boxtimes (b_1 \boxtimes b_2) \big) = m_{1;p_1,q_1}^{\boxtimes}(a_1, b_1) \boxtimes m_{2;p_2,q_2}^{\boxtimes}(a_2, b_2) \]
and unit $u : \mathbf{I}_{\boxtimes} \longrightarrow \mathbf{A}_1 \boxtimes \mathbf{A}_2$ given by
\[ u_0 \big( \mathds{1}_{0}^{\boxtimes} \big) = u_{1;0}^{\boxtimes} \big( \mathds{1}_0^{\boxtimes} \big) \boxtimes u_{2;0}^{\boxtimes} \big( \mathds{1}_0^{\boxtimes} \big). \]
In fact, $\boxtimes$ is the coproduct of the category $\mathbf{CAlg}^{\boxtimes} \big( \mathcal{VB}_{\mathfrak{S}_{\bullet}}(M^{\bullet}) \big)$, since in any symmetric monoidal category, the tensor product is the coproduct in the category of commutative monoid objects (see \cite{johnstone}, C1.1).
\item[$\bullet$] The object $\mathbf{A}_1 \otimes \mathbf{A}_2$ has a natural $\boxtimes$-algebra structure, whose $\boxtimes$-multiplication is defined using the interchange map $\zeta$ by
\[ m : (\mathbf{A}_1 \otimes \mathbf{A}_2) \boxtimes (\mathbf{A}_1 \otimes \mathbf{A}_2) \overset{\zeta} \longrightarrow (\mathbf{A}_1 \boxtimes \mathbf{A}_1) \otimes (\mathbf{A}_2 \boxtimes \mathbf{A}_2) \overset{m_{1}^{\boxtimes} \otimes m_{2}^{\boxtimes}} \longrightarrow \mathbf{A}_1 \otimes \mathbf{A}_2. \] 
Concretely, for each $p,q \in \mathbb{N}$, the elementary multiplication maps
\[ m_{p,q} : \Big( (\mathbf{A}_1)_p \otimes (\mathbf{A}_2)_p \Big) \boxtimes^{\mathrm{ext}} \Big( (\mathbf{A}_1)_q \otimes (\mathbf{A}_2)_q \Big) \longrightarrow (\mathbf{A}_1)_{p+q} \otimes (\mathbf{A}_2)_{p+q} \]
defined by 
\[ m_{p,q} \big( (a_1 \otimes a_2) \boxtimes (b_1 \otimes b_2) \big) = m_{1;p,q}^{\boxtimes}(a_1, b_1) \otimes m_{2;p,q}^{\boxtimes}(a_2, b_2) \]
and unit $u : \mathbf{I}_{\boxtimes} \overset{\Delta} \longrightarrow \mathbf{I}_{\boxtimes} \otimes \mathbf{I}_{\boxtimes} \overset{u_1^{\boxtimes} \otimes u_2^{\boxtimes}} \longrightarrow \mathbf{A}_1 \otimes \mathbf{A}_2$ given by 
\[ u \big( \mathds{1}^{\boxtimes} \big) = u_1^{\boxtimes}(\mathds{1}^{\boxtimes}) \otimes u_2^{\boxtimes}(\mathds{1}^{\boxtimes}). \]
In fact, the bifunctor $\otimes$ lifts from $\mathcal{VB}_{\mathfrak{S}_{\bullet}}(M^{\bullet})$ to $\mathbf{Alg}^{\boxtimes} \big( \mathcal{VB}_{\mathfrak{S}_{\bullet}}(M^{\bullet}) \big)$ endowing the latter with a symmetric monoidal structure.
\end{itemize}
\end{Example}

\subsubsection{Equivariant Cauchy Tensor Algebra Bundles} \label{3.2.1}

\begin{Definition} 
The \textbf{equivariant Cauchy tensor algebra bundle} generated by $\mathbf{V} \in \mathcal{VB}_{\mathfrak{S}_{\bullet}}(M^{\bullet})$ is defined by 
\[ \mathbf{T}^{\boxtimes}(\mathbf{V}) = \bigoplus \limits_{n \in \mathbb{N}} \mathbf{V}^{\boxtimes n} \qquad \text{with} \qquad \mathbf{V}^{\boxtimes 0} = \mathbf{I}_{\boxtimes}. \]
\end{Definition}

There is an obvious injection map $\mathbf{V} \hookrightarrow \mathbf{T}^{\boxtimes}(\mathbf{V})$, and one has 
\[ \big[ \mathbf{T}^{\boxtimes}(\mathbf{V}) \big]_k = \bigoplus \limits_{n \in \mathbb{N}} (\mathbf{V}^{\boxtimes n})_k \qquad \text{where} \qquad (\mathbf{V}^{\boxtimes n})_k =\bigoplus \limits_{p_1 + \ldots + p_n = k} \mathrm{Ind}_{\mathfrak{S}_{p_1} \times \ldots \times \mathfrak{S}_{p_n}}^{\mathfrak{S}_k} \Big( {V}_{p_1} \boxtimes^{\mathrm{ext}} \ldots \boxtimes^{\mathrm{ext}} {V}_{p_n} \Big). \]
For instance, over a point $(x,y) \in M^2$, the fibre of $\mathbf{V}^{\boxtimes 3}$ is 
\[ \begin{array}{lll} (\mathbf{V}^{\boxtimes 3})_{xy} &=& (V_{\emptyset} \otimes V_x \otimes V_y) \oplus (V_{\emptyset} \otimes V_y \otimes V_x) \oplus (V_{\emptyset} \otimes V_{xy} \otimes V_{\emptyset}) \oplus (V_{\emptyset} \otimes V_{\emptyset} \otimes V_{xy}) \\ 
&& \oplus (V_x \otimes V_{\emptyset} \otimes V_y) \oplus (V_y \otimes V_{\emptyset} \otimes V_x) \oplus (V_x \otimes V_y \otimes V_{\emptyset}) \oplus (V_y \otimes V_x \otimes V_{\emptyset}) \\ 
&& \oplus (V_{xy} \otimes V_{\emptyset} \otimes V_{\emptyset}). \end{array} \]
In general, $(\mathbf{V}^{\boxtimes n})_k$ contains $n^k$ terms.

The multiplication on $\mathbf{T}^{\boxtimes}(\mathbf{V})$ is described as follows. Fix base configurations $(x_1, \ldots, x_p) \in M^p$ and $(y_1, \ldots, y_q) \in M^q$. An element of $\mathbf{T}^{\boxtimes}(\mathbf{V})_{(x_1, \ldots, x_p)}$ has the form $v = v_1 \otimes \ldots \otimes v_r$, where each $v_i$ lives over some subtuple of $(x_1, \ldots, x_p)$ of length $n_i$. Similarly, an element of $\mathbf{T}^{\boxtimes}(\mathbf{V})_{(y_1, \ldots, y_q)}$ has the form $w = w_1 \otimes \ldots \otimes w_s$, where each $w_j$ lives over some subtuple of $(y_1, \ldots, y_q)$ of length $m_j$. Then,
\[ m_{p,q}^{\boxtimes}(v,w) = v_1 \otimes \ldots \otimes v_r \otimes w_1 \otimes \ldots \otimes w_s \]
lying above the concatenated configuration $(x_1, \ldots, x_p, y_1, \ldots, y_q)$ in the block pattern $(n_1 | \ldots | n_r | m_1 | \ldots | m_s)$. Thus, the $\boxtimes$-multiplication of $\mathbf{T}^{\boxtimes}(\mathbf{V})$ concatenates both vectors and base points. As before, we denote $m^{\boxtimes}(v,w)$ simply by $v \boxtimes w$.

\begin{Theorem} 
The equivariant Cauchy tensor algebra $\mathbf{T}^{\boxtimes}(\mathbf{V})$ is the free equivariant $\boxtimes$-algebra generated by $\mathbf{V}$. More precisely, for every equivariant $\boxtimes$-algebra bundle $\mathbf{A} \in \mathbf{Alg}^{\boxtimes} \big( \mathcal{VB}_{\mathfrak{S}_{\bullet}}(M^{\bullet}) \big)$ and every equivariant bundle map $f: \mathbf{V} \rightarrow \mathbf{A}$, there exists a unique equivariant $\boxtimes$-algebra bundle morphism $\widetilde{f}: \mathbf{T}^{\boxtimes}(\mathbf{V}) \longrightarrow \mathbf{A}$ such that the following diagram commutes 
\[
\xymatrix{
\mathbf{V} \ar[rr]^{\iota} \ar[rrd]_{f} && \mathbf{T}^{\boxtimes}(\mathbf{V}) \ar@{.>}[d]^{\exists ! \widetilde{f}} \\ 
&& \mathbf{A}
}
\]
\end{Theorem}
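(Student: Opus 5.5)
The approach is the standard one for free monoids in a monoidal category whose tensor product preserves coproducts, made concrete through the Frobenius reciprocity of the induction functor. The whole argument will be phrased in terms of the elementary maps $m_{p,q}^{\boxtimes}$: by the Corollary on Frobenius reciprocity, a $\mathfrak{S}_k$-equivariant map out of $(\mathbf{V}^{\boxtimes n})_k$ is the same as a family of $(\mathfrak{S}_{p_1}\times\cdots\times\mathfrak{S}_{p_n})$-equivariant maps out of $V_{p_1}\boxtimes^{\mathrm{ext}}\cdots\boxtimes^{\mathrm{ext}}V_{p_n}$, one for each composition $p_1+\cdots+p_n=k$. This iterated form uses Proposition \ref{prop6} to identify iterated induction with induction from the Young subgroup.

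\textbf{Existence.} Given $f:\mathbf{V}\to\mathbf{A}$, I define $\widetilde{f}$ on $\mathbf{V}^{\boxtimes 0}=\mathbf{I}_{\boxtimes}$ as the unit $u^{\boxtimes}$. On the summand $V_{p_1}\boxtimes^{\mathrm{ext}}\cdots\boxtimes^{\mathrm{ext}}V_{p_n}$ I take the iterated product
\[ \widetilde{f}(v_1\boxtimes\cdots\boxtimes v_n)=m^{\boxtimes}_{p_1+\cdots+p_{n-1},p_n}\Big(\cdots m^{\boxtimes}_{p_1,p_2}\big(f_{p_1}(v_1),f_{p_2}(v_2)\big)\cdots,f_{p_n}(v_n)\Big)\in A_{p_1+\cdots+p_n}. \]
This map is $(\mathfrak{S}_{p_1}\times\cdots\times\mathfrak{S}_{p_n})$-equivariant, because each $f_{p_i}$ is $\mathfrak{S}_{p_i}$-equivariant and each $m^{\boxtimes}_{p,q}$ is $(\mathfrak{S}_p\times\mathfrak{S}_q)$-equivariant. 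Frobenius reciprocity then extends it uniquely to a $\mathfrak{S}_k$-equivariant map on the induced bundle. Summing over $n$ and over compositions gives a morphism $\widetilde{f}:\mathbf{T}^{\boxtimes}(\mathbf{V})\to\mathbf{A}$, and $\widetilde{f}\circ\iota=f$ holds by construction (the case $n=1$).

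\textbf{Multiplicativity.} Multiplicativity has to be checked only on elementary pieces. Again by adjunction, two $\mathfrak{S}_{p+q}$-equivariant maps out of $(\mathbf{T}^{\boxtimes}(\mathbf{V})\boxtimes\mathbf{T}^{\boxtimes}(\mathbf{V}))_{p+q}$ agree as soon as they agree on the restricted external pieces $V_{p_1}\boxtimes^{\mathrm{ext}}\cdots\boxtimes^{\mathrm{ext}}V_{p_r}\boxtimes^{\mathrm{ext}}V_{q_1}\boxtimes^{\mathrm{ext}}\cdots\boxtimes^{\mathrm{ext}}V_{q_s}$. On these pieces the multiplication of $\mathbf{T}^{\boxtimes}(\mathbf{V})$ is plain concatenation, so the identity $\widetilde{f}(v\boxtimes w)=m^{\boxtimes}(\widetilde{f}(v),\widetilde{f}(w))$ reduces to the generalized associativity of $m^{\boxtimes}$. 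That generalized associativity follows from the stated associativity law $m^{\boxtimes}_{p+q,r}\circ(m^{\boxtimes}_{p,q}\boxtimes^{\mathrm{ext}}\mathrm{Id})=m^{\boxtimes}_{p,q+r}\circ(\mathrm{Id}\boxtimes^{\mathrm{ext}}m^{\boxtimes}_{q,r})$ by the usual induction on the number of factors. Compatibility with the units holds by definition on $\mathbf{V}^{\boxtimes 0}$.

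\textbf{Uniqueness.} Let $g$ be any $\boxtimes$-algebra morphism with $g\circ\iota=f$. Then $g$ equals $u^{\boxtimes}$ on $\mathbf{V}^{\boxtimes 0}$, and on each elementary external piece it must equal the iterated product of the $f_{p_i}$, since those pieces are iterated $\boxtimes$-products of elements of $\iota(\mathbf{V})$. Because $g$ is $\mathfrak{S}_k$-equivariant, Frobenius reciprocity says it is determined by its restriction to these pieces, so $g=\widetilde{f}$.

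The main obstacle is the bookkeeping in the multiplicativity step. I must verify that the identification $(\mathbf{V}^{\boxtimes r})\boxtimes(\mathbf{V}^{\boxtimes s})\cong\mathbf{V}^{\boxtimes(r+s)}$ coming from the associator is compatible with the iterated-induction isomorphisms of Proposition \ref{prop6}. Concretely, composing the unshuffle representatives for $(\mathfrak{S}_p\times\mathfrak{S}_q)\setminus\mathfrak{S}_{p+q}$ with those for the inner Young subgroups should give representatives for the full Young subgroup, and the equivariance corrections should match. I would handle this by working entirely at the level of the adjoint (restricted) maps, where no coset representatives appear, and by invoking uniqueness of adjoints instead of comparing explicit formulas.
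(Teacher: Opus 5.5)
Your proposal is correct and is essentially the paper's argument: the paper just invokes the standard free-monoid construction $\mathbf{T}^{\boxtimes}(\mathbf{V})=\bigoplus_n \mathbf{V}^{\boxtimes n}$, where $\widetilde{f}$ is $u^{\boxtimes}$ on $\mathbf{V}^{\boxtimes 0}$ and the iterated multiplication composed with $f^{\boxtimes n}$ on $\mathbf{V}^{\boxtimes n}$, which is exactly the map you write down. Your Frobenius-reciprocity reduction of existence, multiplicativity and uniqueness to the identity summands $V_{p_1}\boxtimes^{\mathrm{ext}}\cdots\boxtimes^{\mathrm{ext}}V_{p_n}$ makes explicit the distributivity and coherence facts that the citation leaves implicit. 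One small addition is needed: the Young-subgroup form also uses that induction commutes with $\boxtimes^{\mathrm{ext}}$, i.e. $\mathrm{Ind}_{H}^{G}(X)\boxtimes^{\mathrm{ext}}W\cong\mathrm{Ind}_{H\times K}^{G\times K}(X\boxtimes^{\mathrm{ext}}W)$ for $W$ a $K$-equivariant bundle, because Proposition \ref{prop6} only treats a fixed base.
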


\begin{proof} 
This is the standard construction of free monoids in a symmetric monoidal category described in \cite{saunders}.
\end{proof}

The functor $\mathbf{T}^{\boxtimes}$ is canonically oplax monoidal with respect to $\otimes$. More precisely, it is an oplax monoidal functor 
\[ \mathbf{T}^{\boxtimes}: \big( \mathcal{VB}_{\mathfrak{S}_{\bullet}}(M^{\bullet}), \otimes, \mathbf{I}_{\otimes} \big) \longrightarrow \Big( \mathbf{Alg}^{\boxtimes} \big( \mathcal{VB}_{\mathfrak{S}_{\bullet}}(M^{\bullet}) \big), \otimes, \mathbf{I}_{\otimes} \Big). \]
This means that there are natural morphisms
\[ \mathbf{T}^{\boxtimes}(\mathbf{V} \otimes \mathbf{W}) \longrightarrow \mathbf{T}^{\boxtimes}(\mathbf{V}) \otimes \mathbf{T}^{\boxtimes}(\mathbf{W}) \qquad \text{and} \qquad \mathbf{T}^{\boxtimes}(\mathbf{I}_{\otimes}) \longrightarrow \mathbf{I}_{\otimes}. \]
Indeed, the object $\mathbf{T}^{\boxtimes}(\mathbf{V}) \otimes \mathbf{T}^{\boxtimes}(\mathbf{W})$ is an equivariant $\boxtimes$-algebra, and there is a canonical bundle map given by the inclusions
\[ \mathbf{V} \otimes \mathbf{W} \longrightarrow \mathbf{T}^{\boxtimes}(\mathbf{V}) \otimes \mathbf{T}^{\boxtimes}(\mathbf{W}). \]
Since $\mathbf{T}^{\boxtimes}(\mathbf{V} \otimes \mathbf{W})$ is the free equivariant $\boxtimes$-algebra generated by $\mathbf{V} \otimes \mathbf{W}$, this map extends uniquely to a $\boxtimes$-algebra morphism 
\[ \mathbf{T}^{\boxtimes}(\mathbf{V} \otimes \mathbf{W}) \longrightarrow \mathbf{T}^{\boxtimes}(\mathbf{V}) \otimes \mathbf{T}^{\boxtimes}(\mathbf{W}). \]
This morphism is generally not an isomorphism, even for local vector bundles.

When $\mathbf{V}$ is local, the description simplifies considerably. If $\mathbf{V}$ is a local vector bundle, so that $V_k = 0$ unless $k=1$, and we write $V := V_1$, thereby identifying $\mathbf{V}$ with $V$, we therefore use the non-bold notation $V$ for local vector bundles. Then $\mathbf{T}^{\boxtimes}(V)_k$ is non-zero only for the $n=k$-summand. That is,
\[ \big[ \mathbf{T}^{\boxtimes}(V) \big]_k = (V^{\boxtimes k})_k = \mathrm{Ind}_{\mathfrak{S}_1 \times \ldots \times \mathfrak{S}_1}^{\mathfrak{S}_k} \big( V \boxtimes^{\mathrm{ext}} \ldots \boxtimes^{\mathrm{ext}} V \big) = \bigoplus \limits_{\sigma \in \mathfrak{S}_k} \sigma^* \Big( V^{\boxtimes^{\mathrm{ext}} k} \Big). \]
In particular, the fibre over a point $(x_1, \ldots, x_k) \in M^k$ is 
\[ \Big[ \mathbf{T}^{\boxtimes}(V) \Big]_{x_1, \ldots, x_k} = \bigoplus \limits_{\sigma \in \mathfrak{S}_k} V_{x_{\sigma(1)}} \otimes_{\mathbb{K}} \ldots \otimes_{\mathbb{K}} V_{x_{\sigma(k)}}. \]

\subsubsection{Equivariant Cauchy Symmetric Algebra Bundles} \label{3.2.2}

Similarly to the construction of the Hadamard symmetric algebra through coinvariants in Section \ref{3.1.2}, we carry out the same type of construction for the Cauchy symmetric algebra bundle. For a given $\mathbf{V} \in \mathcal{VB}_{\mathfrak{S}_{\bullet}}(M^{\bullet})$, recall that $\big[ \mathbf{T}^{\boxtimes}(\mathbf{V}) \big]_k$ is a graded $\mathfrak{S}_k$-equivariant bundle. As in the Hadamard case, we want to define a $\mathfrak{S}_n$-action on $(\mathbf{V}^{\boxtimes n})_k$ that permutes the factors. This is achieved via the following combinatorial construction. 

To this end, let us divide $\llbracket 1, k \rrbracket$ into $n$ packs of lengths $i_1, \ldots, i_n$. One may think of this as partitioning a deck of $k$ cards into $n$ ordered piles of cards. 
\begin{center}
\begin{tikzpicture}[every node/.style={font=\normalsize},baseline]
\node (1) at (0,0) {$1$};
\node (dots1) at (1,0) {$\cdots$};
\node (i1) at (2,0) {$i_1$};

\node (bar1) at (3.5,0) {$\;\Big|\;$};

\node (2) at (5,0) {$i_1 + 1$};
\node (dots2) at (6.5,0) {$\cdots$};
\node (i2) at (8,0) {$i_1 + i_2$}; 

\node (bar2) at (10,0) {$\;\Big|\;$};

\node (dots3) at (11,0) {$\cdots$};

\node (bar3) at (12,0) {$\;\Big|\;$}; 

\node (ik-1) at (14,0) {$i_1 + i_2 + \ldots + i_{n-1} + 1$};
\node (bark) at (17,0) {$\cdots$};
\node (ik) at (18,0) {$k$};

% Cercles au-dessus
\node[circle,draw,minimum size=3mm] at (1,1) {$1$};
\node[circle,draw,minimum size=3mm] at (6.6,1) {$2$};
\node[circle,draw,minimum size=3mm] at (11,1) {$\cdots$};
\node[circle,draw,minimum size=3mm] at (17,1) {$n$};
\end{tikzpicture}
\end{center}
For a given permutation $\sigma \in \mathfrak{S}_n$, we would like to consider a permutation of the piles without changing the order within each pile. This leads to the permutation
\begin{center} 
\begin{tikzpicture}[every node/.style={font=\normalsize},baseline]
\node (tau) at (0,0) {$\tau_{\sigma}^{(i_1, \ldots, i_n)}$};
\node (equal) at (1,0) {$=$};
\node[circle,draw,minimum size = 3mm] at (2,0) {$\sigma(1)$};
\node (bar) at (3,0) {$\; \Big| \;$};
\node[circle,draw,minimum size = 3mm] at (4,0) {$\sigma(2)$};
\node (bar2) at (5,0) {$\; \Big| \;$};
\node (dots) at (6,0) {$\cdots$};
\node (bar3) at (7,0) {$\; \Big| \;$};
\node[circle,draw,minimum size = 3mm] at (8,0) {$\sigma(n)$};
\end{tikzpicture}
\end{center} 
These block permutations also appear in the standard description of the symmetric group action on tensor products in colored operad theory; see \cite[Chapter 11]{yau} for details. Some examples, where we denote $\sigma = \begin{bmatrix} \sigma(1) & \ldots& \sigma(n) \end{bmatrix}$ in matrix notation, are as follows:
\[ \begin{array}{lll} \sigma = \begin{bmatrix} 3 & 2 & 4 & 1 \end{bmatrix} \in \mathfrak{S}_4, &\qquad& \tau_{\sigma}^{(1,1,2,2)} = \left[ \begin{array}{cc|c|cc|c} 3 & 4 & 2 & 5 & 6 & 1 \end{array} \right] \in \mathfrak{S}_6; \\
\sigma = \begin{bmatrix} 2 & 3 & 1 \end{bmatrix} \in \mathfrak{S}_3, & \qquad & \tau_{\sigma}^{(1,2,3)} = \left[ \begin{array}{cc|ccc|c} 2& 3 & 4 & 5 & 6 & 1 \end{array} \right] \in \mathfrak{S}_6; \\ 
\sigma = \begin{bmatrix} 2&3&1 \end{bmatrix} \in \mathfrak{S}_3, & \qquad & \tau_{\sigma}^{(2,3,1)} = \left[ \begin{array}{ccc|c|cc} 3&4&5 & 6 & 1 & 2 \end{array} \right] \in \mathfrak{S}_6; \\ 
\sigma = \begin{bmatrix} 2&1 \end{bmatrix} \in \mathfrak{S}_2, &\qquad& \tau_{\sigma}^{(4,2)} = \left[ \begin{array}{cc|cccc} 5 & 6 & 1 & 2 & 3 & 4 \end{array} \right] \in \mathfrak{S}_6. \end{array} \]
In the extreme case where the $k$ cards are divided into $2$ decks $(p,q)$, we recover the earlier permutation $\tau_{p,q}$ used for the $\boxtimes$-braiding. At the opposite extreme, when the $k$ cards are divided into exactly $k$ decks, so that necessarily $(i_1, \ldots, i_k) = (1, \ldots, 1)$, one has $\tau_{\sigma}^{(1, \ldots, 1)} = \sigma$ for all $\sigma \in \mathfrak{S}_k$.

Given indices $i_1, \ldots, i_n$ such that $i_1 + \ldots + i_n = k$, we denote by $\mathrm{Sh}(i_1, \ldots, i_n)$ the set of $(i_1, \ldots, i_n)$-shuffles, that is, permutations $\alpha \in \mathfrak{S}_k$ that preserve the relative order within each consecutive block $\left\{ i_1 + \ldots + i_{r-1} + 1, \ldots, i_1 + \ldots + i_r \right\}$. The set of $(i_1, \ldots, i_n)$-shuffles is a set of full representatives of the left quotient $\mathfrak{S}_K/ \big( \mathfrak{S}_{i_1} \times \ldots \times \mathfrak{S}_{i_n} \big)$.

\begin{Proposition} 
Given indices $i_1, \ldots, i_n$ such that $i_1 + \ldots + i_n = k$ and a permutation $\sigma \in \mathfrak{S}_n$, the map 
\[ \alpha \in \mathrm{Sh}(i_1, \ldots, i_n) \longmapsto \alpha \circ \tau_{\sigma}^{(i_1, \ldots, i_n)} \in \mathrm{Sh} \big( i_{\sigma(1)}, \ldots, i_{\sigma(n)} \big) \]
is a bijection of sets. Moreover, we have the relation 
\[ \tau_{\sigma}^{(i_1, \ldots, i_n)} \circ \tau_{\sigma'}^{(i_{\sigma(1)}, \ldots, i_{\sigma(n)})} = \tau_{\sigma \circ \sigma'}^{(i_1, \ldots, i_n)}. \]
In particular, we have 
\[ \big( \tau_{\sigma}^{(i_1, \ldots, i_n)} \big)^{-1} = \tau_{\sigma^{-1}}^{(i_{\sigma(1)}, \ldots, i_{\sigma(n)})}. \]
\end{Proposition}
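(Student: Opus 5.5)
The plan is to replace the pictorial definition of $\tau_{\sigma}^{(i_1, \ldots, i_n)}$ by an explicit characterization, and then derive every claim from that characterization. For a composition $\mathbf{i} = (i_1, \ldots, i_n)$ of $k$, set $J_s = i_1 + \ldots + i_{s-1}$. The $s$-th block of $\mathbf{i}$ is then $B_s(\mathbf{i}) = \llbracket J_s + 1, J_s + i_s \rrbracket$. Write $\mathbf{i}_{\sigma} = (i_{\sigma(1)}, \ldots, i_{\sigma(n)})$.

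Reading the matrix notation $\big[ \sigma(1)\text{-pile} \,|\, \sigma(2)\text{-pile} \,|\, \cdots \big]$, the permutation $\tau_{\sigma}^{(\mathbf{i})}$ is the unique element of $\mathfrak{S}_k$ satisfying
\[ \text{for every } r \in \llbracket 1, n \rrbracket, \qquad \tau_{\sigma}^{(\mathbf{i})} \text{ maps } B_r(\mathbf{i}_{\sigma}) \text{ increasingly onto } B_{\sigma(r)}(\mathbf{i}). \qquad (\star) \]
One checks this on the listed examples. Uniqueness holds because a bijection is determined by the images of the blocks, and an increasing bijection between two intervals of the same length is unique. I expect that pinning down this convention carefully, in particular deciding whether it is $\tau$ or $\tau^{-1}$ that sends blocks of $\mathbf{i}_{\sigma}$ onto blocks of $\mathbf{i}$, is the only real obstacle. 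After that, everything is formal.

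For the bijection, take $\alpha \in \mathrm{Sh}(\mathbf{i})$, so that $\alpha$ is increasing on each $B_s(\mathbf{i})$. By $(\star)$, the restriction of $\alpha \circ \tau_{\sigma}^{(\mathbf{i})}$ to $B_r(\mathbf{i}_{\sigma})$ is the composite of an increasing map onto $B_{\sigma(r)}(\mathbf{i})$ with the increasing map $\alpha|_{B_{\sigma(r)}(\mathbf{i})}$. It is therefore increasing, so $\alpha \circ \tau_{\sigma}^{(\mathbf{i})} \in \mathrm{Sh}(\mathbf{i}_{\sigma})$. Symmetrically, $(\tau_{\sigma}^{(\mathbf{i})})^{-1}$ maps each $B_{\sigma(r)}(\mathbf{i})$ increasingly onto $B_r(\mathbf{i}_{\sigma})$. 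Hence $\beta \mapsto \beta \circ (\tau_{\sigma}^{(\mathbf{i})})^{-1}$ sends $\mathrm{Sh}(\mathbf{i}_{\sigma})$ into $\mathrm{Sh}(\mathbf{i})$, and it is clearly a two-sided inverse of the first map. This gives the bijection. Alternatively, one could conclude from injectivity together with the equal cardinalities $k!/(i_1! \cdots i_n!)$.

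For the composition rule, note that $(\mathbf{i}_{\sigma})_{\sigma'} = (i_{\sigma\sigma'(1)}, \ldots, i_{\sigma\sigma'(n)}) = \mathbf{i}_{\sigma\sigma'}$. Applying $(\star)$ twice:
\begin{itemize}
\item $\tau_{\sigma'}^{(\mathbf{i}_{\sigma})}$ maps $B_r(\mathbf{i}_{\sigma\sigma'})$ increasingly onto $B_{\sigma'(r)}(\mathbf{i}_{\sigma})$;
\item $\tau_{\sigma}^{(\mathbf{i})}$ then maps $B_{\sigma'(r)}(\mathbf{i}_{\sigma})$ increasingly onto $B_{\sigma\sigma'(r)}(\mathbf{i})$.
\end{itemize}
The composite therefore satisfies the defining property $(\star)$ of $\tau_{\sigma\sigma'}^{(\mathbf{i})}$. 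By the uniqueness noted above, it equals $\tau_{\sigma\sigma'}^{(\mathbf{i})}$.

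For the formula for the inverse, specialise the composition rule to $\sigma' = \sigma^{-1}$. Observe from $(\star)$ that $\tau_{\mathrm{Id}}^{(\mathbf{i})} = \mathrm{Id}$. This gives $\tau_{\sigma}^{(\mathbf{i})} \circ \tau_{\sigma^{-1}}^{(\mathbf{i}_{\sigma})} = \mathrm{Id}$, which is the claimed expression for $\big( \tau_{\sigma}^{(\mathbf{i})} \big)^{-1}$, since we are working in a finite group.
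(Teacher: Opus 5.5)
Your proof is correct and follows the same block-reordering argument as the paper. Shuffles are the permutations increasing on each block, and $\tau_{\sigma}^{(i_1,\ldots,i_n)}$ transports blocks onto blocks order-preservingly; the composition rule then comes from composing these block transports, and the inverse formula from setting $\sigma' = \sigma^{-1}$. Your characterization $(\star)$, which does match all four listed examples, together with the uniqueness observation makes rigorous the direction convention that the paper only states informally, and your explicit inverse $\beta \mapsto \beta \circ (\tau_{\sigma}^{(i_1,\ldots,i_n)})^{-1}$ fills in the paper's ``bijectivity is immediate''.
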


\begin{proof} 
Let us partition $\llbracket 1, k \rrbracket$ into consecutive blocks 
\[ B_1 = \left\{1, \ldots, i_1 \right\}, \qquad B_2 = \left\{ i_1 + 1, \ldots, i_1 + i_2 \right\}, \qquad \ldots, \qquad B_n = \left\{ i_1 + \ldots + i_{n-1} + 1, \ldots, k \right\}. \]
Each block $B_r$ has length $i_r$ and an element of $\mathrm{Sh}(i_1, \ldots, i_n)$ is precisely a permutation of $\mathfrak{S}_k$ that is increasing on each block $B_r$. Let $\alpha \in \mathrm{Sh}(i_1, \ldots, i_n)$ and $\sigma \in \mathfrak{S}_n$. By construction, $\tau_{\sigma}^{(i_1, \ldots, i_n)}$ reorders the blocks according to $\sigma$ while preserving the order inside each block: it moves the block $B_r$ to position $\sigma(r)$ while preserving the order inside the block. It follows that $\alpha \circ \tau_{\sigma}^{(i_1, \ldots, i_n)}$ is increasing on the reordered blocks, hence belongs to $\mathrm{Sh} \big( i_{\sigma(1)}, \ldots, i_{\sigma(n)} \big)$.

Now let $\sigma, \sigma' \in \mathfrak{S}_n$. The permutation $\tau_{\sigma'}^{\big( i_{\sigma(1)}, \ldots, i_{\sigma(n)} \big)}$ first reorders the blocks according to $\sigma'$, and then $\tau_{\sigma}^{(i_1, \ldots, i_n)}$ reorders them according to $\sigma$. Therefore the overall effect is to reorder the blocks according to $\sigma \circ \sigma'$. Since the order inside each block is preserved at every step, the resulting permutation is exactly the deck permutation $\tau_{\sigma \circ \sigma'}^{(i_1, \ldots, i_n)}$. Taking $\sigma' = \sigma^{-1}$ yields the inverse formula, and bijectivity is immediate.
\end{proof}

Since permuting the $n$ factors sends the decomposition indexed by $(i_1, \ldots, i_n)$ to the decomposition indexed by $\big( i_{\sigma(1)}, \ldots, i_{\sigma(n)} \big)$, the permutation must be accompanied by a permutation of the underlying $k$ points. This is precisely achieved by the deck permutation $\tau_{\sigma}^{(i_1, \ldots, i_n)}$. Accordingly, we define an action of $\mathfrak{S}_n$ on $(\mathbf{V}^{\boxtimes n})_k = \bigoplus \limits_{i_1 + \ldots + i_n = k} \mathrm{Ind}_{\mathfrak{S}_{i_1} \times \ldots \times \mathfrak{S}_{i_n}}^{\mathfrak{S}_k} \Big( V_{i_1} \boxtimes^{\mathrm{ext}} \ldots \boxtimes^{\mathrm{ext}} V_{i_n} \Big)$ by sending each summand indexed by $(i_1, \ldots, i_n)$ to the summand indexed by $\big( i_{\sigma(1)}, \ldots, i_{\sigma(n)} \big)$ through the map
\[ \tau_{\sigma}^{(i_1, \ldots, i_n)}: \mathrm{Ind}_{\mathfrak{S}_{i_1} \times \ldots \times \mathfrak{S}_{i_n}}^{\mathfrak{S}_k} \Big( V_{i_1} \boxtimes^{\mathrm{ext}} \ldots \boxtimes^{\mathrm{ext}} V_{i_n} \Big) \longrightarrow \mathrm{Ind}_{\mathfrak{S}_{i_{\sigma(1)}} \times \ldots \times \mathfrak{S}_{i_{\sigma(n)}}}^{\mathfrak{S}_k} \Big( V_{i_{\sigma(1)}} \boxtimes^{\mathrm{ext}} \ldots \boxtimes^{\mathrm{ext}} V_{i_{\sigma(n)}} \Big). \]
These maps glue together to a morphism
\[ \sigma: (\mathbf{V}^{\boxtimes n})_k \longrightarrow (\mathbf{V}^{\boxtimes n})_k \]
and the previous proposition shows that this indeed defines a $\mathfrak{S}_n$-action on $(\mathbf{V}^{\boxtimes n})_k$. Since $\mathfrak{S}_k$ permutes the points of the configuration through the induced equivariant structure, whereas $\mathfrak{S}_n$ permutes the labels of the factors, we obtain the following.

\begin{Lemma} 
The $\mathfrak{S}_n$ and $\mathfrak{S}_k$ actions on $(\mathbf{V}^{\boxtimes n})_k$ commute.
\end{Lemma}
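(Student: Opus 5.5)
The plan is to make both actions explicit in a single model of $(\mathbf{V}^{\boxtimes n})_k$ and check commutation on each summand. By Proposition \ref{prop6} and the associated-bundle description $\mathrm{Ind}_H^G(V) \cong (G \times V)/H$, I will identify each summand with
\[ \mathrm{Ind}_{\mathfrak{S}_{i_1} \times \ldots \times \mathfrak{S}_{i_n}}^{\mathfrak{S}_k} \Big( V_{i_1} \boxtimes^{\mathrm{ext}} \ldots \boxtimes^{\mathrm{ext}} V_{i_n} \Big) \cong \big( \mathfrak{S}_k \times (V_{i_1} \boxtimes^{\mathrm{ext}} \ldots \boxtimes^{\mathrm{ext}} V_{i_n}) \big) / H_{\underline{i}}, \qquad H_{\underline{i}} = \mathfrak{S}_{i_1} \times \ldots \times \mathfrak{S}_{i_n}. \]
In this model the $\mathfrak{S}_k$-action is left multiplication on the first factor, $\gamma \cdot [g, v] = [\gamma g, v]$. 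The $\mathfrak{S}_n$-action defined above by the deck permutation $\tau_\sigma^{(i_1, \ldots, i_n)}$ then takes the form
\[ \sigma \cdot [g, v_1 \otimes \ldots \otimes v_n] = \big[ g \, (\tau_{\sigma}^{(i_1,\ldots,i_n)})^{\pm 1}, \; v_{\sigma^{\mp}(1)} \otimes \ldots \otimes v_{\sigma^{\mp}(n)} \big], \]
with signs fixed by the convention that block $B_r$ moves to position $\sigma(r)$. The point is that this is a \emph{right} multiplication on the group coordinate, combined with the block-swap isomorphism of external tensor products. Because the external tensor factors are reordered, this block swap covers the base permutation $\tau_\sigma^{(\underline{i})}$, which is exactly what the right multiplication compensates.

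First I will check that this formula is well defined on $H_{\underline{i}}$-orbits. Replacing $(g, v)$ by $(gh, h^{-1} v)$ with $h = (h_1, \ldots, h_n)$ gives $g h \tau$ on one side. Conjugation by $\tau = \tau_\sigma^{(\underline{i})}$ sends $H_{\underline{i}}$ to $H_{\sigma \cdot \underline{i}}$ by permuting the components, so $h \tau = \tau h'$ with $h' = (h_{\sigma(1)}, \ldots)$ up to the chosen convention. The block swap intertwines the action of $h$ on $V_{i_1} \boxtimes^{\mathrm{ext}} \ldots$ with the action of $h'$ on the permuted product, so the classes agree.

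This verification is the main obstacle. The paper's $\mathfrak{S}_n$-action was defined only informally, through a chosen set of representatives (shuffles). I must confirm that, under the bijection $\alpha \mapsto \alpha \circ \tau_\sigma^{(\underline{i})}$ from the preceding proposition, that action agrees with the right-multiplication formula. On representatives this is precisely the statement that shuffles are carried to shuffles; no $H$-correction arises, since $\alpha \tau$ is again a shuffle.

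Once the model is in place, commutation is immediate. For $\gamma \in \mathfrak{S}_k$,
\[ \gamma \cdot (\sigma \cdot [g, v]) = [\gamma g \tau, \sigma v] = \sigma \cdot [\gamma g, v] = \sigma \cdot (\gamma \cdot [g, v]), \]
because left and right multiplications in $\mathfrak{S}_k$ commute, and $\gamma$ does not touch the vector part. Both sides lie in the summand indexed by $(i_{\sigma(1)}, \ldots, i_{\sigma(n)})$ and cover the same base point $\gamma x$. Summing over all decompositions $i_1 + \ldots + i_n = k$ then gives the lemma. As a sanity check, I will note that for $(1, \ldots, 1)$ this recovers the classical commutation of place permutations and value permutations on $\bigoplus_{\sigma} \sigma^* V^{\boxtimes^{\mathrm{ext}} k}$.
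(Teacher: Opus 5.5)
Your proof is correct. It supplies an argument where the paper gives essentially none: the paper's only justification is the sentence before the lemma ($\mathfrak{S}_k$ permutes the points through the induced structure, $\mathfrak{S}_n$ permutes the labels of the factors), and it leaves implicit what the maps $\tau_\sigma^{(i_1,\ldots,i_n)}$ between induced summands actually are.

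If one verified the lemma in the shuffle decomposition the paper implicitly uses, one would have to track correction terms. If $\gamma\alpha = \alpha' h$ with $\alpha,\alpha'$ shuffles and $h \in H_{\underline{i}}$, then $\gamma(\alpha\tau) = (\alpha'\tau)(\tau^{-1}h\tau)$. One must then check that the block swap intertwines $h$ with $\tau^{-1}h\tau \in H_{(i_{\sigma(1)},\ldots,i_{\sigma(n)})}$.

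Your passage to $(\mathfrak{S}_k \times E_{\underline{i}})/H_{\underline{i}}$, with $E_{\underline{i}} = V_{i_1}\boxtimes^{\mathrm{ext}}\cdots\boxtimes^{\mathrm{ext}} V_{i_n}$, via the paper's associated-bundle proposition, isolates exactly this computation as the well-definedness of right multiplication. After that, commutation is just associativity in $\mathfrak{S}_k$, because $\tau$ depends only on $\sigma$ and $\underline{i}$, not on $g$. The paper's slogan buys only intuition; your model turns it into a proof and also makes the action axiom a direct consequence of the composition rule for deck permutations.

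One thing to settle when you write it out is the $\pm$. You must always right-multiply by the deck permutation that carries the blocks of the target composition onto those of the source composition, never by its inverse. With the paper's $\tau = \tau_\sigma^{(i_1,\ldots,i_n)}$, which lists $B_{\sigma(1)},\ldots,B_{\sigma(n)}$, the correct map is $[g, v_1\otimes\cdots\otimes v_n]\mapsto [g\tau, v_{\sigma(1)}\otimes\cdots\otimes v_{\sigma(n)}]$. This is a right action; to get a left one, use $\tau^{(i_1,\ldots,i_n)}_{\sigma^{-1}}$, which differs from $(\tau_\sigma^{(i_1,\ldots,i_n)})^{-1}$. Right-multiplying a class of the $\underline{i}$-summand by $\tau^{-1}$ is not well defined, because $\tau H_{\underline{i}}\tau^{-1}$ is in general not a standard Young subgroup. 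For example, for $\underline{i}=(1,2)$ and $\sigma$ the transposition, it is the subgroup generated by $(1\,3)$.
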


Using this lemma, the subbundle ${W}_{k,n}^{\boxtimes}$ is $\mathfrak{S}_k$-equivariant and therefore the quotient 
\[ \big[ \mathbf{S}^{\boxtimes n}(\mathbf{V}) \big]_k := (\mathbf{V}^{\boxtimes n})_k \big/ W_{k,n}^{\boxtimes} \] 
is well-defined and $\mathfrak{S}_k$-equivariant. 

\begin{Definition} 
The \textbf{equivariant Cauchy symmetric algebra bundle} generated by $\mathbf{V} \in \mathcal{VB}_{\mathfrak{S}_{\bullet}}(M^{\bullet})$ is defined by 
\[ \mathbf{S}^{\boxtimes}(\mathbf{V}) = \bigoplus \limits_{n \in \mathbb{N}} \mathbf{S}^{\boxtimes n}(\mathbf{V}) \qquad \text{where} \qquad \big[ \mathbf{S}^{\boxtimes n}(\mathbf{V}) \big]_k = (\mathbf{V}^{\boxtimes n})_k \big/ W_{k,n}^{\boxtimes}. \]
\end{Definition}

As in the Hadamard case, the concatenation of tensors $\mathbf{V}^{\boxtimes p} \times \mathbf{V}^{\boxtimes q} \rightarrow \mathbf{V}^{\boxtimes (p+q)}$ induces a map 
\[ \mathbf{S}^{\boxtimes p}(\mathbf{V}) \times \mathbf{S}^{\boxtimes q}(\mathbf{V}) \longrightarrow \mathbf{S}^{\boxtimes (p+q)}(\mathbf{V}) \]
which is the multiplication map on $\mathbf{S}^{\boxtimes}(\mathbf{V})$. If $[v] \in \mathbf{S}^{\boxtimes p}(\mathbf{V})$ and $[w] \in \mathbf{S}^{\boxtimes q}(\mathbf{V})$, then we denote by $[v] \boxdot [w]$ the equivalence class of $[v \boxtimes w]$. This multiplication $\boxdot$ is commutative.

\begin{Theorem} 
The equivariant Cauchy symmetric algebra bundle $\mathbf{S}^{\boxtimes}(\mathbf{V})$ is the free equivariant commutative $\boxtimes$-algebra generated by $\mathbf{V} \in \mathcal{VB}_{\mathfrak{S}_{\bullet}}(M^{\bullet})$. More precisely, for every equivariant commutative $\boxtimes$-algebra bundle $\mathbf{A} \in \mathbf{CAlg}^{\boxtimes} \big( \mathcal{VB}_{\mathfrak{S}_{\bullet}}(M^{\bullet}) \big)$ and every equivariant bundle map $f: \mathbf{V} \rightarrow \mathbf{A}$, there exists a unique $\boxtimes$-algebra bundle morphism $\widetilde{f}: \mathbf{S}^{\boxtimes}(\mathbf{V}) \longrightarrow \mathbf{A}$ such that the following diagram commutes 
\[
\xymatrix{
\mathbf{V} \ar[rr]^{\iota} \ar[rrd]_{f} && \mathbf{S}^{\boxtimes}(\mathbf{V}) \ar@{.>}[d]^{\exists ! \widetilde{f}} \\ 
&& \mathbf{A}
}
\]
\end{Theorem}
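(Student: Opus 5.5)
We reduce the statement to the free $\boxtimes$-algebra property of $\mathbf{T}^{\boxtimes}(\mathbf{V})$ proved just before, and then check that the induced morphism factors through the coinvariants. Given a commutative $\boxtimes$-algebra $\mathbf{A}$ and an equivariant bundle map $f: \mathbf{V} \rightarrow \mathbf{A}$, the previous theorem gives a unique $\boxtimes$-algebra morphism $F: \mathbf{T}^{\boxtimes}(\mathbf{V}) \rightarrow \mathbf{A}$ extending $f$. In degree $k$ and on the summand $\mathrm{Ind}_{\mathfrak{S}_{i_1} \times \ldots \times \mathfrak{S}_{i_n}}^{\mathfrak{S}_k}(V_{i_1} \boxtimes^{\mathrm{ext}} \ldots \boxtimes^{\mathrm{ext}} V_{i_n})$, $F$ is the map corresponding by Frobenius reciprocity to the iterated multiplication
\[ m^{\boxtimes}_{i_1, \ldots, i_n} \circ (f_{i_1} \boxtimes^{\mathrm{ext}} \ldots \boxtimes^{\mathrm{ext}} f_{i_n}), \]
which is well defined by associativity. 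Write $\pi: \mathbf{T}^{\boxtimes}(\mathbf{V}) \rightarrow \mathbf{S}^{\boxtimes}(\mathbf{V})$ for the quotient map. The plan is to show that $F$ vanishes on each subbundle $W_{k,n}^{\boxtimes}$. It then descends to an $\mathfrak{S}_k$-equivariant map $\widetilde{f}$ on the quotient, since $W_{k,n}^{\boxtimes}$ is $\mathfrak{S}_k$-stable by the preceding Lemma.

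The key step, and the main obstacle, is to prove that $F \circ \sigma = F$ for every $\sigma \in \mathfrak{S}_n$ acting on $(\mathbf{V}^{\boxtimes n})_k$ through the deck permutation $\tau_{\sigma}^{(i_1, \ldots, i_n)}$. By the composition relation $\tau_{\sigma}^{(i)} \circ \tau_{\sigma'}^{(\sigma \cdot i)} = \tau_{\sigma\sigma'}^{(i)}$ of the preceding Proposition, it suffices to treat adjacent transpositions $\sigma = (r\ r{+}1)$. For such a transposition, $\tau_{\sigma}^{(i_1, \ldots, i_n)}$ is the block swap $\tau_{i_r, i_{r+1}}$ acting on the consecutive blocks $B_r \cup B_{r+1}$, extended by the identity elsewhere. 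Using associativity, we write the iterated product as
\[ m\big(a_1 \cdots a_{r-1},\, m(a_r, a_{r+1}),\, a_{r+2} \cdots a_n\big). \]
Commutativity of $\mathbf{A}$ gives $m_{p,q}^{\boxtimes}(a,b) = \tau_{p,q} \cdot m_{q,p}^{\boxtimes}(b,a)$. Equivariance of the outer multiplication under $1 \times \tau_{p,q} \times 1 \subset \mathfrak{S}_{i_1 + \ldots} \times \mathfrak{S}_{p+q} \times \mathfrak{S}_{\ldots}$ then transports this to the identity $F(\sigma \cdot v) = F(v)$. The delicate bookkeeping is the interaction with the induced summands. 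An element lies in $\omega^*(\ldots)$ for a shuffle representative $\omega$, and $\sigma$ moves it to the summand indexed by $\omega \circ \tau_\sigma$, using the bijection of shuffles from the preceding Proposition. One must check that the Frobenius extension $\widetilde{\psi}_{\omega}(v) = \omega^{-1} \cdot \psi(v)$ matches on both sides. This reduces to $G$-equivariance of the adjoint map together with the identity above on the identity-representative summand, so it suffices to verify the claim there.

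It remains to check that $\widetilde{f}$ is a $\boxtimes$-algebra morphism and that it is unique. Since $\pi$ intertwines concatenation with $\boxdot$ and is surjective, and $F$ is multiplicative and unital, $\widetilde{f}$ is multiplicative and unital, and $\widetilde{f} \circ \iota = f$. For uniqueness, any algebra morphism $g: \mathbf{S}^{\boxtimes}(\mathbf{V}) \rightarrow \mathbf{A}$ extending $f$ gives an algebra morphism $g \circ \pi$ extending $f$. The uniqueness part of the previous theorem forces $g \circ \pi = F$, and surjectivity of $\pi$ gives $g = \widetilde{f}$. Finally, we check that $\mathbf{S}^{\boxtimes}(\mathbf{V})$ is itself a commutative $\boxtimes$-algebra, so that it lies in $\mathbf{CAlg}^{\boxtimes}$. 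Well-definedness of $\boxdot$ follows because $W^{\boxtimes}$ is an ideal: concatenation is compatible with the $\mathfrak{S}_p \times \mathfrak{S}_q \subset \mathfrak{S}_{p+q}$ block actions. Commutativity in the sense $m_{p,q}(a,b) = \tau_{p,q} \cdot m_{q,p}(b,a)$ holds because $v \boxtimes w$ and $w \boxtimes v$ differ exactly by $\tau_{(12)}^{(\ldots)}$, which is a block permutation in $\mathfrak{S}_{r+s}$. The result is an adjunction left adjoint to the forgetful functor.
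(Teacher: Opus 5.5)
Your proposal is correct and follows the same route as the paper. You extend $f$ to the $\boxtimes$-algebra morphism $F$ on $\mathbf{T}^{\boxtimes}(\mathbf{V})$, show that $F$ is invariant under the $\mathfrak{S}_n$-action by deck permutations using commutativity of $\mathbf{A}$, descend to the coinvariants, and get uniqueness from surjectivity of the projection. Your reduction to adjacent transpositions and to the identity-representative summand via Frobenius reciprocity is a more careful version of the paper's one-line pure-tensor computation; note that it implicitly uses the paper's lemma that the $\mathfrak{S}_n$- and $\mathfrak{S}_k$-actions commute.
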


\begin{proof} 
Let $\mathbf{A}$ be an equivariant commutative $\boxtimes$-algebra with multiplication $\boxtimes_{\mathbf{A}}$ and let $f: \mathbf{V} \rightarrow \mathbf{A}$ be an equivariant bundle morphism. Since $\mathbf{T}^{\boxtimes}(\mathbf{V})$ is the free $\boxtimes$-algebra, $f$ extends to a unique $\boxtimes$-algebra map 
\[ F: \mathbf{T}^{\boxtimes}(\mathbf{V}) \longrightarrow \mathbf{A}. \]
For each $n \in \mathbb{N}$, denote by $F_n$ the restriction $F|_{\mathbf{V}^{\boxtimes n}}$. We claim that each map $F_n$ is $\mathfrak{S}_n$-invariant with respect to the permutation of factors. Indeed, on a pure tensor $v_1 \boxtimes \ldots \boxtimes v_n \in \mathbf{V}^{\boxtimes n}$, one has
\[ F_n(v_1 \boxtimes \ldots \boxtimes v_n) = f(v_1) \boxtimes_{\mathbf{A}} \ldots \boxtimes_{\mathbf{A}} f(v_n). \]  
If $\sigma \in \mathfrak{S}_n$, then $\sigma \cdot (v_1 \boxtimes \ldots \boxtimes v_n)$ is obtained by permuting the $n$ tensor blocks and simultaneously transporting the underlying configuration by the associated deck permutation $\tau_{\sigma}^{(i_1, \ldots, i_n)}$. Since $F$ is equivariant and multiplicative, we obtain 
\[ F_n \Big( \sigma \cdot (v_1 \boxtimes \ldots \boxtimes v_n) \Big) = f \big( v_{\sigma(1)} \big) \boxtimes_{\mathbf{A}} \ldots \boxtimes_{\mathbf{A}} f \big( v_{\sigma(n)} \big). \]
Since $\mathbf{A}$ is commutative, the product is invariant under permutations of the factors, and therefore
\[ F_n \Big( \sigma \cdot (v_1 \boxtimes \ldots \boxtimes v_n) \Big) = f(v_1) \boxtimes_{\mathbf{A}} \ldots \boxtimes_{\mathbf{A}} f(v_n) = F_n(v_1 \boxtimes \ldots \boxtimes v_n). \]
It follows that $F_n$ vanishes on the subbundle $W_{n}^{\boxtimes}$ defined above (for each $k$, the degree-$k$ component of $F_n$ vanishes on $W_{k,n}^{\boxtimes}$), and hence induces a well-defined equivariant bundle morphism 
\[ f_n: \mathbf{S}^{\boxtimes n}(\mathbf{V}) \longrightarrow \mathbf{A}. \]
Taking the direct sum over all $n \in \mathbb{N}$ yields the equivariant bundle morphism 
\[ \widetilde{f}: \mathbf{S}^{\boxtimes}(\mathbf{V}) \longrightarrow \mathbf{A}. \]
We now show that $\widetilde{f}$ is a $\boxtimes$-algebra morphism. Let $[u] \in \mathbf{S}^{\boxtimes p}(\mathbf{V})$ and $[v] \in \mathbf{S}^{\boxtimes q}(\mathbf{V})$. Since $[u] \boxdot [v] = [u \boxtimes v]$, one has
\[ \widetilde{f} \Big( [u] \boxdot [v] \Big) = \widetilde{f} \Big( [u \boxtimes v] \Big) = F(u \boxtimes v) = F(u) \boxtimes_{\mathbf{A}} F(v) = \widetilde{f} \big( [u] \big) \boxtimes_{\mathbf{A}} \widetilde{f} \big( [v] \big). \]
Therefore $\widetilde{f}: \mathbf{S}^{\boxtimes}(\mathbf{V}) \rightarrow \mathbf{A}$ is multiplicative. Moreover, since $F: \mathbf{T}^{\boxtimes}(\mathbf{V}) \rightarrow \mathbf{A}$ is a $\boxtimes$-algebra map, it sends the unit $\mathbf{I}_{\boxtimes}$ to the unit of $\mathbf{A}$. Since the group action on $\mathbf{I}_{\boxtimes}$ is trivial, the unit of $\mathbf{S}^{\boxtimes}(\mathbf{V})$ is also $\mathbf{I}_{\boxtimes}$, and $\widetilde{f}$ sends it to the unit of $\mathbf{A}$.

Finally, let $g: \mathbf{S}^{\boxtimes}(\mathbf{V}) \rightarrow \mathbf{A}$ be a $\boxtimes$-algebra morphism extending $f$. Then the composition 
\[ \mathbf{T}^{\boxtimes}(\mathbf{V}) \longrightarrow \mathbf{S}^{\boxtimes}(\mathbf{V}) \overset{g} \longrightarrow \mathbf{A} \]
is a $\boxtimes$-algebra map extending $f: \mathbf{V} \rightarrow \mathbf{A}$. By the universal property of $\mathbf{T}^{\boxtimes}(\mathbf{V})$, this composition must be $F$. Since the projection $\mathbf{T}^{\boxtimes}(\mathbf{V}) \longrightarrow \mathbf{S}^{\boxtimes}(\mathbf{V})$ is surjective, it follows that $g = \widetilde{f}$.
\end{proof}

Similarly to $\mathbf{T}^{\boxtimes}$, the functor $\mathbf{S}^{\boxtimes}$ is canonically oplax monoidal with respect to $\otimes$, so that there are natural morphisms 
\[ \mathbf{S}^{\boxtimes}(\mathbf{V} \otimes \mathbf{W}) \longrightarrow \mathbf{S}^{\boxtimes}(\mathbf{V}) \otimes \mathbf{S}^{\boxtimes}(\mathbf{W}) \qquad \text{and} \qquad \mathbf{S}^{\boxtimes}(\mathbf{I}_{\otimes}) \longrightarrow \mathbf{I}_{\otimes}. \]
In general, these morphisms are not isomorphisms. However, they become isomorphisms in the special case of local vector bundles. If $V$ is a local vector bundle, the combinatorics simplifies drastically, and one has, as vector bundles,
\[ \big[ \mathbf{S}^{\boxtimes}(V) \big]_k \cong V^{\boxtimes^{\mathrm{ext}} k} = V \boxtimes^{\mathrm{ext}} \ldots \boxtimes^{\mathrm{ext}} V. \]

\begin{Proposition}
The functor $\mathbf{S}^{\boxtimes}$ is strongly monoidal with respect to $\otimes$ on the subcategory of local vector bundles, that is, there are isomorphisms of commutative $\boxtimes$-algebras 
\[ \mathbf{S}^{\boxtimes}(V \otimes W) \cong \mathbf{S}^{\boxtimes}(V) \otimes \mathbf{S}^{\boxtimes}(W) \qquad \text{and} \qquad \mathbf{S}^{\boxtimes}(\mathbf{I}_{\otimes}) \cong \mathbf{I}_{\otimes} \]
when $V$ and $W$ are both local vector bundles.
\end{Proposition}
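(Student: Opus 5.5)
We first make explicit the identification $\big[ \mathbf{S}^{\boxtimes}(V) \big]_k \cong V^{\boxtimes^{\mathrm{ext}} k}$ for a local bundle $V$. Since $V_i = 0$ for $i \neq 1$, only the summand $n=k$, $(i_1, \ldots, i_k) = (1, \ldots, 1)$ survives in $(V^{\boxtimes n})_k$. Hence
\[ (V^{\boxtimes k})_k = \bigoplus_{\omega \in \mathfrak{S}_k} \omega^*\big( V^{\boxtimes^{\mathrm{ext}} k} \big). \]
By the remark following the definition of the deck permutations, $\tau_{\sigma}^{(1, \ldots, 1)} = \sigma$. So $\mathfrak{S}_k$, acting by permutation of factors, acts simply transitively on the set of summands. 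The coinvariant quotient is therefore isomorphic to any single summand, and we take the one indexed by $\omega = \mathrm{Id}$. Concretely, the composite $V^{\boxtimes^{\mathrm{ext}} k} \hookrightarrow (V^{\boxtimes k})_k \twoheadrightarrow \big[\mathbf{S}^{\boxtimes k}(V)\big]_k$ is a bundle isomorphism. Its inverse sends the class of $v_{\omega} \in \omega^*(V^{\boxtimes^{\mathrm{ext}} k})$ to its image under the factor permutation carrying the $\omega$-summand back to the identity summand.

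We then transport the $\mathfrak{S}_k$-action. The commuting of the two actions (the Lemma above) shows that the induced $\mathfrak{S}_k$-action on $V^{\boxtimes^{\mathrm{ext}} k}$ is the natural one: it permutes the base points and the tensor factors simultaneously. The product $\boxdot$ becomes the map $V^{\boxtimes^{\mathrm{ext}} p} \boxtimes^{\mathrm{ext}} V^{\boxtimes^{\mathrm{ext}} q} \to V^{\boxtimes^{\mathrm{ext}} (p+q)}$ given by concatenation of tensors and configurations. The same description holds for $W$ and for $V \otimes W$, which is again local.

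Next we build the comparison map. In degree $k$ we send $\big[\mathbf{S}^{\boxtimes}(V\otimes W)\big]_k \cong (V\otimes W)^{\boxtimes^{\mathrm{ext}} k}$ to $\big[\mathbf{S}^{\boxtimes}(V)\big]_k \otimes \big[\mathbf{S}^{\boxtimes}(W)\big]_k \cong V^{\boxtimes^{\mathrm{ext}} k} \otimes W^{\boxtimes^{\mathrm{ext}} k}$ by the canonical shuffle isomorphism. Fibrewise this is
\[ \bigotimes_i (V_{x_i} \otimes W_{x_i}) \cong \Big(\bigotimes_i V_{x_i}\Big) \otimes \Big(\bigotimes_i W_{x_i}\Big). \]
This is visibly a smooth, $\mathfrak{S}_k$-equivariant bundle isomorphism, since both actions permute points and factors in the same way. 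We must also check that it agrees with the oplax structure map obtained from the universal property. By uniqueness in the free commutative $\boxtimes$-algebra theorem, it suffices to show that our map is a $\boxtimes$-algebra morphism restricting to the inclusion $V \otimes W \to \mathbf{S}^{\boxtimes}(V) \otimes \mathbf{S}^{\boxtimes}(W)$ in degree $1$. The degree-$1$ statement is immediate. Multiplicativity reduces to the following: the $\boxtimes$-product on $\mathbf{S}^{\boxtimes}(V)\otimes\mathbf{S}^{\boxtimes}(W)$, defined through $\zeta$, is componentwise concatenation in degrees $p,q$, and the shuffle isomorphism commutes with concatenation.

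For the unit, $\mathbf{I}_{\otimes}$ is not local; we interpret the claim through its degree-$1$ part $M \times \mathbb{K}$. We then get $\big[\mathbf{S}^{\boxtimes}(M\times \mathbb{K})\big]_k \cong \mathbb{K}^{\otimes k} \cong M^k \times \mathbb{K}$ with trivial action. This identification is compatible with the multiplication $\mu_{p,q}$, so the map is an isomorphism of commutative $\boxtimes$-algebras.

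The main obstacle is the first step: carefully checking that the induced $\mathfrak{S}_k$-action and the $\boxdot$ product on the coinvariants really correspond to the naive ones on $V^{\boxtimes^{\mathrm{ext}} k}$, without hidden twists from the coset conventions used in $\mathrm{Ind}$. We will verify this by computing the action explicitly on the identity summand, using the unshuffle description from the explicit Cauchy product remark.
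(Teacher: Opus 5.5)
Your proposal is correct. For $\mathbf{S}^{\boxtimes}(V\otimes W)\cong\mathbf{S}^{\boxtimes}(V)\otimes\mathbf{S}^{\boxtimes}(W)$ it follows the paper's argument: identify $[\mathbf{S}^{\boxtimes}(V)]_k\cong V^{\boxtimes^{\mathrm{ext}}k}$, then apply the interchange $(A\otimes B)\boxtimes^{\mathrm{ext}}(C\otimes D)\cong(A\boxtimes^{\mathrm{ext}}C)\otimes(B\boxtimes^{\mathrm{ext}}D)$ degreewise.

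The paper stops at a degreewise isomorphism of vector bundles. It never checks $\mathfrak{S}_k$-equivariance, multiplicativity, or agreement with the oplax structure map, even though the statement asserts an isomorphism of commutative $\boxtimes$-algebras. Your additional steps supply exactly this: the free transitive action on the $k!$ summands, the transport of the $\mathfrak{S}_k$-action, and uniqueness via the free commutative $\boxtimes$-algebra theorem.

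The convention check you single out as the main obstacle comes out as you expect. With trivial subgroup, $g$ sends the identity summand over $x$ to the $g^{-1}$-summand over $g\cdot x$ with no correction term. The factor permutation returning it to the identity summand is $v_1\otimes\cdots\otimes v_k\mapsto v_{g^{-1}(1)}\otimes\cdots\otimes v_{g^{-1}(k)}$, with $v_i\in V_{x_i}$, which is the natural action.

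For the unit, the two routes genuinely differ, and yours is the sound one. The paper argues that $\mathbf{S}^{\boxtimes}(\mathbf{I}_{\otimes})\cong\mathbf{I}_{\otimes}$ because $\mathbf{I}_{\otimes}$ is already a commutative $\boxtimes$-algebra. But a free construction does not collapse on an object that happens to carry an algebra structure, and the literal claim fails already in degree $0$. There $(\mathbf{I}_{\otimes})_0=\mathbb{K}$ contributes $\mathbb{K}^{\otimes n}=\mathbb{K}$, with trivial $\mathfrak{S}_n$-action, to $(\mathbf{I}_{\otimes}^{\boxtimes n})_0$ for every $n$. Hence $\mathbf{S}^{\boxtimes}(\mathbf{I}_{\otimes})_0\cong\bigoplus_{n}\mathbb{K}$ is a polynomial algebra in one variable rather than $\mathbb{K}$.

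Since $\mathbf{I}_{\otimes}$ is not local, the unit of the monoidal subcategory of local bundles is $\iota_*(M\times\mathbb{K})$. Your computation $[\mathbf{S}^{\boxtimes}(M\times\mathbb{K})]_k\cong M^k\times\mathbb{K}$, compatible with the maps $\mu_{p,q}$, is therefore the correct version of the second isomorphism. You should include the degree-$0$ computation above explicitly, so the reader sees that the reinterpretation is forced rather than a convenient choice.
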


\begin{proof} 
We have the isomorphism $\mathbf{S}^{\boxtimes}(V)_n \cong V^{\boxtimes^{\mathrm{ext}} n}$, as vector bundles and we use the interchange law between $\boxtimes^{\mathrm{ext}}$ and $\otimes$: 
\[ (A \otimes B) \boxtimes^{\mathrm{ext}} (C \otimes D) \cong (A \boxtimes^{\mathrm{ext}} C) \otimes (B \boxtimes^{\mathrm{ext}} D). \]
Therefore, 
\[ \mathbf{S}^{\boxtimes}(V \otimes W)_n \cong (V \otimes W)^{\boxtimes^{\mathrm{ext}}n} \cong V^{\boxtimes^{\mathrm{ext}}n} \otimes W^{\boxtimes^{\mathrm{ext}}n} \cong \mathbf{S}^{\boxtimes}(V)_n \otimes \mathbf{S}^{\boxtimes}(W)_n. \]
Moreover, since $\mathbf{I}_{\otimes}$ is already a commutative $\boxtimes$-algebra, we have $\mathbf{S}^{\boxtimes}(\mathbf{I}_{\otimes}) \cong \mathbf{I}_{\otimes}$.
\end{proof}

The Cauchy tensor product recovers the Hadamard tensor product along the diagonals in the sense that:
\begin{Proposition} 
Denote by $\Delta_k: x \in M \mapsto (x,\ldots,x) \in M^k$ the diagonal inclusion. If $V \rightarrow M$ is a local vector bundle, then there is a canonical isomorphism
\[ \Delta_k^* \Big( \mathbf{S}^{\boxtimes}(V)_k \Big) \cong V^{\otimes k} \]
as vector bundles over $M$.
\end{Proposition}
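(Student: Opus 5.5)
We will compute $\mathbf{S}^{\boxtimes}(V)_k$ explicitly for a local bundle $V$ and then restrict it to the small diagonal. Since $V$ is concentrated in degree $1$, only the summand $n=k$, $(i_1,\ldots,i_k)=(1,\ldots,1)$ survives in $(\mathbf{V}^{\boxtimes n})_k$. As recorded for $\mathbf{T}^{\boxtimes}(V)$, this gives
\[ \big[\mathbf{T}^{\boxtimes}(V)\big]_k = \bigoplus_{\omega \in \mathfrak{S}_k} \omega^*\big(V^{\boxtimes^{\mathrm{ext}} k}\big). \]
For this block pattern the deck permutation is $\tau_\sigma^{(1,\ldots,1)}=\sigma$. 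The $\mathfrak{S}_k$-action permuting the factors therefore sends the $\omega$-summand to the $\omega\sigma$-summand (or $\omega\sigma^{-1}$, depending on conventions), identically on fibres. It is thus free and transitive on the set of summands.

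The first step is the isomorphism already stated just before the statement, $\mathbf{S}^{\boxtimes}(V)_k\cong V^{\boxtimes^{\mathrm{ext}}k}$, and we make it canonical. The coinvariants of a free transitive permutation of isomorphic summands are identified with any single summand. We choose the summand indexed by the identity, whose fibre over $(x_1,\ldots,x_k)$ is $V_{x_1}\otimes\cdots\otimes V_{x_k}$. The composite inclusion $V^{\boxtimes^{\mathrm{ext}}k}\hookrightarrow[\mathbf{T}^{\boxtimes}(V)]_k\twoheadrightarrow[\mathbf{S}^{\boxtimes}(V)]_k$ is then a fibrewise isomorphism: surjectivity holds because every summand is a translate of the identity one, and injectivity by a dimension count, or by the explicit retraction that sums all components after transporting them to the identity summand. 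This map is smooth, so we obtain an isomorphism of vector bundles over $M^k$. It is independent of the chosen summand up to the canonical transport maps, which are identities on fibres.

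The second step is to pull back along $\Delta_k$. Pullback commutes with external tensor products in the sense that $\Delta_k^*(\pi_1^*V\otimes\cdots\otimes\pi_k^*V)\cong (\pi_1\circ\Delta_k)^*V\otimes\cdots\otimes(\pi_k\circ\Delta_k)^*V$. Since every $\pi_i\circ\Delta_k=\mathrm{Id}_M$, we get $\Delta_k^*(V^{\boxtimes^{\mathrm{ext}}k})\cong V^{\otimes k}$, with fibre $V_x^{\otimes k}$ at $x$. Composing with the pullback of the first step gives the claimed isomorphism.

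The main obstacle is canonicity. The identification of coinvariants with a single summand must not depend on the choice of summand, and it must be intrinsic, since the residual $\mathfrak{S}_k$-equivariance (isotropy over the diagonal) acts on $V_x^{\otimes k}$ by permuting factors and is not trivialised. We will check that the transport between summands is compatible with the block-permutation maps, so that the resulting isomorphism is the natural one. We do not claim it is $\mathfrak{S}_k$-equivariant beyond the statement, which only asks for an isomorphism of vector bundles over $M$.
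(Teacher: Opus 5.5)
Your proposal is correct and follows the paper's proof. Like the paper, you identify $\mathbf{S}^{\boxtimes}(V)_k$ with $V^{\boxtimes^{\mathrm{ext}}k}$, then pull back along $\Delta_k$, using that the restriction of an external tensor product to the full diagonal is the internal tensor product $V^{\otimes k}$. The only difference is that you justify the first identification (the coinvariants of the free, transitive permutation of the $k!$ summands of $[\mathbf{T}^{\boxtimes}(V)]_k$ are identified with the identity summand via the summing retraction), whereas the paper simply quotes it as a fact stated earlier.
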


\begin{proof} 
Since $V$ is local, we have an isomorphism 
\[ \mathbf{S}^{\boxtimes}(V)_k \cong V^{\boxtimes^{\mathrm{ext}} k} \] 
as vector bundles over $M^k$. Pulling back along $\Delta_k$, one gets 
\[ \Delta_k^* \Big( \mathbf{S}^{\boxtimes}(V)_k \Big) \cong \Delta_k^* \big( V^{\boxtimes^{\mathrm{ext}}k} \big). \]
But restriction of an external tensor product to the full diagonal is canonically the internal tensor product, hence 
\[ \Delta_k^* \Big( \mathbf{S}^{\boxtimes}(V)_k \Big) \cong V^{\otimes k}. \]
\end{proof}

\begin{Remark} 
Similarly, by introducing a sign twist in the action, one constructs the \textbf{Cauchy exterior algebra bundle} $\mathbf{\Lambda}^{\boxtimes}(\mathbf{V})$. As expected, $\mathbf{\Lambda}^{\boxtimes}(\mathbf{V})$ is the free equivariant graded-commutative $\boxtimes$-algebra generated by $\mathbf{V}$. Likewise, there are natural morphisms
\[ \mathbf{\Lambda}^{\boxtimes}(\mathbf{V} \otimes \mathbf{W}) \longrightarrow \mathbf{\Lambda}^{\boxtimes}(\mathbf{V}) \otimes \mathbf{\Lambda}^{\boxtimes}(\mathbf{W}) \qquad \text{and} \qquad \mathbf{\Lambda}^{\boxtimes}(\mathbf{I}_{\otimes}) \longrightarrow \mathbf{I}_{\otimes} \]
and these are isomorphisms when $\mathbf{V}$ and $\mathbf{W}$ are both local vector bundles.
\end{Remark}

\subsection{Equivariant Cauchy-Hadamard $2$-Algebra Bundles over Configuration Spaces} \label{3.3}

\begin{Definition}
An \textbf{equivariant $2$-algebra bundle} is a double monoid object in the symmetric $2$-monoidal category $\Big( \mathcal{VB}_{\mathfrak{S}_{\bullet}}(M^{\bullet}), \boxtimes, \mathbf{I}_{\boxtimes}, \otimes, \mathbf{I}_{\otimes} \Big)$.
\end{Definition}

Equivalently, an equivariant $2$-algebra bundle is a $\mathfrak{S}_{\bullet}$-equivariant vector bundle $\mathbf{A}$ together with four maps 
\[ \begin{array}{lll} m^{\otimes}: \mathbf{A} \otimes \mathbf{A} \longrightarrow \mathbf{A} &\qquad& m^{\boxtimes}: \mathbf{A} \boxtimes \mathbf{A} \longrightarrow \mathbf{A} \\
u^{\otimes}: \mathbf{I}_{\otimes} \longrightarrow \mathbf{A} &\qquad& u^{\boxtimes}: \mathbf{I}_{\boxtimes} \longrightarrow \mathbf{A} \end{array} \]
such that the following conditions hold: \begin{enumerate}
    \item $\mathbf{A}$ is an equivariant $\otimes$-algebra, i.e. $\big( \mathbf{A}, m^{\otimes}, u^{\otimes} \big)$ is a monoid object in $\big( \mathcal{VB}_{\mathfrak{S}_{\bullet}}(M^{\bullet}), \otimes \big)$.
    \item $\mathbf{A}$ is an equivariant $\boxtimes$-algebra, i.e. $\big( \mathbf{A}, m^{\boxtimes}, u^{\boxtimes} \big)$ is a monoid object in $\big( \mathcal{VB}_{\mathfrak{S}_{\bullet}}(M^{\bullet}), \boxtimes \big)$.
    \item The interchange law for $m^{\otimes}$ and $m^{\boxtimes}$ holds, i.e. the following diagram commutes 
    \[
\xymatrix{
(\mathbf{A} \otimes \mathbf{A}) \boxtimes (\mathbf{A} \otimes \mathbf{A}) \ar[d]_{m^{\otimes} \boxtimes m^{\otimes}} \ar[rr]^{\zeta_{\mathbf{A}, \mathbf{A}, \mathbf{A}, \mathbf{A}}} && (\mathbf{A} \boxtimes \mathbf{A}) \otimes (\mathbf{A} \boxtimes \mathbf{A}) \ar[d]^{m^{\boxtimes} \otimes m^{\boxtimes}} \\ 
\mathbf{A} \boxtimes \mathbf{A} \ar[rd]_{m^{\boxtimes}} && \mathbf{A} \otimes \mathbf{A} \ar[ld]^{m^{\otimes}} \\ 
& \mathbf{A} &   
}
\]
where $\zeta$ is the interchange map of the symmetric $2$-monoidal category $\big( \mathcal{VB}_{\mathfrak{S}_{\bullet}}(M^{\bullet}), \boxtimes, \otimes \big)$.
\item The units $u^{\otimes}$ and $u^{\boxtimes}$ are compatible, i.e. the three diagrams commute: 
    \[
\xymatrix{
\mathbf{I}_{\otimes} \boxtimes \mathbf{I}_{\otimes} \ar[d]_{\mu} \ar[rr]^{u^{\otimes} \boxtimes u^{\otimes}} && \mathbf{A} \boxtimes \mathbf{A} \ar[d]^{m^{\boxtimes}} \\ 
\mathbf{I}_{\otimes} \ar[rr]_{u^{\otimes}} && \mathbf{A}}
\qquad 
\xymatrix{
\mathbf{I}_{\boxtimes} \otimes \mathbf{I}_{\boxtimes}  \ar[rr]^{u^{\boxtimes} \otimes u^{\boxtimes}} && \mathbf{A} \otimes \mathbf{A} \ar[d]^{m^{\otimes}} \\ 
\mathbf{I}_{\boxtimes} \ar[rr]_{u^{\boxtimes}} \ar[u]^{\Delta} && \mathbf{A}}
\qquad
\xymatrix{
\mathbf{I}_{\boxtimes} \ar[rd]_{u^{\boxtimes}} \ar[rr]^{\nu} && \mathbf{I}_{\otimes} \ar[ld]^{u^{\otimes}} \\ 
&\mathbf{A}&
}
\]
where $\mu$, $\Delta$ and $\nu$ are the three structure morphisms of the symmetric $2$-monoidal category $\big( \mathcal{VB}_{\mathfrak{S}_{\bullet}}(M^{\bullet}), \boxtimes, \otimes \big)$.
\end{enumerate} 

We say that $\mathbf{A}$ is commutative if both equivariant algebras $(\mathbf{A}, m^{\otimes}, u^{\otimes})$ and $(\mathbf{A}, m^{\boxtimes}, u^{\boxtimes})$ are commutative, i.e. $m^{\otimes} \circ \beta^{\otimes} = m^{\otimes}$ and $m^{\boxtimes} \circ \beta^{\boxtimes} = m^{\boxtimes}$, where $\beta^{\otimes}$ and $\beta^{\boxtimes}$ are the respective braidings of the monoidal structures $\otimes$ and $\boxtimes$. The graded-commutativity is defined in a similar way using the graded braidings.

A morphism between equivariant $2$-algebra bundles is an equivariant bundle map that is both a $\otimes$-algebra morphism and a $\boxtimes$-algebra morphism. 

We will denote by \begin{itemize}
\item[$\bullet$] $\mathbf{Alg}^{\boxtimes, \otimes} \big( \mathcal{VB}_{\mathfrak{S}_{\bullet}}(M^{\bullet}) \big)$ the category of equivariant $2$-algebra bundles.
\item[$\bullet$] $\mathbf{CAlg}^{\boxtimes, \otimes} \big( \mathcal{VB}_{\mathfrak{S}_{\bullet}}(M^{\bullet}) \big)$ the full subcategory of commutative ones.
\end{itemize}

\begin{Remark}[Reformulation due to Aguiar \& Mahajan] 
As described in the previous section, if $\mathbf{A}$ is an equivariant $\boxtimes$-algebra, then $\mathbf{A} \otimes \mathbf{A}$ is naturally a $\boxtimes$-algebra with $\boxtimes$-multiplication 
\[ (\mathbf{A} \otimes \mathbf{A}) \boxtimes (\mathbf{A} \otimes \mathbf{A}) \overset{\zeta} \longrightarrow (\mathbf{A} \boxtimes \mathbf{A}) \otimes (\mathbf{A} \boxtimes \mathbf{A}) \overset{m_{\mathbf{A}}^{\boxtimes} \otimes m_{\mathbf{A}}^{\boxtimes}} \longrightarrow \mathbf{A} \otimes \mathbf{A}. \]
Hence conditions $3.$ and $4.$ can be reformulated by saying that
\[ m^{\otimes}: \mathbf{A} \otimes \mathbf{A} \longrightarrow \mathbf{A} \qquad \text{and} \qquad u^{\otimes}: \mathbf{I}_{\otimes} \longrightarrow \mathbf{A} \qquad \text{are } \boxtimes\text{-algebra morphisms.} \]
In other words, an equivariant $2$-algebra bundle is a $\otimes$-monoid in the monoidal category $\Big( \mathbf{Alg}^{\boxtimes} \big( \mathcal{VB}_{\mathfrak{S}_{\bullet}}(M^{\bullet}) \big), \otimes \Big)$. More concretely, over $M^p \times M^q$, this means that the diagram 
\[ \xymatrix{ 
(A_p \boxtimes^{\mathrm{ext}} A_q) \otimes (A_p \boxtimes^{\mathrm{ext}} A_q) \ar[rrrr]^{m_p^{\otimes} \boxtimes^{\mathrm{ext}} m_q^{\otimes}} \ar[d]_{m_{p,q}^{\boxtimes} \otimes m_{p,q}^{\boxtimes}} &&&& A_p \boxtimes^{\mathrm{ext}} A_q \ar[d]^{m_{p,q}^{\boxtimes}} \\
A_{p+q} \otimes A_{p+q} \ar[rrrr]_{m_{p+q}^{\otimes}} &&&& A_{p+q} 
} \]
commutes. Elementwise, if $a_1, a_2$ lie over $(x_1, \ldots, x_p) \in M^p$ and $b_1, b_2$ lie over $(y_1, \ldots, y_q) \in M^{q}$, then over the configuration $(x_1, \ldots, x_p, y_1, \ldots, y_q)$, we have 
\[ m_{p+q}^{\otimes} \Big( m_{p,q}^{\boxtimes}(a_1, b_1), m_{p,q}^{\boxtimes}(a_2, b_2) \Big) = m_{p,q}^{\boxtimes} \Big( m_p^{\otimes}(a_1, a_2), m_q^{\otimes}(b_1, b_2) \Big). \]
Equivalently,
\[ m_{p,q}^{\boxtimes}: (A_p, m_p^{\otimes}) \times (A_q, m_q^{\otimes}) \longrightarrow (A_{p+q}, m_{p+q}^{\otimes}) \] 
is a $\otimes$-algebra morphism. For the units, this amounts to the following two concrete conditions: \begin{itemize} 
\item[$\bullet$] the units coincide in degree $0$, i.e. $\mathds{1}_0^{\otimes} = \mathds{1}_0^{\boxtimes}$ in $A_0$.
\item[$\bullet$] the units multiply along $\boxtimes$, i.e. 
\[ \forall p,q \in \mathbb{N}, \qquad m_{p,q}^{\boxtimes} \big( \mathds{1}_p^{\otimes}, \mathds{1}_q^{\otimes} \big) = \mathds{1}_{p+q}^{\otimes}. \]
\end{itemize}
\end{Remark}

For a local vector bundle $V \in \mathcal{VB}(M)$, the equivariant Cauchy-Hadamard bundle of $V$ is $\mathbf{S}^{\boxtimes} \big( S^{\otimes} V \big) \in \mathcal{VB}_{\mathfrak{S}_{\bullet}}(M^{\bullet})$.

\begin{Theorem} 
The equivariant Cauchy-Hadamard bundle $\mathbf{S}^{\boxtimes} \big( S^{\otimes}(V) \big)$ is the free equivariant commutative $2$-algebra bundle generated by $V$.
\end{Theorem}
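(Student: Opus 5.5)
We proceed in two stages: first we equip $\mathbf{A} := \mathbf{S}^{\boxtimes}\big(S^{\otimes}(V)\big)$ with a commutative $2$-algebra structure, and then we check the universal property against an arbitrary commutative $2$-algebra $\mathbf{B}$ together with a bundle map $f: V \rightarrow \iota^*\mathbf{B} = B_1$.

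\textbf{Structure.} The $\boxtimes$-structure is the one of the Cauchy symmetric algebra, which is commutative by the theorem on $\mathbf{S}^{\boxtimes}$. For the $\otimes$-structure we use the explicit local description $\mathbf{A}_k \cong S^{\otimes}(V)^{\boxtimes^{\mathrm{ext}} k}$. Each factor $S^{\otimes}(V)$ is a commutative algebra bundle over $M$, so the external tensor product of $k$ copies is a commutative algebra bundle over $M^k$ with factorwise multiplication. Permuting the factors is an algebra automorphism, so $\mathfrak{S}_k$ acts by algebra maps. The unit is $1 \boxtimes \cdots \boxtimes 1$, which is $\mathfrak{S}_k$-fixed. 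Equivalently, one can argue by the proposition stating that $\mathbf{S}^{\boxtimes}$ is strongly monoidal for $\otimes$ on local bundles, with $V$ replaced by $S^{\otimes}(V)$: the map $S^{\otimes}(V)\otimes S^{\otimes}(V)\rightarrow S^{\otimes}(V)$ induces
\[ \mathbf{A}\otimes\mathbf{A} \cong \mathbf{S}^{\boxtimes}\big(S^{\otimes}(V)\otimes S^{\otimes}(V)\big) \longrightarrow \mathbf{A}, \]
and this map is automatically a $\boxtimes$-algebra morphism. By the Aguiar–Mahajan reformulation, this is exactly the interchange law. The unit conditions are immediate: $m^{\boxtimes}_{p,q}(1^{\boxtimes p},1^{\boxtimes q}) = 1^{\boxtimes(p+q)}$, and in degree $0$ both units are $\mathds{1}_0$.

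\textbf{Universal property.} Let $f : V \rightarrow B_1$ be given.
\begin{enumerate}
\item Since $B_1$ is a commutative algebra bundle under $m^{\otimes}_1$, the universal property of $S^{\otimes}$ extends $f$ to an algebra map $g : S^{\otimes}(V) \rightarrow B_1$.
\item Viewing $g$ as a map of equivariant bundles concentrated in degree $1$, the freeness of $\mathbf{S}^{\boxtimes}$ extends it uniquely to a $\boxtimes$-algebra morphism $\widetilde{f} : \mathbf{A} \rightarrow \mathbf{B}$. Concretely,
\[ \widetilde{f}(a_1\boxtimes\cdots\boxtimes a_k) = m^{\boxtimes}\big(g(a_1),\ldots,g(a_k)\big). \]
\item It remains to show that $\widetilde{f}$ is $\otimes$-multiplicative. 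On pure tensors,
\[ \widetilde{f}\big((a_1\cdots a_k)\cdot(b_1\cdots b_k)\big) = m^{\boxtimes}\big(g(a_1b_1),\ldots\big) = m^{\boxtimes}\big(g(a_1)g(b_1),\ldots\big). \]
Iterating the interchange identity $m^{\boxtimes}_{p,q}(a_1a_2,b_1b_2)=m^{\otimes}\big(m^{\boxtimes}_{p,q}(a_1,b_1),m^{\boxtimes}_{p,q}(a_2,b_2)\big)$, which holds in $\mathbf{B}$, yields $m^{\otimes}\big(\widetilde{f}(a),\widetilde{f}(b)\big)$. Unitality follows from $g(1)=\mathds{1}^{\otimes}_1$ together with the unit axiom $m^{\boxtimes}_{p,q}(\mathds{1}^{\otimes}_p,\mathds{1}^{\otimes}_q) = \mathds{1}^{\otimes}_{p+q}$. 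Since pure tensors span each fibre, this suffices.
\item For uniqueness, any $2$-algebra morphism $h$ extending $f$ restricts in degree $1$ to an algebra map extending $f$, hence equals $g$ by freeness of $S^{\otimes}$. It is then a $\boxtimes$-morphism extending $g$, hence equals $\widetilde{f}$ by freeness of $\mathbf{S}^{\boxtimes}$.
\end{enumerate}

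\textbf{Main obstacle.} The delicate step is the equivariant bookkeeping in the construction of the $\otimes$-structure and in step 3. One must verify that the identification $\mathbf{A}_k\cong S^{\otimes}(V)^{\boxtimes^{\mathrm{ext}}k}$ intertwines the quotient $\mathfrak{S}_k$-action coming from the induced structure with the factor-permuting action. One must also verify that iterating the interchange law through the unshuffle summands is compatible with the deck permutations $\tau_\sigma$. We plan to handle this by working over the representative summand $\sigma=\mathrm{Id}$ only, where the iterated $\boxtimes$-product is plain concatenation, and then transporting to the other summands by equivariance of all maps involved.
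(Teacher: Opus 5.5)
Your proposal is correct and follows the paper's proof step for step. The $\otimes$-structure comes from the strong monoidality of $\mathbf{S}^{\boxtimes}$ on local bundles (equivalently, the factorwise product on $S^{\otimes}(V)^{\boxtimes^{\mathrm{ext}} k}$). The map $f$ is extended first through $S^{\otimes}$ and then through $\mathbf{S}^{\boxtimes}$, and uniqueness follows from the two freeness properties in turn.

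The only variation is in showing that $\widetilde{f}$ is $\otimes$-multiplicative. You iterate the interchange law on pure tensors. The paper instead observes that $\widetilde{f}\circ m^{\otimes}$ and $m^{\otimes}\circ(\widetilde{f}\otimes\widetilde{f})$ are two $\boxtimes$-algebra morphisms out of the free algebra $\mathbf{S}^{\boxtimes}\big(S^{\otimes}(V)\otimes S^{\otimes}(V)\big)$ that agree on generators. That argument absorbs the summand-by-summand equivariance bookkeeping you flagged as your main obstacle.
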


\begin{proof}
We first construct the $2$-algebra structure on $\mathbf{S}^{\boxtimes} \big( \mathbf{S}^{\otimes}(V) \big)$. The $\boxtimes$-multiplication is the free commutative multiplication of $\mathbf{S}^{\boxtimes}$, and we denote it by 
\[ (a,b) \in \mathbf{S}^{\boxtimes} \big( \mathbf{S}^{\otimes}(V) \big) \times \mathbf{S}^{\boxtimes} \big( \mathbf{S}^{\otimes} V \big) \longmapsto a \boxdot b \in \mathbf{S}^{\boxtimes} \big( \mathbf{S}^{\otimes} V \big). \]
The $\otimes$-multiplication is induced by the $\otimes$-multiplication $\mathbf{S}^{\otimes}(V) \otimes \mathbf{S}^{\otimes}(V) \rightarrow \mathbf{S}^{\otimes}(V)$ and the strong monoidality of $\mathbf{S}^{\boxtimes}$ with respect to $\otimes$ on local vector bundles: 
\[ \mathbf{S}^{\boxtimes} \big( \mathbf{S}^{\otimes} V \big) \otimes \mathbf{S}^{\boxtimes} \big( \mathbf{S}^{\otimes} V \big) \cong \mathbf{S}^{\boxtimes} \Big( \mathbf{S}^{\otimes}(V) \otimes \mathbf{S}^{\otimes}(V) \Big) \longrightarrow \mathbf{S}^{\boxtimes} \big( \mathbf{S}^{\otimes} V \big). \]
More concretely, over a configuration $(x_1, \ldots, x_n) \in M^n$, if $a = a_1 \boxdot \ldots \boxdot a_n$ and $b = b_1 \boxdot \ldots \boxdot b_n$ are pure tensors of $\mathbf{S}^{\boxtimes} \big( \mathbf{S}^{\otimes} V \big)$, where $a_i, b_i \in \mathbf{S}^{\otimes}(V)_{x_i}$, then 
\[ a \odot b = (a_1 \odot b_1) \boxdot \ldots \boxdot (a_n \odot b_n) \in \mathbf{S}^{\boxtimes} \big( \mathbf{S}^{\otimes} V \big). \]
The $\boxtimes$-unit is the unit of the free commutative $\boxtimes$-algebra $\mathbf{S}^{\boxtimes} \big( \mathbf{S}^{\otimes} V \big)$. The $\otimes$-unit is induced by the $\otimes$-unit $\mathbf{I}_{\otimes} \rightarrow \mathbf{S}^{\otimes}(V)$ and the strong monoidality of $\mathbf{S}^{\boxtimes}$ with respect to $\otimes$ on local vector bundles :
\[ u^{\otimes}: \mathbf{I}_{\otimes} \cong \mathbf{S}^{\boxtimes} \big( \mathbf{I}_{\otimes} \big) \longrightarrow \mathbf{S}^{\boxtimes} \big( \mathbf{S}^{\otimes} V \big). \]
By a slight abuse of notation, we denote by the same symbol the $\otimes$-units of the $\otimes$-algebras $\mathbf{S}^{\otimes}(V)$ and $\mathbf{S}^{\boxtimes} \big( \mathbf{S}^{\otimes}(V) \big)$. More concretely, over a configuration $(x_1, \ldots, x_n) \in M^n$, the unit is 
\[ \mathds{1}_{x_1, \ldots, x_n}^{\otimes} = \mathds{1}_{x_1} \boxdot \ldots \boxdot \mathds{1}_{x_n}. \]
To verify the interchange law, take $a_1, a_2 \in \mathbf{S}^{\boxtimes} \big( \mathbf{S}^{\otimes} V \big)$ above a configuration $(x_1, \ldots, x_p)$ and $b_1, b_2 \in \mathbf{S}^{\boxtimes} \big( \mathbf{S}^{\otimes} V \big)$ above a configuration $(y_1, \ldots, y_q)$. Write 
\[ a_i = a_i^{(1)} \boxdot \ldots \boxdot a_i^{(p)} \qquad \text{and} \qquad b_i = b_i^{(1)} \boxdot \ldots \boxdot b_i^{(q)}. \]
Then
\[ \begin{array}{lll} (a_1 \boxdot b_1) \odot (a_2 \boxdot b_2) &=& \big( a_{1}^{(1)} \boxdot \ldots \boxdot a_1^{(p)} \boxdot b_1^{(1)} \boxdot \ldots \boxdot b_1^{(q)} \big) \odot \big( a_{2}^{(1)} \boxdot \ldots \boxdot a_2^{(p)} \boxdot b_2^{(1)} \boxdot \ldots \boxdot b_2^{(q)} \big) \\ 
    &=& (a_1^{(1)} \odot a_2^{(1)}) \boxdot \ldots \boxdot (a_1^{(p)} \odot a_2^{(p)}) \boxdot (b_1^{(1)} \odot b_2^{(1)}) \boxdot \ldots \boxdot (b_1^{(q)} \odot b_2^{(q)}) \\ 
    &=& (a_1 \odot a_2) \boxdot (b_1 \odot b_2). \end{array} \] 
The compatibility of units is immediate from the construction above. Since both multiplications are commutative, $\mathbf{S}^{\boxtimes} \big( \mathbf{S}^{\otimes} V \big)$ is therefore a commutative $2$-algebra.

We now prove the universal property. Let $\mathbf{A}$ be a commutative $2$-algebra and $f: V \longrightarrow \mathbf{A}$ be an equivariant bundle morphism. Since $\mathbf{A}$ is in particular a $\otimes$-algebra, $f$ extends to a $\otimes$-algebra morphism:
\[ \mathbf{S}^{\otimes}(f): \mathbf{S}^{\otimes}(V) \longrightarrow \mathbf{A}. \]
Since $\mathbf{A}$ is also a $\boxtimes$-algebra, this further extends to a $\boxtimes$-algebra morphism 
\[ \mathbf{S}^{\boxtimes} \big( \mathbf{S}^{\otimes}(f) \big) =: \widetilde{f}: \mathbf{S}^{\boxtimes} \big( \mathbf{S}^{\otimes}(V) \big) \longrightarrow \mathbf{A}. \]
It remains to show that $\widetilde{f}$ is still a $\otimes$-algebra morphism. Consider the two morphisms 
\[ \mathbf{S}^{\boxtimes} \big( \mathbf{S}^{\otimes} V \big) \otimes \mathbf{S}^{\boxtimes} \big( \mathbf{S}^{\otimes} V \big) \cong \mathbf{S}^{\boxtimes} \big( \mathbf{S}^{\otimes} V \otimes \mathbf{S}^{\otimes} V \big) \longrightarrow \mathbf{A} \]
induced by 
\[ \alpha(v,w) = \widetilde{f}(v \odot w) \qquad \text{and} \qquad \beta(v,w) = m_{\mathbf{A}}^{\otimes} \big( \widetilde{f}(v),  \widetilde{f}(w) \big). \]
Since $m_{\mathbf{A}}^{\otimes}$ is a $\boxtimes$-algebra morphism, so is $\beta$, and by construction, $\alpha$ is also a $\boxtimes$-algebra morphism. Since both are defined on the free $\boxtimes$-algebra $\mathbf{S}^{\boxtimes} \big( \mathbf{S}^{\otimes} V \otimes \mathbf{S}^{\otimes} V \big)$, it suffices to check that $\alpha = \beta$ on generators, namely on $\mathbf{S}^{\otimes}(V) \otimes \mathbf{S}^{\otimes}(V)$. On $\mathbf{S}^{\otimes}(V) \otimes \mathbf{S}^{\otimes}(V)$, one has
\[ \alpha|_{\mathbf{S}^{\otimes}(V) \otimes \mathbf{S}^{\otimes}(V)} = \mathbf{S}^{\otimes}(f) \circ m_{\mathbf{S}^{\otimes} V}^{\otimes} \qquad \text{and} \qquad \beta|_{\mathbf{S}^{\otimes}(V) \otimes \mathbf{S}^{\otimes}(V)} = m_{\mathbf{A}}^{\otimes} \circ \big( \mathbf{S}^{\otimes}(f) \otimes \mathbf{S}^{\otimes}(f) \big). \]
Since $\mathbf{S}^{\otimes}(f)$ is a $\otimes$-algebra morphism, these two restrictions coincide, hence $\alpha = \beta$. Therefore $\widetilde{f}$ is a $\otimes$-algebra morphism.

Finally, let $g: \mathbf{S}^{\boxtimes} \big( \mathbf{S}^{\otimes}(V) \big) \rightarrow \mathbf{A}$ be a $2$-algebra morphism extending $f$. Restricting $g$ along the inclusion $\mathbf{S}^{\otimes}(V) \rightarrow \mathbf{S}^{\boxtimes} \big( \mathbf{S}^{\otimes} V \big)$ gives a $\otimes$-algebra morphism $\mathbf{S}^{\otimes} V \rightarrow \mathbf{A}$ extending $f$. By the universal property of $\mathbf{S}^{\otimes}$, one has
\[ g|_{\mathbf{S}^{\otimes}(V)} = \mathbf{S}^{\otimes}(f). \]
Since $g$ is also a $\boxtimes$-algebra extending $\mathbf{S}^{\otimes}(f)$, the universal property of $\mathbf{S}^{\boxtimes}$ implies that $g=\widetilde{f}$. This proves the uniqueness, and hence the theorem.
\end{proof}

\begin{Remark}
If $V$ is not local, then the functor $\mathbf{S}^{\boxtimes}$ is no longer strongly monoidal. This property is crucial in the previous theorem, both for defining the $\otimes$-multiplication and for proving the universal property: the isomorphism allows one to check that the extended map $\widetilde{f}$ is a $\otimes$-algebra morphism by restricting to generators. 

In the general case, the oplax monoidality of $\mathbf{S}^{\boxtimes}$ only provides a morphism 
\[ \mathbf{S}^{\boxtimes} \Big( \mathbf{S}^{\otimes}(\mathbf{V}) \otimes \mathbf{S}^{\otimes}(\mathbf{V}) \Big) \longrightarrow \mathbf{S}^{\boxtimes} \big( \mathbf{S}^{\otimes}(\mathbf{V}) \big) \otimes \mathbf{S}^{\boxtimes} \big( \mathbf{S}^{\otimes}(\mathbf{V}) \big) \]
which goes in the opposite direction to what is required to define a $\otimes$-multiplication on $\mathbf{S}^{\boxtimes} \big( \mathbf{S}^{\otimes}(\mathbf{V}) \big)$.
\end{Remark}

%% file: 5_Poisson.tex
\section{Equivariant Poisson $2$-Algebra Structures} 

Now that a $2$-algebra structure has been introduced, we consider a Poisson structure compatible with the two types of multiplications. We begin by describing the standard notion of a Poisson structure with respect to the $\boxtimes$-structure, which will serve as the model for the later compatibility with the $\otimes$-structure.

\subsection{Equivariant Poisson-Cauchy Algebra Bundles over Configuration Spaces} \label{4.1}

\begin{Definition}
An \textbf{equivariant Poisson $\boxtimes$-algebra} is an equivariant commutative $\boxtimes$-algebra bundle $\big( \mathbf{P}, m^{\boxtimes}, u^{\boxtimes} \big)$ with a Poisson bracket 
\[ B: \mathbf{P} \boxtimes \mathbf{P} \longrightarrow \mathbf{P} \]
such that the following conditions hold: \begin{enumerate}
    \item $B$ is skew-symmetric, i.e. $B \circ \beta^{\boxtimes} = - B$ where $\beta^{\boxtimes}$ is the $\boxtimes$-braiding.
    \item $B$ satisfies the Jacobi identity, where the circular braiding is defined by 
    \[ \sigma^{\boxtimes} = \big( \mathrm{Id}_{\mathbf{P}} \boxtimes \beta^{\boxtimes} \big) \circ \big( \beta^{\boxtimes} \boxtimes \mathrm{Id}_{\mathbf{P}} \big): \mathbf{P} \boxtimes \mathbf{P} \boxtimes \mathbf{P} \longrightarrow \mathbf{P} \boxtimes \mathbf{P} \boxtimes \mathbf{P} \]
    and the identity reads
    \[ B \circ \big( B \boxtimes \mathrm{Id}_{\mathbf{P}} \big) \circ \big( \mathrm{Id} + \sigma^{\boxtimes} + \sigma^{\boxtimes} \circ \sigma^{\boxtimes} \big) = 0 \] 
    \item $B$ satisfies the $\boxtimes$-Leibniz rule, i.e. 
    \[ B \circ \big( \mathrm{Id}_{\mathbf{P}} \boxtimes m^{\boxtimes} \big) = m^{\boxtimes} \circ \big( B \boxtimes \mathrm{Id}_{\mathbf{P}} \big) + m^{\boxtimes} \circ \big( \mathrm{Id}_{\mathbf{P}} \boxtimes B \big) \circ \big( \beta^{\boxtimes} \boxtimes \mathrm{Id}_{\mathbf{P}} \big). \]
\end{enumerate}
\end{Definition}

Equivalently, a Poisson bracket $B: \mathbf{P} \boxtimes \mathbf{P} \rightarrow \mathbf{P}$ is induced by "elementary Poisson brackets", namely $\mathfrak{S}_p \times \mathfrak{S}_q$-equivariant bilinear maps :
\[ \left\{ \cdot, \; \cdot \right\}_{p,q}: P_p \times P_q \longrightarrow P_{p+q} \]
which send a vector over $(x_1, \ldots, x_p)$ and a vector over $(y_1, \ldots, y_q)$ to a vector over $(x_1, \ldots, x_p, y_1, \ldots, y_q)$. In terms of these elementary brackets, the Poisson bracket axioms become the following: \begin{enumerate} 
\item \textbf{Skew-symmetry} is given by 
\[ \left\{ a,b \right\}_{p,q} = - \tau_{p,q} \cdot \left\{ b,a \right\}_{q,p} \]
where $\tau_{p,q}$ is the block permutation $\begin{bmatrix} q+1 & \ldots & q+p & | & 1 & \ldots & q \end{bmatrix}$ sending the configuration $(y_1, \ldots, y_q, x_1, \ldots, x_p)$ to $(x_1, \ldots, x_p, y_1, \ldots, y_q)$.

\item \textbf{Jacobi identity.} Consider the block permutation 
\[ \sigma_{p,q,r} = \left[ \begin{array}{ccc|ccc|ccc} p+1 & \ldots & p+q & p+q+1 & \ldots & p+q+r & 1 & \ldots & p \end{array} \right] \in \mathfrak{S}_{p+q+r} \]
and
\[ \sigma_{p,q,r}^{(2)} = \left[ \begin{array}{ccc|ccc|ccc} p+q+1 & \ldots & p+q+r & 1 & \ldots & p & p+1 & \ldots & p+q \end{array} \right] \in \mathfrak{S}_{p+q+r} \] 
These permutations cyclically permute the three blocks of sizes $p,q,r$, namely
\[ \begin{array}{lll} \sigma_{p,q,r}(X,Y,Z) &=& (Y,Z,X) \\
\sigma_{p,q,r}^{(2)}(X,Y,Z) &=& (Z,X,Y) \end{array} \qquad \text{where} \qquad \left\{ \begin{array}{lll} X &=& (x_1, \ldots, x_p) \in M^p \\ 
Y &=& (y_1, \ldots, y_q) \in M^q \\ 
Z &=& (z_1, \ldots, z_r) \in M^r \end{array} \right. . \]
Note that $\sigma_{p,q,r}^{(2)} \neq \sigma_{p,q,r} \circ \sigma_{p,q,r}$, since the naive square composition does not account for the change of block sizes. However, one has $\sigma_{p,q,r}^{(2)} = \sigma_{q,r,p} \circ \sigma_{p,q,r}$. Thus the Jacobi identity reads 
\[ \left\{ \left\{ a,b \right\}_{p,q}, c \right\}_{p+q,r} + \sigma_{p,q,r} \cdot \left\{ \left\{ b,c \right\}_{q,r}, a \right\}_{q+r, p} + \sigma_{p,q,r}^{(2)} \cdot \left\{ \left\{ c,a \right\}_{r,p}, b \right\}_{r+p, q} = 0. \]
\item \textbf{$\boxtimes$-Leibniz rule} is given by 
\[ \left\{ a, m_{q,r}^{\boxtimes}(b,c) \right\}_{p,q+r} = m_{p+q, r}^{\boxtimes} \Big( \left\{ a,b \right\}_{p,q}, c \Big) + \tau_{p,q,r} \cdot m_{q,p+r}^{\boxtimes} \Big( b, \left\{ a,c \right\}_{p,r} \Big) \]
where $\tau_{p,q,r}$ is the block permutation $\begin{bmatrix} \tau_{p,q} &|& p+q+1 & \ldots & p+q+r \end{bmatrix}$ permuting only the first two blocks and fixing the last one, so that 
\[ \tau_{p,q,r} (X,Y,Z) = (Y,X,Z) \qquad \text{where} \qquad \left\{ \begin{array}{lll} X &=& (x_1, \ldots, x_p) \in M^p \\ 
Y &=& (y_1, \ldots, y_q) \in M^q \\ 
Z &=& (z_1, \ldots, z_r) \in M^r \end{array} \right. . \]
\end{enumerate}

A morphism between equivariant Poisson $\boxtimes$-algebras is a $\boxtimes$-algebra bundle morphism $f: (\mathbf{P}_1, B_1) \longrightarrow (\mathbf{P}_2, B_2)$ such that the following diagram commutes 
\[ \xymatrix{
\mathbf{P}_1 \boxtimes \mathbf{P}_1 \ar[rr]^{f \boxtimes f} \ar[d]_{B_1} && \mathbf{P}_2 \boxtimes \mathbf{P}_2 \ar[d]^{B_2} \\ 
\mathbf{P}_1 \ar[rr]_{f} && \mathbf{P}_2
} \]
that is,
\[ \forall (a,b) \in (\mathbf{P}_1)_p \times (\mathbf{P}_1)_q, \qquad f_{p+q} \big( \left\{ a,b \right\}_{p,q}^{(1)} \big) = \big\{ f_p(a), f_q(b) \big\}_{p,q}^{(2)}. \]
Therefore, equivariant Poisson $\boxtimes$-algebras form a category denoted $\mathbf{Pois}^{\boxtimes} \big( \mathcal{VB}_{\mathfrak{S}_{\bullet}}(M^{\bullet}) \big)$.

\begin{Proposition}
Let $(\mathbf{P}_1, m_1^{\boxtimes}, u_1^{\boxtimes}, B_1)$ and $(\mathbf{P}_2, m_2^{\boxtimes}, u_2^{\boxtimes}, B_2^{\boxtimes})$ be equivariant Poisson $\boxtimes$-algebra bundles. Then, $\mathbf{P}_1 \boxtimes \mathbf{P}_2$ has a natural Poisson structure given by: 
\[ B: (\mathbf{P}_1 \boxtimes \mathbf{P}_2) \boxtimes (\mathbf{P}_1 \boxtimes \mathbf{P}_2) \cong (\mathbf{P}_1 \boxtimes \mathbf{P}_1) \boxtimes (\mathbf{P}_2 \boxtimes \mathbf{P}_2) \overset{B_1 \boxtimes m_2 + m_1 \boxtimes B_2} \longrightarrow \mathbf{P}_1 \boxtimes \mathbf{P}_2. \]
\end{Proposition}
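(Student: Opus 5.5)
We check the three Poisson axioms for $B$ on the commutative $\boxtimes$-algebra $\mathbf{P}_1 \boxtimes \mathbf{P}_2$. Its multiplication $m$ is the coproduct structure described in the earlier Example: $(a_1 \boxtimes a_2)(b_1 \boxtimes b_2) = m_1(a_1,b_1) \boxtimes m_2(a_2,b_2)$, up to the middle braiding. All three axioms reduce to identities in the symmetric monoidal category $(\mathcal{VB}_{\mathfrak{S}_{\bullet}}(M^{\bullet}), \boxtimes)$. By coherence for symmetric monoidal categories, every braiding and shuffle isomorphism is canonical. We may therefore compute formally on pure tensors $a = a_1 \boxtimes a_2$, $b = b_1 \boxtimes b_2$, $c = c_1 \boxtimes c_2$, treating the block permutations $\tau_{p,q}$, $\sigma_{p,q,r}$, $\tau_{p,q,r}$ as bookkeeping supplied by $\beta^{\boxtimes}$. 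Concretely,
\[ B(a,b) = B_1(a_1,b_1) \boxtimes m_2(a_2,b_2) + m_1(a_1,b_1) \boxtimes B_2(a_2,b_2). \]
This is the usual tensor product of Poisson algebras, and the proof follows the classical one.

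\textbf{Skew-symmetry.} Precomposing with $\beta^{\boxtimes}$ and using naturality of the middle-four interchange turns the two terms into $(B_1 \circ \beta)\boxtimes(m_2\circ\beta)$ and $(m_1\circ\beta)\boxtimes(B_2\circ\beta)$. Skew-symmetry of $B_i$ and commutativity of $m_i$ then give $-B$.

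\textbf{Leibniz rule.} Expand $B(a, bc)$ into $B_1(a_1,b_1c_1)\boxtimes a_2b_2c_2 + a_1b_1c_1\boxtimes B_2(a_2,b_2c_2)$. Apply the Leibniz rule for each $B_i$, and regroup the four resulting terms into $B(a,b)\,c + b\,B(a,c)$. Associativity and commutativity of $m_1,m_2$ justify the regrouping, and the braiding $\beta^{\boxtimes}\boxtimes \mathrm{Id}$ appearing in the axiom accounts for moving $b$ past $a$.

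\textbf{Jacobi identity.} Expanding $B(B(a,b),c)$ gives four terms:
\[ B_1(B_1(a_1,b_1),c_1)\boxtimes a_2b_2c_2,\quad a_1b_1c_1\boxtimes B_2(B_2(a_2,b_2),c_2), \]
\[ B_1(a_1,b_1)c_1 \boxtimes B_2(a_2b_2,c_2),\quad B_1(a_1b_1,c_1)\boxtimes B_2(a_2,b_2)c_2. \]
Summing over the cyclic permutations $\mathrm{Id}+\sigma^{\boxtimes}+(\sigma^{\boxtimes})^2$, the pure terms vanish by Jacobi for $B_1$ and for $B_2$. For the mixed terms, expand $B_2(a_2b_2,c_2)$ and $B_1(a_1b_1,c_1)$ by Leibniz (together with skew-symmetry). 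This yields twelve terms of the form $B_1(x_1,y_1)z_1\boxtimes B_2(u_2,v_2)w_2$. These cancel in pairs after cyclic summation, since each pair consists of a term and its image under swapping the arguments of one bracket, and skew-symmetry makes them opposite.

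\textbf{Main obstacle.} The delicate step is this cancellation in the Jacobi identity. Each term carries a base permutation, and after the cyclic action $\sigma^{\boxtimes}$ and the braidings used in skew-symmetry and Leibniz, we must verify that paired terms lie over the same configuration with the same equivariance correction. Otherwise the terms would land in different summands of $\mathrm{Ind}$. I would handle this by working entirely with morphisms built from $\beta^{\boxtimes}$ and invoking the coherence theorem: two composites of braidings inducing the same permutation of tensor factors are equal. The sign bookkeeping then reduces exactly to the classical computation for tensor products of Poisson algebras.
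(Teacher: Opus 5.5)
Your proposal is correct and follows the paper's proof. It uses the same elementwise formula for the bracket, and gets skew-symmetry from skew-symmetry of $B_i$ together with commutativity of $m_i$. It uses the same four-term regrouping for the $\boxtimes$-Leibniz rule. It splits the jacobiator the same way, into pure terms killed by Jacobi for $B_1,B_2$ and twelve mixed terms cancelling in pairs by skew-symmetry.

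The only difference is how the block-permutation bookkeeping is justified. You appeal to symmetric monoidal coherence, which tacitly uses that $\boxtimes$ is additive on morphisms, and that holds here. The paper instead observes directly that $\tau_{p,q,r}$ acts as $\tau_{p_1,q_1,r_1}$ on the $\mathbf{P}_1$-part and as $\tau_{p_2,q_2,r_2}$ on the $\mathbf{P}_2$-part. Your coherence argument is, if anything, the cleaner way to handle that step.
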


\begin{proof} 
The bracket $B$ on $\mathbf{P}_1 \boxtimes \mathbf{P}_2$ is given by the elementary Poisson bracket 
\[ \left\{ a_1 \boxtimes a_2, b_1 \boxtimes b_2 \right\}_{p,q} = \left\{ a_1, b_1 \right\}_{p_1,q_1}^{(1)} \boxtimes m^{\boxtimes}_{2;p_2,q_2}(a_2,b_2) + m_{1;p_1,q_1}^{\boxtimes}(a_1, b_1) \boxtimes \left\{ a_2,b_2 \right\}_{p_2,q_2}^{(2)} \]
where $p_1 + p_2 = p$ and $q_1 + q_2 = q$ and $a_1 \in (\mathbf{P}_1)_{p_1}$, $a_2 \in (\mathbf{P}_2)_{p_2}$, $b_1 \in (\mathbf{P}_1)_{q_1}$ and $b_2 \in (\mathbf{P}_2)_{q_2}$. Skew-symmetry comes directly from the skew-symmetry of $B_1$ and $B_2$ and the commutativity of $m_1^{\boxtimes}$ and $m_2^{\boxtimes}$. We now check the $\boxtimes$-Leibniz rule: for $a_1 \in (\mathbf{P}_1)_{p_1}$, $a_2 \in (\mathbf{P}_2)_{p_2}$, $b_1 \in (\mathbf{P}_1)_{q_1}$, $b_2 \in (\mathbf{P}_2)_{q_2}$ and $c_1 \in (\mathbf{P}_1)_{r_1}$, $c_2 \in (\mathbf{P}_2)_{r_2}$ where $p_1 + p_2 = p$, $q_1 + q_2 = q$ and $r_1 + r_2 = r$, one has
\[ \begin{array}{lll} \left\{ a_1 \boxtimes a_2, m_{q,r}^{\boxtimes} \big( b_1 \boxtimes b_2, c_1 \boxtimes c_2 \big) \right\}_{p,q+r} &=& \left\{ a_1 \boxtimes a_2, m_{1;q_1,r_1}(b_1,c_1) \boxtimes m_{2;q_2,r_2}(b_2,c_2) \right\}_{p,q+r} \\ 
&=& \left\{ a_1, m_{1;q_1,r_1}^{\boxtimes}(b_1, c_1) \right\}_{p_1, q_1+r_1}^{(1)} \boxtimes m_{2;p_2, q_2+r_2}^{\boxtimes} \big( a_2, m_{2;q_2,r_2}^{\boxtimes}(b_2, c_2) \big) \\
&& + m_{1;p_1,q_1+r_1}^{\boxtimes} \big( a_1, m_{1;q_1,r_1}^{\boxtimes}(b_1, c_1) \big) \boxtimes \left\{ a_2, m_{2;q_2,r_2}^{\boxtimes}(b_2, c_2) \right\}_{p_2, q_2+r_2}. \end{array}\]
We now use the $\boxtimes$-Leibniz rule for $B_1, B_2$ and the associativity of $m_1^{\boxtimes}, m_2^{\boxtimes}$ :
\[ \begin{array}{lll} \left\{ a_1 \boxtimes a_2, m_{q,r}^{\boxtimes}(b_1 \boxtimes b_2, c_1 \boxtimes c_2) \right\}_{p,q+r} &=& m_{1;p_1 + q_1, r_1}^{\boxtimes} \big( \left\{ a_1, b_1 \right\}_{p_1, q_1}^{(1)}, c_1 \big) \boxtimes m_{2;p_2 + q_2, r_2}^{\boxtimes} \big( m_{2;p_2, q_2}^{\boxtimes}(a_2, b_2), c_2 \big) \\ 
&& + \tau_{p_1,q_1,r_1} \cdot m_{1;q_1, p_1+r_1} \big( b_1, \left\{ a_1, c_1 \right\}_{p_1, r_1}^{(1)} \big) \boxtimes m_{2;p_2+q_2,r_2}^{\boxtimes} \big( m_{2;p_2,q_2}^{\boxtimes}(a_2, b_2), c_2 \big) \\
&& + m^{\boxtimes}_{1;p_1+q_1,r_1} \big( m_{1;p_1,q_1}^{\boxtimes}(a_1, b_1), c_1 \big) \boxtimes m_{2;p_2+q_2, r_2}^{\boxtimes} \big( \left\{ a_2, b_2 \right\}_{p_2,q_2}^{(2)}, c_2 \big) \\
&& + m_{1;p_1+q_1,r_1}^{\boxtimes} \big( m_{1;p_1,q_1}^{\boxtimes}(a_1, b_1), c_1 \big) \boxtimes \tau_{p_2,q_2,r_2} \cdot m_{2;q_2, p_2+r_2}^{\boxtimes} \big( b_2, \left\{ a_2, c_2 \right\}_{p_2,r_2}^{(2)} \big). \end{array} \]
Consider a block decomposition $(X,Y,Z)$ where $|X| = p$, $|Y|=q$ and $|Z|=r$, then each block $X$ (resp. $Y$ and $Z$) decomposes into $X = (X_1,X_2)$ with $|X_1| = p_1$ and $|X_2| = p_2$ (resp. $Y = (Y_1,Y_2)$ with $|Y_1| = q_1$, $|Y_2| = q_2$ and $Z = (Z_1, Z_2)$ with $|Z_1| = r_1$, $|Z_2| = r_2$). Then the global block permutation acts by permuting the blocks $X,Y,Z$ without changing their internal order, i.e.
\[ \tau_{p,q,r}(X,Y,Z) = (Y,X,Z) = (Y_1,Y_2,X_1,X_2,Z_1,Z_2). \] 
Hence it acts by $\tau_{p_1,q_1,r_1}$ on the $\mathbf{P}_1$-part and by $\tau_{p_2,q_2,r_2}$ on the $\mathbf{P}_2$-part. Therefore, by using commutativity of the two multiplications, this yields
\[ \begin{array}{lll} \left\{ a_1 \boxtimes a_2, m_{q,r}^{\boxtimes}(b_1 \boxtimes b_2, c_1 \boxtimes c_2) \right\}_{p,q+r} &=& m_{1;p_1 + q_1, r_1}^{\boxtimes} \big( \left\{ a_1, b_1 \right\}_{p_1, q_1}^{(1)}, c_1 \big) \boxtimes m_{2;p_2 + q_2, r_2}^{\boxtimes} \big( m_{2;p_2, q_2}^{\boxtimes}(a_2, b_2), c_2 \big) \\ 
&& + \tau_{p,q,r} \cdot \Big[ m_{1;q_1, p_1+r_1} \big( b_1, \left\{ a_1, c_1 \right\}_{p_1, r_1}^{(1)} \big) \boxtimes m_{2;p_2+q_2,r_2}^{\boxtimes} \big( m_{2;p_2,q_2}^{\boxtimes}(a_2, b_2), c_2 \big) \Big] \\
&& + m^{\boxtimes}_{1;p_1+q_1,r_1} \big( m_{1;p_1,q_1}^{\boxtimes}(a_1, b_1), c_1 \big) \boxtimes m_{2;p_2+q_2, r_2}^{\boxtimes} \big( \left\{ a_2, b_2 \right\}_{p_2,q_2}^{(2)}, c_2 \big) \\
&& + \tau_{p,q,r} \cdot \Big[ m_{1;p_1+q_1,r_1}^{\boxtimes} \big( m_{1;p_1,q_1}^{\boxtimes}(a_1, b_1), c_1 \big) \boxtimes m_{2;q_2, p_2+r_2}^{\boxtimes} \big( b_2, \left\{ a_2, c_2 \right\}_{p_2,r_2}^{(2)} \big) \Big]. \normalsize \end{array} \]
By combining the first and third term, and the second and the last term, yielding 
\[ \left\{ a_1 \boxtimes a_2, m_{q,r}^{\boxtimes}(b_1 \boxtimes b_2, c_1 \boxtimes c_2) \right\}_{p,q+r} = m_{p+q,r}^{\boxtimes} \Big( \left\{ a_1 \boxtimes a_2, b_1 \boxtimes b_2 \right\}_{p,q}, c_1 \boxtimes c_2 \Big) 
+ \tau_{p,q,r} \cdot m_{q,p+r}^{\boxtimes} \Big( b_1 \boxtimes b_2, \left\{ a_1 \boxtimes a_2, c_1 \boxtimes c_2 \right\}_{p,r} \Big) \]
which is the $\boxtimes$-Leibniz rule. We now check the Jacobi identity. Consider the jacobiator $J: (\mathbf{P}_1 \boxtimes \mathbf{P}_2)^{\boxtimes 3} \longrightarrow \mathbf{P}_1 \boxtimes \mathbf{P}_2$ defined by the elementary jacobiators
\[ \begin{array}{lll} J_{p,q,r} \big( a_1 \boxtimes a_2, b_1 \boxtimes b_2, c_1 \boxtimes c_2 \big) &=& \left\{ \left\{ a_1 \boxtimes a_2, b_1 \boxtimes b_2 \right\}_{p,q}, c_1 \boxtimes c_2 \right\}_{p+q,r} \\ 
&& + \sigma_{p,q,r} \cdot \left\{ \left\{ b_1 \boxtimes b_2, c_1 \boxtimes c_2 \right\}_{q,r}, a_1 \boxtimes a_2 \right\}_{q+r,p} \\
&&+ \sigma_{p,q,r}^{(2)} \cdot \left\{ \left\{ c_1 \boxtimes c_2, a_1 \boxtimes a_2 \right\}_{r,p}, b_1 \boxtimes b_2 \right\}_{r+p,q}. \end{array} \]
By using the $\boxtimes$-Leibniz rule using associativity, this yields six "pure" terms and twelve "mixed" terms. More precisely, 
\[ J_{p,q,r} = J_{p,q,r}^{\mathrm{pure}} + J_{p,q,r}^{\mathrm{mixed}} \]
with 
\[ \begin{array}{lll} J_{p,q,r}^{\mathrm{pure}}(a_1 \boxtimes a_2, b_1 \boxtimes b_2, c_1 \boxtimes c_2) &=& J_{1;p_1,q_1,r_1}(a_1,b_1,c_1) \boxtimes m_{2;p_2,q_2,r_2}^{\boxtimes}(a_2,b_2,c_2)\\ 
&& + m_{1;p_1,q_1,r_1}^{\boxtimes}(a_1,b_1,c_1) \boxtimes J_{2;p_2,q_2,r_2}(a_2,b_2,c_2) \end{array} \]
where $J_k$ is the jacobiator associated to the Poisson bracket $B_k$. Therefore, 
\[ J_{p,q,r}^{\mathrm{pure}} = 0 \]
and, 
\[ \begin{array}{llll} J_{p,q,r}^{\mathrm{mixed}} \big( a_1 \boxtimes a_2, b_1 \boxtimes b_2, c_1 \boxtimes c_2 \big) &=& m_{1;p_1+q_1, r_1}^{\boxtimes} \big( \left\{ a_1, b_1 \right\}_{p_1,q_1}^{(1)}, c_1 \big) \boxtimes m^{\boxtimes}_{2;p_2, q_2+r_2} \big( a_2, \left\{ b_2, c_2 \right\}_{q_2,r_2}^{(2)} \big) & (1) \\ 
&& + m^{\boxtimes}_{1;p_1+q_1,r_1} \big( \left\{ a_1, b_1 \right\}_{p_1,q_1}^{(1)}, c_1 \big) \boxtimes m_{2;q_2, p_2+r_2}^{\boxtimes} \big( b_2, \left\{ a_2, c_2 \right\}_{p_2, r_2}^{(2)} \big) & (2) \\ 
&& + m_{1;p_1,q_1+r_1}^{\boxtimes} \big( a_1, \left\{ b_1, c_1 \right\}_{q_1,r_1}^{(1)} \boxtimes m_{2;p_2+q_2, r_2}̂^{\boxtimes} \big( \left\{ a_2, b_2 \right\}_{p_2, q_2}^{(2)}, c_2 \big) &(3)  \\ 
&& + m_{1;q_1,p_1+r_1}^{\boxtimes} \big( b_1, \left\{ a_1, c_1 \right\}_{p_1,r_1}^{(1)} \big) \boxtimes m_{2;p_2+q_2,r_2}^{\boxtimes} \big( \left\{ a_2, b_2 \right\}_{p_2,q_2}^{(2)}, c_2 \big) &(4)\\ 

&+& m_{1;q_1+r_1, p_1}^{\boxtimes} \big( \left\{ b_1, c_1 \right\}_{q_1,r_1}^{(1)}, a_1 \big) \boxtimes m_{2;q_2, r_2 + p_2}^{\boxtimes} \big( b_2, \left\{ c_2, a_2 \right\}_{r_2,p_2}^{(2)} \big) &(5) \\ 
&& + m_{1;q_1+r_1, p_1}^{\boxtimes} \big( \left\{ b_1, c_1 \right\}_{q_1,r_1}^{(1)}, a_1 \big) \boxtimes m_{2;r_2, q_2+p_2}^{\boxtimes} \big( c_2, \left\{ b_2, a_2 \right\}_{q_2, p_2}^{(2)} \big) &(6) \\
&& + m_{1;q_1, r_1+p_1}^{\boxtimes} \big( b_1, \left\{ c_1, a_1 \right\}_{r_1, p_1}^{(1)} \big) \boxtimes m_{2;q_2+r_2,p_2}^{\boxtimes} \big( \left\{ b_2, c_2 \right\}_{q_2,r_2}^{(2)}, a_2 \big) &(7) \\ 
&& + m_{1;r_1, q_1+p_1}^{\boxtimes} \big( c_1, \left\{ b_1, a_1 \right\}_{q_1,p_1}^{(1)} \big) \boxtimes m_{2;q_2+r_2, p_2}^{\boxtimes} \big( \left\{ b_2, c_2 \right\}_{q_2, r_2}^{(2)}, a_2 \big) &(8) \\ 

&+& m_{1;r_1+p_1, q_1}^{\boxtimes} \big( \left\{ c_1, a_1 \right\}_{r_1, p_1}^{(1)}, b_1 \big) \boxtimes m_{2;r_2, p_2+q_2}^{\boxtimes} \big( c_2, \left\{ a_2, b_2 \right\}_{p_2,q_2}^{(2)} \big) &(9) \\ 
&& + m_{1;r_1+p_1, q_1}^{\boxtimes} \big( \left\{ c_1, a_1 \right\}_{r_1, p_1}^{(1)}, b_1 \big) \boxtimes m_{2;p_2, r_2+q_2}^{\boxtimes} \big( a_2, \left\{ c_2, b_2 \right\}_{r_2, q_2}^{(2)} \big) &(10) \\ 
&& + m_{1;r_1, p_1+q_1}^{\boxtimes} \big( c_1, \left\{ a_1, b_1 \right\}_{p_1,q_1}^{(1)} \big) \boxtimes m_{2;r_2+p_2, q_2}^{\boxtimes} \big( \left\{ c_2, a_2 \right\}_{r_2,p_2}^{(2)}, b_2 \big) &(11) \\ 
&& + m_{1;p_1, r_1+q_1}^{\boxtimes} \big( a_1, \left\{ c_1, b_1 \right\}_{r_1,q_1}^{(1)} \big) \boxtimes m_{2;r_2+p_2, q_2}^{\boxtimes} \big( \left\{ c_2, a_2 \right\}_{r_2,p_2}^{(2)}, b_2 \big). &(12) \end{array} \]
The following pairs of terms cancel out due to skew-symmetry of the bracket and commutativity of the multiplications: $(1)$-$(8)$, $(2)$-$(11)$, $(6)$-$(3)$, $(9)$-$(4)$, $(12)$-$(5)$, $(10)$-$(7)$. Therefore, it only remains 
\[ J_{p,q,r}(a_1 \boxtimes a_2, b_1 \boxtimes b_2, c_1 \boxtimes c_2) = 0.\]
Hence the bracket $B$ satisfies the Jacobi identity.
\end{proof}

In fact, $\boxtimes$ is the coproduct in the category $\mathbf{Pois}^{\boxtimes} \big( \mathcal{VB}_{\mathfrak{S}_{\bullet}}(M^{\bullet}) \big)$.

\begin{Proposition} 
Let $(\mathbf{P}_1, B_1)$ and $(\mathbf{P}_2, B_2)$ be equivariant Poisson $\boxtimes$-algebra bundles, then $\mathbf{P}_1 \otimes \mathbf{P}_2$ carries a canonical Poisson $\boxtimes$-algebra structure defined by 
\[ B = \big( B_1 \otimes m_2^{\boxtimes} + m_1^{\boxtimes} \otimes B_2 \big) \circ \zeta \]
that is 
\[ \left\{ a_1 \otimes a_2, b_1 \otimes b_2 \right\}_{p,q} = \left\{ a_1, b_1 \right\}_{p,q}^{(1)} \otimes m^{\boxtimes}_{p,q}(a_2, b_2) + m_{p,q}^{\boxtimes}(a_1, b_1) \otimes \left\{ a_2, b_2 \right\}_{p,q}^{(2)}. \]
\end{Proposition}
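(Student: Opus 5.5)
We check the three axioms of an equivariant Poisson $\boxtimes$-algebra for $B = \big( B_1 \otimes m_2^{\boxtimes} + m_1^{\boxtimes} \otimes B_2 \big) \circ \zeta$. We work throughout with the elementary brackets $\{\cdot,\cdot\}_{p,q}$ and with pure tensors $a_1 \otimes a_2 \in (\mathbf{P}_1)_p \otimes (\mathbf{P}_2)_p$. This is a genuine simplification compared with the previous proposition. Because $\otimes$ is degreewise, both Hadamard factors live over the \emph{same} configuration. Hence every block permutation ($\tau_{p,q}$, $\tau_{p,q,r}$, $\sigma_{p,q,r}$, $\sigma^{(2)}_{p,q,r}$) acts diagonally on $A_n \otimes B_n$, so that $\tau \cdot (x \otimes y) = (\tau \cdot x) \otimes (\tau \cdot y)$. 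No splitting of block sizes $p = p_1 + p_2$ is needed.

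First we record the preliminaries. The commutative $\boxtimes$-algebra structure on $\mathbf{P}_1 \otimes \mathbf{P}_2$ is the one from the Example in Section \ref{3.2}, with elementary multiplication $m_{p,q}(a_1\otimes a_2, b_1 \otimes b_2) = m^{\boxtimes}_{1;p,q}(a_1,b_1) \otimes m^{\boxtimes}_{2;p,q}(a_2,b_2)$. Its commutativity follows from the diagonal action of $\tau_{p,q}$ together with the commutativity of $m_1^{\boxtimes}$ and $m_2^{\boxtimes}$. The map $\{\cdot,\cdot\}_{p,q}$ is $(\mathfrak{S}_p \times \mathfrak{S}_q)$-equivariant because each of $B_i$ and $m_i^{\boxtimes}$ is, and because the action on $\otimes$ is diagonal. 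Skew-symmetry is then immediate. Applying $\tau_{p,q}$ diagonally to $\{b,a\}_{q,p}$ and using $\{b_1,a_1\}^{(1)} = -\tau_{p,q}^{-1}$-type skew-symmetry for $B_1$ and commutativity of $m_2^{\boxtimes}$ turns the first summand into minus the first summand of $\{a,b\}_{p,q}$. The same argument handles the second summand.

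For the $\boxtimes$-Leibniz rule, we expand $\{a, m_{q,r}(b,c)\}_{p,q+r}$ into two terms. In the first term we apply the Leibniz rule for $B_1$ together with associativity of $m_2^{\boxtimes}$; in the second we apply the Leibniz rule for $B_2$ together with associativity of $m_1^{\boxtimes}$. This produces four terms. Two of them carry no permutation and combine to $m_{p+q,r}(\{a,b\}_{p,q},c)$. The other two carry a $\tau_{p,q,r}$ in one tensor factor only. To pull $\tau_{p,q,r}$ out diagonally, we rewrite the unpermuted factor, for instance $m^{\boxtimes}_2(m^{\boxtimes}_2(a_2,b_2),c_2)$, as $\tau_{p,q,r}\cdot m^{\boxtimes}_2(b_2, m^{\boxtimes}_2(a_2,c_2))$ using commutativity and associativity of $m_2^{\boxtimes}$. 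This is the same manoeuvre as in the previous proposition, but simpler. The two remaining terms then combine to $\tau_{p,q,r}\cdot m_{q,p+r}(b,\{a,c\}_{p,r})$.

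For the Jacobi identity, we expand the jacobiator using the Leibniz rule. We get pure terms $J_1(a_1,b_1,c_1) \otimes m_2^{\boxtimes}(a_2,b_2,c_2)$ and $m_1^{\boxtimes}(a_1,b_1,c_1)\otimes J_2(a_2,b_2,c_2)$, which vanish. We also get mixed terms of the shape $m_1^{\boxtimes}(\{\cdot,\cdot\}^{(1)},\cdot) \otimes m_2^{\boxtimes}(\cdot,\{\cdot,\cdot\}^{(2)})$, together with terms $\{\cdot,\cdot\}^{(1)}\otimes\{\cdot,\cdot\}^{(2)}$-type contributions arising from $\{x_1\otimes x_2,\cdot\}$ hitting a product. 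Each mixed term should be paired with its partner under skew-symmetry and commutativity, exactly as in the pairing $(1)$–$(8)$, $(2)$–$(11)$, \dots of the preceding proof. The \textbf{main obstacle} is bookkeeping the permutations $\sigma_{p,q,r}$ and $\sigma^{(2)}_{p,q,r}=\sigma_{q,r,p}\circ\sigma_{p,q,r}$ so that paired terms appear with the same overall permutation. Our plan is to bring every mixed term to the normal form $\pi\cdot\big(m_1^{\boxtimes}(\{u_1,v_1\}^{(1)},w_1)\otimes m_2^{\boxtimes}(u_2', \{v_2',w_2'\}^{(2)})\big)$ over the canonical ordering $(X,Y,Z)$, pulling permutations out diagonally. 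The cancellations can then be read off directly. Since all permutations act diagonally, this step is easier than in the Cauchy case, where the blocks had to be split.
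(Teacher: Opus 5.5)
Your proposal is correct and follows the paper's proof essentially step by step. Skew-symmetry comes from the diagonal action of $\tau_{p,q}$ on both Hadamard factors. The $\boxtimes$-Leibniz rule comes from the four-term expansion, recombined after rewriting the unpermuted factor; the paper's second and fourth terms already appear in that rewritten form. The Jacobi identity comes from splitting the jacobiator into pure terms, killed by the Jacobi identities of $B_1$ and $B_2$, and twelve mixed terms that cancel in pairs by skew-symmetry and commutativity.

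The only imprecision is your description of the mixed terms. The ``$\{\cdot,\cdot\}^{(1)}\otimes\{\cdot,\cdot\}^{(2)}$-type contributions'' are not a separate family. Every mixed term contains exactly one $B_1$-bracket and one $B_2$-bracket, each multiplied by the remaining third argument, and these are precisely the twelve terms to be paired.
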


\begin{proof} 
We denote $m^{\boxtimes}$ the multiplication on $\mathbf{P}_1 \otimes \mathbf{P}_2$, recall that 
\[ m_{p,q}^{\boxtimes}(a_1 \otimes a_2, b_1 \otimes b_2) = m_{1;p,q}^{\boxtimes}(a_1, b_1) \otimes m_{2;p,q}^{\boxtimes}(a_2, b_2) \]
for $a_1 \in (\mathbf{P}_1)_p$, $a_2 \in (\mathbf{P}_2)_p$, $b_1 \in (\mathbf{P}_1)_{q}$ and $b_2 \in (\mathbf{P}_2)_q$. We check skew-symmetry 
\[ \tau_{q,p} \cdot \left\{ b_1 \otimes b_2, a_1 \otimes a_2 \right\}_{q,p} = \tau_{q,p} \cdot \Big( \left\{ b_1, a_1 \right\}_{q,p}^{(1)} \otimes m^{\boxtimes}_{2;q,p}(b_2, a_2) + m_{1;q,p}^{\boxtimes}(b_1, a_1) \otimes \left\{ b_2, a_2 \right\}_{q,p}^{(2)} \Big). \]
The permutation $\tau_{q,p}$ acts identically on both tensor factors. Using skew-symmetry of $B_1,B_2$ and commutativity for $m_2^{\boxtimes}, m_1^{\boxtimes}$, we get 
\[ \tau_{q,p} \cdot \left\{ b_1 \otimes b_2, a_1 \otimes a_2 \right\}_{q,p} = - \left\{ a_1 \otimes a_2, b_1 \otimes b_2 \right\}_{p,q}. \]
Hence $B$ is skew-symmetric. We now check the $\boxtimes$-Leibniz rule: for $a_1 \otimes a_2 \in (\mathbf{P}_1 \otimes \mathbf{P}_2)_p$, $b_1 \otimes b_2 \in (\mathbf{P}_1 \otimes \mathbf{P}_2)_q$ and $c_1 \otimes c_2 \in (\mathbf{P}_1 \otimes \mathbf{P}_2)_r$, a direct computation gives 
\[ \begin{array}{lll} \left\{ a_1 \otimes a_2, m_{q,r}^{\boxtimes} \big( b_1 \otimes b_2, c_1 \otimes c_2 \big) \right\}_{p,q+r} &=& \left\{ a_1, m_{1;q,r}^{\boxtimes}(b_1, c_1) \right\}_{p,q+r}^{(1)} \otimes m_{2;p,q+r}^{\boxtimes} \big( a_2, m_{2;q,r}^{\boxtimes}(b_2, c_2) \big) \\
&& + m_{1;p,q+r}^{\boxtimes} \big( a_1, m_{1;q,r}^{\boxtimes}(b_1, c_1) \big) \otimes \left\{ a_2, m_{2;q,r}^{\boxtimes}(b_2, c_2) \right\}_{p,q+r}^{(2)}. \end{array} \]
Applying the $\boxtimes$-Leibniz rule for $B_1$ and $B_2$ and the associativity of $m_1^{\boxtimes}, m_2^{\boxtimes}$, one obtains four terms
\[ \begin{array}{lll} \left\{ a_1 \otimes a_2, m_{q,r}^{\boxtimes} \big( b_1 \otimes b_2, c_1 \otimes c_2 \big) \right\}_{p,q+r} &=& m_{1;p+q,r}^{\boxtimes} \big( \left\{ a_1, b_1 \right\}_{p,q}^{(1)}, c_1 \big) \otimes m_{2;p+q,r}^{\boxtimes} \big( m_{2;p,q}(a_2,b_2), c_2 \big) \\ 
&& + \tau_{p,q,r} \cdot \Big[ m_{1;q,p+r}^{\boxtimes} \big( b_1, \left\{ a_1, c_1 \right\}_{p,r}^{(1)} \big) \otimes m_{2;q,p+r}^{\boxtimes} \big( b_2, m_{2;p,r}^{\boxtimes}(a_2, c_2) \big) \Big] \\ 
&& + m_{1;p+q,r}^{\boxtimes} \big( m_{1;p,q}^{\boxtimes}(a_1, b_1), c_1 \big) \otimes m_{2;p+q,r}^{\boxtimes} \big( \left\{ a_2, b_2 \right\}_{p,q}^{(2)}, c_2 \big) \\ 
&& + \tau_{p,q,r} \cdot \Big[ m_{1;q,p+r}^{\boxtimes} \big( b_1, (m_1)_{p,r}^{\boxtimes}(a_1, c_1) \big) \otimes m_{2;q,p+r}^{\boxtimes} \big( b_2, \left\{ a_2, c_2 \right\}_{p,r}^{(2)} \big) \Big].
\end{array} \]
By combining the first and third term, and the second and last terms, one gets 
\[ \begin{array}{lll} \left\{ a_1 \otimes a_2, m_{q,r}^{\boxtimes} \big( b_1 \otimes b_2, c_1 \otimes c_2 \big) \right\}_{p,q+r} &=& m_{p+q,r}^{\boxtimes} \Big( \left\{ a_1 \otimes a_2, b_1 \otimes b_2 \right\}_{p,q}, c_1 \otimes c_2 \Big) \\ 
&& +\tau_{p,q,r} \cdot m_{q,p+r}^{\boxtimes} \Big( b_1 \otimes b_2, \left\{ a_1 \otimes a_2, c_1 \otimes c_2 \right\}_{p,r} \Big) \end{array} \]
which is the $\boxtimes$-Leibniz rule. Now define the jacobiator $J: (\mathbf{P}_1 \otimes \mathbf{P}_2)^{\boxtimes 3} \longrightarrow \mathbf{P}_1 \otimes \mathbf{P}_2$ by 
\[ \begin{array}{lll} J_{p,q,r}(a_1 \otimes a_2, b_1 \otimes b_2, c_1 \otimes c_2) &=& \left\{ \left\{ a_1 \otimes a_2, b_1 \otimes b_2 \right\}_{p,q}, c_1 \otimes c_2 \right\}_{p+q,r} \\ 
&& + \sigma_{p,q,r} \cdot \left\{ \left\{ b_1 \otimes b_2, c_1 \otimes c_2 \right\}_{q,r}, a_1 \otimes a_2 \right\}_{q+r,p} \\
&& + \sigma_{p,q,r}^{(2)} \cdot \left\{ \left\{ c_1 \otimes c_2, a_1 \otimes a_2 \right\}_{r,p}, b_1 \otimes b_2 \right\}_{r+p,q}. \end{array} \]
Expanding all the terms using the $\boxtimes$-Leibniz rule, this yields again six "pure" terms and twelve "mixed" terms and similarly to the previous proof, all these terms vanish either by the fact that $B_1$ and $B_2$ satisfy the Jacobi identity or by skew-symmetry.
\end{proof}

In fact, $\Big( \mathbf{Pois}^{\boxtimes} \big( \mathcal{VB}_{\mathfrak{S}_{\bullet}}(M^{\bullet}) \big), \otimes \Big)$ is a symmetric monoidal category.

\begin{Example}
In particular, if $\mathbf{P}$ is a Poisson $\boxtimes$-algebra and $\mathbf{A}$ is simply a $\boxtimes$-algebra, then $\mathbf{P} \otimes \mathbf{A}$ carries a natural Poisson structure induced by $\mathbf{P}$, with zero bracket on $\mathbf{A}$.
\end{Example}

For the usual Poisson bracket, the Leibniz rule is interpreted as a biderivation property in each variable. We now introduce the analogous notion in the $\boxtimes$-setting. Let us fix an equivariant commutative $\boxtimes$-algebra $(\mathbf{A}, m^{\boxtimes}, u^{\boxtimes})$. 

\begin{Definition} 
\textbf{A left $(\mathbf{A}, \boxtimes)$-module} is an equivariant vector bundle $\mathbf{N} \in \mathcal{VB}_{\mathfrak{S}_{\bullet}}(M^{\bullet})$ together with an action map $\mu^{\boxtimes}: \mathbf{A} \boxtimes \mathbf{N} \longrightarrow \mathbf{N}$ satisfying the usual associativity and unit axioms, namely 
\[ \mu^{\boxtimes} \circ \big( m^{\boxtimes} \boxtimes \mathrm{Id}_{\mathbf{N}} \big) = \mu^{\boxtimes} \circ \big( \mathrm{Id}_{\mathbf{A}} \boxtimes \mu^{\boxtimes} \big) \]
and 
\[ \mu^{\boxtimes} \circ \big( u^{\boxtimes} \boxtimes \mathrm{Id}_{\mathbf{N}} \big) = \lambda^{\boxtimes}_{\mathbf{N}} \]
where $\lambda_{\mathbf{N}}^{\boxtimes}: \mathbf{I}_{\boxtimes} \boxtimes \mathbf{N} \overset{\cong} \longrightarrow \mathbf{N}$ is the left unitor map.
\end{Definition}

Concretely, this amounts to equivariant maps $\mu_{p,q}: A_p \boxtimes^{\mathrm{ext}} N_q \longrightarrow N_{p+q}$ such that for all $a \in A_p$, $b \in A_q$ and $x \in N_r$, one has the associativity relation 
\[ \mu_{p+q,r}^{\boxtimes} \Big( m_{p,q}^{\boxtimes}(a,b), x \Big) = \mu_{p,q+r}^{\boxtimes} \Big( a, \mu_{q,r}^{\boxtimes}(b,x) \Big) \]
together with the unit relation 
\[ \forall x \in N_q, \qquad \mu_{0,q}^{\boxtimes} \big( \mathds{1}_0^{\boxtimes}, x \big) = x. \]

Since $\mathbf{A}$ is commutative, a left $(\mathbf{A}, \boxtimes)$-module structure induces a right action on $\mathbf{N}$, namely the equivariant map $\widetilde{\mu}^{\boxtimes}: \mathbf{N} \boxtimes \mathbf{A} \longrightarrow \mathbf{N}$ defined by $\widetilde{\mu}^{\boxtimes} = \mu^{\boxtimes} \circ \beta^{\boxtimes}_{\mathbf{N}, \mathbf{A}}$, where $\beta^{\boxtimes}$ is the $\boxtimes$-braiding. Componentwise, this reads
\[ \forall x \in N_p, \quad \forall a \in A_q, \qquad \widetilde{\mu}_{p,q}^{\boxtimes}(x,a) = \tau_{p,q} \cdot \mu_{q,p}^{\boxtimes}(a,x) \] 
where $\tau_{p,q}$ is the block permutation exchanging the two blocks of sizes $p$ and $q$. In other words, $\mathbf{N}$ has a structure of a $\boxtimes$-bimodule over $\mathbf{A}$.

\begin{Definition}
Let $\mathbf{N}$ be a left $(\mathbf{A},\boxtimes)$-module. \textbf{A $\boxtimes$-derivation} of $\mathbf{A}$ with values in $\mathbf{N}$ is an equivariant bundle map $D: \mathbf{A} \rightarrow \mathbf{N}$ such that 
\[ D \circ m^{\boxtimes} = \mu^{\boxtimes} \circ \big( \mathrm{Id}_{\mathbf{A}} \boxtimes D \big) + \widetilde{\mu}^{\boxtimes} \circ \big( D \boxtimes \mathrm{Id}_{\mathbf{A}} \big). \]
Denote by $\mathrm{Der}_{\boxtimes}(\mathbf{A}, \mathbf{N})$ the vector space of all $\boxtimes$-derivations of $\mathbf{A}$ with values in the $(\mathbf{A}, \boxtimes)$-module $\mathbf{N}$. 
\end{Definition}

Concretely, this means that there is a collection of equivariant maps $D_n: A_n \longrightarrow N_n$ such that 
\[ \forall a \in A_p, \quad \forall b \in A_q, \qquad D_{p+q} \Big( m_{p,q}^{\boxtimes}(a,b) \Big) = \mu_{p,q}^{\boxtimes} \Big( a, D_q(b) \Big) + \widetilde{\mu}_{p,q}^{\boxtimes} \Big( D_p(a), b \Big) = \mu_{p,q}^{\boxtimes} \big( a, D_q(b) \big) + \tau_{p,q} \cdot \mu_{q,p}^{\boxtimes} \big( b, D_p(a) \big). \]
When there is no ambiguity, the action map $\mu^{\boxtimes}(a,x)$ will be denoted $a \cdot x$. The $\boxtimes$-algebra $\mathbf{A}$ carries a natural $\boxtimes$-module structure given by its $\boxtimes$-multiplication. Thus, when $\mathbf{N} = \mathbf{A}$, this recovers the usual notion of a $\boxtimes$-derivation on $\mathbf{A}$. 

As in the ordinary case, any $\boxtimes$-derivation annihilates the $\boxtimes$-unit. Indeed, applying the Leibniz rule to $m_{0,0}^{\boxtimes} \Big( \mathds{1}_{0}^{\boxtimes}, \mathds{1}_0^{\boxtimes} \Big)  = \mathds{1}_0^{\boxtimes}$ gives 
\[ D_0 \big( \mathds{1}_0^{\boxtimes} \big) = 2 D_0 \big( \mathds{1}_0^{\boxtimes} \big). \]
Hence, $D_0 \big( \mathds{1}_0^{\boxtimes} \big) = 0$.

Finally, a $\boxtimes$-biderivation is a map $B: \mathbf{A} \boxtimes \mathbf{A} \longrightarrow \mathbf{N}$ which is a $\boxtimes$-derivation in each variable. We denote by $\mathrm{Bider}_{\boxtimes}(\mathbf{A}, \mathbf{N})$ the vector space of all $\boxtimes$-biderivations. In particular, a $\boxtimes$-Poisson bracket is a $\boxtimes$-biderivation.

When $\mathbf{A}$ is a free $\boxtimes$-algebra, there is an easy description of $\mathrm{Der}_{\boxtimes}(\mathbf{A}, \mathbf{N})$. 

\begin{Proposition} \label{lem28}
Let $\mathbf{W} \in \mathcal{VB}_{\mathfrak{S}_{\bullet}}(M^{\bullet})$ and $\mathbf{N}$ is a $\big( \mathbf{S}^{\boxtimes}(\mathbf{W}), \boxtimes \big)$-module, then restricting along the inclusion $\mathbf{W} \hookrightarrow \mathbf{S}^{\boxtimes}(\mathbf{W})$ induces a natural bijection 
\[ \mathrm{Der}_{\boxtimes} \Big( \mathbf{S}^{\boxtimes}(\mathbf{W}), \mathbf{N} \Big) \cong \mathrm{Mor}_{\mathfrak{S}_{\bullet}}(\mathbf{W}, \mathbf{N}). \]
In particular, 
\[ \mathrm{Bider}_{\boxtimes} \big( \mathbf{S}^{\boxtimes}(\mathbf{W}), \mathbf{N} \big) \cong \mathrm{Mor}_{\mathfrak{S}_{\bullet}} \big( \mathbf{W} \boxtimes \mathbf{W}, \mathbf{N} \big). \]
\end{Proposition}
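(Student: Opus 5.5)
The plan is to mimic the classical argument that a derivation out of a free commutative algebra is determined by its values on generators, with Leibniz used to extend. Write $\mathbf{A} = \mathbf{S}^{\boxtimes}(\mathbf{W})$ and let $\iota : \mathbf{W} \hookrightarrow \mathbf{A}$ be the inclusion of $\mathbf{S}^{\boxtimes 1}(\mathbf{W}) = \mathbf{W}$.

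\textbf{The restriction map.} The map $D \mapsto D \circ \iota$ lands in $\mathrm{Mor}_{\mathfrak{S}_{\bullet}}(\mathbf{W}, \mathbf{N})$, since $D$ is equivariant and $\iota$ is equivariant.

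\textbf{Injectivity.} Every element of $\mathbf{S}^{\boxtimes n}(\mathbf{W})$ is, fibrewise, a finite sum of classes $w_1 \boxdot \cdots \boxdot w_n$ of pure tensors. Iterating the componentwise Leibniz rule gives
\[ D(w_1 \boxdot \cdots \boxdot w_n) = \sum_{i=1}^n \pm\text{-free block-permuted } \mu^{\boxtimes}\big( w_1 \boxdot \cdots \widehat{w_i} \cdots \boxdot w_n ,\, D(w_i) \big), \]
with the appropriate block permutation $\tau$ bringing the $i$-th configuration block into place. In addition, $D$ vanishes on $\mathbf{I}_{\boxtimes} = \mathbf{S}^{\boxtimes 0}$, as shown just before the statement. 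Hence $D$ is determined by $D \circ \iota$.

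\textbf{Surjectivity.} Given $\varphi \in \mathrm{Mor}_{\mathfrak{S}_{\bullet}}(\mathbf{W}, \mathbf{N})$, I will first define $\widetilde D$ on $\mathbf{T}^{\boxtimes}(\mathbf{W})$. On $\mathbf{W}^{\boxtimes n}$, and summand by summand on $\omega^*(W_{i_1} \boxtimes^{\mathrm{ext}} \cdots \boxtimes^{\mathrm{ext}} W_{i_n})$, I set
\[ \widetilde D(w_1 \boxtimes \cdots \boxtimes w_n) = \sum_{i=1}^n \tau_i \cdot \mu^{\boxtimes}\big( [w_1 \boxtimes \cdots \widehat{w_i} \cdots \boxtimes w_n],\, \varphi(w_i) \big), \]
where $\tau_i$ is the deck permutation $\tau_\sigma^{(i_1,\dots,i_n)}$ moving block $i$ to the end. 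This formula is multilinear, so it defines a bundle map on each external tensor product. It is $(\mathfrak{S}_{i_1}\times\cdots\times\mathfrak{S}_{i_n})$-equivariant because $\varphi$ and $\mu^{\boxtimes}$ are equivariant. By Frobenius reciprocity it therefore extends to a $\mathfrak{S}_k$-equivariant map on the induced bundle. On $\mathbf{I}_{\boxtimes}$ I set $\widetilde D = 0$.

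The remaining checks are as follows.

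(i) $\widetilde D$ is invariant under the $\mathfrak{S}_n$-action permuting factors, so it descends to $\mathbf{S}^{\boxtimes n}(\mathbf{W})$. Permuting the factors permutes the summands in the definition. It also changes which deck permutation is applied, and the composition rule $\tau_\sigma \tau_{\sigma'} = \tau_{\sigma\sigma'}$ matches the deck permutations. Finally, $\mu^{\boxtimes}$ restricted to $\mathbf{A}$-arguments is insensitive to reordering because $\mathbf{A}$ is commutative.

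(ii) The descended map satisfies the Leibniz rule. For $a = [w_1\cdots w_p]$ and $b = [w_{p+1}\cdots w_{p+q}]$, the sum over $i \le p$ gives $\widetilde\mu^{\boxtimes}(\widetilde D a, b)$, using associativity of $\mu^{\boxtimes}$ and commutativity. The sum over $i > p$ gives $\mu^{\boxtimes}(a, \widetilde D b)$.

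I expect (i) and the block-permutation bookkeeping in (ii) to be the main obstacle. I will handle both by working with the unshuffle/shuffle representatives and the proposition on deck permutations proved for $\mathbf{S}^{\boxtimes}$, so that each identity reduces to an equality of permutations of configuration blocks.

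\textbf{The biderivation statement.} I will apply the first bijection twice. Fix the second argument and consider a biderivation $B$. For fixed data in the second slot, $B$ is a derivation in the first, so it is determined by its restriction to $\mathbf{W} \boxtimes \mathbf{A}$. To make this precise, I will view $\mathrm{Mor}(\mathbf{W}\boxtimes -, \mathbf{N})$ through the internal structure: the object $\mathbf{N}$-valued maps from $\mathbf{W}\boxtimes\mathbf{A}$ form a module in which derivations in the $\mathbf{A}$-slot are again determined on $\mathbf{W}$. Symmetrically, the argument above then shows that $B$ is determined by its restriction to $\mathbf{W}\boxtimes\mathbf{W}$. Conversely, starting from $\kappa : \mathbf{W}\boxtimes\mathbf{W}\to\mathbf{N}$, I will extend first in the second variable and then in the first by the same explicit Leibniz formula, giving the double sum
\[ B\big([v_1\cdots v_p],[w_1\cdots w_q]\big) = \sum_{i,j} \tau_{ij}\cdot \mu^{\boxtimes}\big([\ldots\widehat{v_i}\ldots\widehat{w_j}\ldots],\, \kappa(v_i,w_j)\big). \]
Checking that this is a derivation in each slot is the same computation as in (ii).
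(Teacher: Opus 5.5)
Your proposal is correct and follows the paper's own argument. Uniqueness comes from expanding $D(w_1 \boxdot \cdots \boxdot w_n)$ by the Leibniz rule together with $D(\mathds{1}_0^{\boxtimes}) = 0$, and existence comes from defining the derivation by that same formula and checking the Leibniz rule on products of monomials; your extra steps (building $\widetilde D$ on $\mathbf{T}^{\boxtimes}(\mathbf{W})$ via Frobenius reciprocity and checking $\mathfrak{S}_n$-invariance so that it descends to $\mathbf{S}^{\boxtimes}(\mathbf{W})$) supply the well-definedness and equivariance that the paper leaves implicit. For the biderivation clause, which the paper states without proof, drop the appeal to an ``internal'' Hom: maps from $X_q$ to $N_{n+q}$ do not form a vector bundle over $M^n$, so no such object is available here. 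Instead, run the same componentwise argument with a spectator slot, i.e.\ show that maps $\mathbf{S}^{\boxtimes}(\mathbf{W}) \boxtimes \mathbf{X} \to \mathbf{N}$ which are derivations in the first slot correspond to morphisms $\mathbf{W} \boxtimes \mathbf{X} \to \mathbf{N}$, which is exactly what your double-sum formula implements.
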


\begin{proof} 
Let $D \in \mathrm{Der}_{\boxtimes} \Big( \mathbf{S}^{\boxtimes}(\mathbf{W}), \mathbf{N} \Big)$, then on a monomial $w_1 \boxdot \ldots \boxdot v_n$, one has 
\[ D(w_1 \boxdot \ldots \boxdot w_n) = \sum \limits_{i=1}^{n} (w_1 \boxdot \ldots \boxdot \widehat{w_i} \boxdot \ldots \boxdot w_n) \cdot D(w_i). \]
Therefore, $D$ is entirely determined by its values on $\mathbf{W}$, and the map $\mathrm{Der}_{\boxtimes} \big( \mathbf{S}^{\boxtimes}(\mathbf{W}), \mathbf{N} \big) \longrightarrow \mathrm{Mor}_{\mathfrak{S}_{\bullet}}(\mathbf{W}, \mathbf{N})$ is injective. Conversely, let $f \in \mathrm{Mor}_{\mathfrak{S}_{\bullet}}(\mathbf{W}, \mathbf{N})$. Define $D_f \in \mathrm{Der}_{\boxtimes} \big( \mathbf{S}^{\boxtimes}(\mathbf{W}), \mathbf{N} \big)$ by $D_f \big( \mathds{1}_0^{\boxtimes} \big) = 0$ and by 
\[ D_f(w_1 \boxdot \ldots \boxdot w_n) = \sum \limits_{i=1}^{n} (w_1 \boxdot \ldots \boxdot \widehat{w_i} \boxdot \ldots \boxdot w_n) \cdot f(w_i) \] 
and extend by linearity. Now if $v = v_1 \boxdot \ldots \boxdot v_p$ and $w = w_1 \boxdot \ldots \boxdot w_q$, then 
\[ \begin{array}{lll} D_f(v \boxdot w) &=& D_f \big( v_1 \boxdot \ldots \boxdot v_p \boxdot w_1 \boxdot \ldots \boxdot w_q \big) \\ 
&=& \sum \limits_{i=1}^{p} \big( v_1 \boxdot \ldots \boxdot \widehat{v_i} \boxdot \ldots \boxdot w \big) \cdot f(v_i) + \sum \limits_{j=1}^{q} \big( v \boxdot w_1 \boxdot \ldots \boxdot \widehat{w_j} \boxdot \ldots \boxdot w_q \big) \cdot f(w_j) \\
&=& \Big( \sum \limits_{i=1}^{p} \big( v_1 \boxdot \ldots \boxdot \widehat{v_i} \boxdot \ldots \boxdot v_p \big) \cdot f(v_i) \Big) \cdot w + v \cdot \Big( \sum \limits_{j=1}^{q} \big( w_1 \boxdot \ldots \boxdot \widehat{w_j} \boxdot \ldots \boxdot w_q \big) \cdot f(w_j) \Big) \\ 
&=& D_f(v) \cdot w + v \cdot D_f(w) \end{array} \] 
where we have used also the associativity of the $\boxtimes$-action. Therefore, $D_f$ is a $\boxtimes$-derivation extending $f$. So the map $\mathrm{Der}_{\boxtimes} \big( \mathbf{S}^{\boxtimes}(\mathbf{W}), \mathbf{N} \big) \longrightarrow \mathrm{Mor}_{\mathfrak{S}_{\bullet}}(\mathbf{W}, \mathbf{N})$.
\end{proof}

Similarly, one can define the notion of $\otimes$-module over a commutative $\otimes$-algebra and derivations with values in $\otimes$-modules. More precisely, if $\mathbf{A}$ is a commutative $\otimes$-algebra bundle, then $\mathbf{N}$ is a $(\mathbf{A}, \otimes)$-module if we have an action map $\mu^{\otimes}: \mathbf{A} \otimes \mathbf{N} \longrightarrow \mathbf{N}$ satisfying the usual associativity and unit axioms. An $\otimes$-derivation on $\mathbf{A}$ with values in $\mathbf{N}$ is an equivariant bundle map $D: \mathbf{A} \rightarrow \mathbf{N}$ such that 
\[ D \circ m^{\otimes} = \mu^{\otimes} \circ \big( \mathrm{Id}_{\mathbf{A}} \otimes D \big) + \widetilde{\mu}^{\otimes} \circ \big( D \otimes \mathrm{Id}_{\mathbf{A}} \big). \]
We will denote by $\mathrm{Der}_{\otimes}(\mathbf{A}, \mathbf{N})$ the vector space of all $\otimes$-derivations of $\mathbf{A}$ with values in the $(\mathbf{A}, \otimes)$-module $\mathbf{N}$. When $\mathbf{A}$ is a free $\otimes$-algebra, there is an easy description of $\mathrm{Der}_{\otimes}(\mathbf{A}, \mathbf{N})$, analogous to the previous result.
\begin{Lemma} \label{lem27}
Let $\mathbf{V} \in \mathcal{VB}_{\mathfrak{S}_{\bullet}}(M^{\bullet})$ and let $\mathbf{N}$ be a $\big( \mathbf{S}^{\otimes}(\mathbf{V}), \otimes \big)$-module, then restriction along the inclusion $V \hookrightarrow \mathbf{S}^{\otimes}(\mathbf{V})$ induces a natural bijection 
\[ \mathrm{Der}_{\otimes} \big( \mathbf{S}^{\otimes}(\mathbf{V}),  \mathbf{N}) \cong \mathrm{Mor}_{\mathfrak{S}_{\bullet}}(\mathbf{V}, \mathbf{N}). \] 
\end{Lemma}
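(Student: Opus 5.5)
Since $\mathbf{S}^{\otimes}(\mathbf{V})$ is built degreewise and the Hadamard product acts degreewise, the statement reduces to the classical fact, applied fibrewise and equivariantly in each degree $k$: a derivation of the symmetric algebra bundle $\mathbf{S}^{\otimes}(V_k)$ over $M^k$ with values in an $\mathbf{S}^{\otimes}(V_k)$-module bundle $N_k$ is the same as a linear map $V_k \to N_k$. I would mirror the proof of Proposition \ref{lem28}, working one degree $k$ at a time and keeping track of $\mathfrak{S}_k$-equivariance.

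\emph{Injectivity.} Let $D \in \mathrm{Der}_{\otimes}\big(\mathbf{S}^{\otimes}(\mathbf{V}), \mathbf{N}\big)$. Applying the Leibniz rule to $\mathds{1}_k^{\otimes} \odot \mathds{1}_k^{\otimes} = \mathds{1}_k^{\otimes}$ gives $D_k(\mathds{1}_k^{\otimes}) = 0$. Induction on $n$ then gives, on monomials in a fibre of $\mathbf{S}^{\otimes n}(V_k)$,
\[ D_k(v_1 \odot \cdots \odot v_n) = \sum_{i=1}^{n} (v_1 \odot \cdots \odot \widehat{v_i} \odot \cdots \odot v_n) \cdot D_k(v_i). \]
Such monomials span each fibre, so $D$ is determined by its restriction to $\mathbf{V}$. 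That restriction is an equivariant bundle map $\mathbf{V} \to \mathbf{N}$, because $D$ is equivariant and the inclusion $\mathbf{V} \hookrightarrow \mathbf{S}^{\otimes}(\mathbf{V})$ is equivariant.

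\emph{Surjectivity.} Given $f \in \mathrm{Mor}_{\mathfrak{S}_\bullet}(\mathbf{V}, \mathbf{N})$, first define $\widehat{D}_f$ on $V_k^{\otimes n}$ by
\[ \widehat{D}_f(v_1 \otimes \cdots \otimes v_n) = \sum_i (v_1 \odot \cdots \widehat{v_i} \cdots \odot v_n) \cdot f(v_i), \]
where the module action uses the commutative product of $\mathbf{S}^{\otimes}(\mathbf{V})$. The main obstacle is well-definedness, and I would handle it in two steps. First, the formula is multilinear in $(v_1, \ldots, v_n)$, so it defines a smooth bundle map on the tensor power; as a check, over a local trivialization it is given by polynomial expressions in smooth data. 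Second, it is invariant under the $\mathfrak{S}_n$-action permuting factors, since permuting the $v_i$ only reorders the sum and permutes the arguments of the commutative product. Hence $\widehat{D}_f$ vanishes on $W_{k,n}^{\otimes}$ and descends to $\mathbf{S}^{\otimes n}(V_k)$. Set $D_f(\mathds{1}_k^{\otimes}) = 0$.

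It remains to check three properties of $D_f$.
\begin{enumerate}
\item \emph{Equivariance.} The $\mathfrak{S}_k$-action commutes with the factor permutation. The maps $f$, $\odot$ and the action $\mu^{\otimes}$ are all $\mathfrak{S}_k$-equivariant, so $D_f$ is too.
\item \emph{Leibniz rule.} For $v = v_1 \odot \cdots \odot v_p$ and $w = w_1 \odot \cdots \odot w_q$, split the sum over the $p+q$ factors into the $v$-part and the $w$-part. Associativity of the action and commutativity (through the right action $\widetilde{\mu}^{\otimes}$) give $D_f(v \odot w) = v \cdot D_f(w) + D_f(v) \cdot w$, exactly as in the computation in Proposition \ref{lem28}.
\item \emph{Inverse bijections.} $D_f$ restricts to $f$ on $\mathbf{V}$, and by injectivity $D_{D|_{\mathbf{V}}} = D$.
\end{enumerate}
Naturality of the bijection in $\mathbf{N}$ is then immediate from the construction.
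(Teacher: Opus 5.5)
Your proposal is correct and takes the route the paper intends. The paper gives no separate proof of Lemma~\ref{lem27}; it only says the result is analogous to Proposition~\ref{lem28}, and your argument is that proof carried out degreewise in the Hadamard setting (Leibniz kills the unit, the monomial formula gives injectivity, and an explicit $D_f$ with a Leibniz check gives surjectivity), while also verifying the points the paper leaves implicit: $\widehat{D}_f$ is $\mathfrak{S}_n$-invariant so it descends to the coinvariant quotient, $D_f$ is $\mathfrak{S}_k$-equivariant, and $D_f$ restricts to $f$ by the module unit axiom.
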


\subsection{Equivariant Poisson $2$-Algebra Bundles over Configuration Spaces} \label{4.2}

A Poisson structure on a $2$-algebra bundle $\mathbf{P}$ should satisfy an additional compatibility condition with the second monoidal structure $\otimes$, analogous in spirit to the $\boxtimes$-Leibniz rule stated above which we have interpreted as a $\boxtimes$-derivation property. In order to formulate this $\otimes$-compatibility, we first describe the natural $\otimes$-module structures carried by the components $\mathbf{P}_{p+q}$.

Consider the left extension map 
\[ \ell: a \in \mathbf{P} \longmapsto m^{\boxtimes} \big( a, \mathds{1}^{\otimes} \big) \in \mathbf{P}. \]
That is, for $p,q \in \mathbb{N}$, we have an induced map $\ell_{p,q}$ given by
\[ \ell_{p,q}: a \in \mathbf{P}_p \longmapsto m_{p,q}^{\boxtimes} \big( a, \mathds{1}_{q}^{\otimes} \big) \in \mathbf{P}_{p+q}. \]
The map $\ell_{p,q}$ inserts $q$ dummy points, thus extending an element defined on a $p$-configuration to an element defined on a $(p+q)$-configuration.

\begin{Lemma} 
The map $\ell_{p,q}$ is a $\otimes$-algebra morphism between $\big(\mathbf{P}_p, m_p^{\otimes}, \mathds{1}_p^{\otimes} \big)$ and $\big( \mathbf{P}_{p+q}, m_{p+q}^{\otimes}, \mathds{1}_{p+q}^{\otimes} \big)$.
\end{Lemma}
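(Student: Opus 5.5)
The plan is to deduce both properties of $\ell_{p,q}$ directly from the two concrete consequences of the $2$-algebra axioms recorded in the Aguiar--Mahajan reformulation. Here $\mathbf{P}$ is a (commutative) equivariant $2$-algebra bundle. The first consequence is the interchange law for elementary multiplications,
\[ m_{p+q}^{\otimes} \Big( m_{p,q}^{\boxtimes}(a_1, b_1), m_{p,q}^{\boxtimes}(a_2, b_2) \Big) = m_{p,q}^{\boxtimes} \Big( m_p^{\otimes}(a_1, a_2), m_q^{\otimes}(b_1, b_2) \Big). \]
The second is the compatibility of units, $m_{p,q}^{\boxtimes}(\mathds{1}_p^{\otimes}, \mathds{1}_q^{\otimes}) = \mathds{1}_{p+q}^{\otimes}$. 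Linearity and smoothness of $\ell_{p,q}$ are immediate, because $\ell_{p,q}$ is the restriction of the bilinear bundle map $m_{p,q}^{\boxtimes}$ to the slice where the second argument is the section $\mathds{1}_q^{\otimes}$.

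For multiplicativity, take $a_1, a_2 \in \mathbf{P}_p$ over the same configuration $X$ and fix a configuration $Y \in M^q$. Apply the interchange law with $b_1 = b_2 = \mathds{1}_q^{\otimes}$ over $Y$. This gives
\[ m_{p+q}^{\otimes}\big(\ell_{p,q}(a_1), \ell_{p,q}(a_2)\big) = m_{p,q}^{\boxtimes}\big(m_p^{\otimes}(a_1,a_2), m_q^{\otimes}(\mathds{1}_q^{\otimes}, \mathds{1}_q^{\otimes})\big). \]
Since $m_q^{\otimes}(\mathds{1}_q^{\otimes}, \mathds{1}_q^{\otimes}) = \mathds{1}_q^{\otimes}$ by the $\otimes$-unit axiom, the right-hand side equals $\ell_{p,q}(m_p^{\otimes}(a_1,a_2))$. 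For the unit, the identity $\ell_{p,q}(\mathds{1}_p^{\otimes}) = m_{p,q}^{\boxtimes}(\mathds{1}_p^{\otimes},\mathds{1}_q^{\otimes}) = \mathds{1}_{p+q}^{\otimes}$ is exactly the second unit condition.

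The one point needing care is what $\ell_{p,q}$ means as a bundle map. It covers the inclusion $X \mapsto (X,Y)$ for each fixed $Y$, or equivalently it is a map $\mathbf{P}_p \boxtimes^{\mathrm{ext}} (M^q\times\mathbb{K}) \to \mathbf{P}_{p+q}$ over $M^p\times M^q$. The statement should therefore be read fibrewise over pairs $(X,Y)$: for each fixed $Y$ we have an algebra morphism $(\mathbf{P}_p)_X \to (\mathbf{P}_{p+q})_{(X,Y)}$. Once this is made explicit, the argument is the two-line computation above.

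If one also wants compatibility with the equivariance, it follows from the $(\mathfrak{S}_p\times\mathfrak{S}_q)$-equivariance of $m^{\boxtimes}_{p,q}$, since $\mathds{1}_q^{\otimes}$ is $\mathfrak{S}_q$-invariant. So $\ell_{p,q}$ is $\mathfrak{S}_p$-equivariant, where $\mathfrak{S}_p$ sits in $\mathfrak{S}_{p+q}$ as permutations of the first block.
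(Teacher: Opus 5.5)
Your proof is correct and follows the paper's approach. The paper's proof only says the claim ``follows directly from the interchange relations in a double monoid.'' You make this explicit: you specialize the elementwise interchange law to $b_1=b_2=\mathds{1}_q^{\otimes}$ and then use the unit condition $m_{p,q}^{\boxtimes}(\mathds{1}_p^{\otimes},\mathds{1}_q^{\otimes})=\mathds{1}_{p+q}^{\otimes}$. Your remark that $\ell_{p,q}$ must be read fibrewise over pairs $(X,Y)$, equivalently as a map out of $\mathbf{P}_p\boxtimes^{\mathrm{ext}}(M^q\times\mathbb{K})$, is a useful precision that the paper leaves implicit.
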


\begin{proof} 
This follows directly from the interchange relations in a double monoid.
\end{proof}

Therefore, $\mathbf{P}_{p+q}$ is canonically a (left) $(\mathbf{P}_p, \otimes)$-module with the action
\[ \forall a \in \mathbf{P}_p, \quad \forall v \in \mathbf{P}_{p+q}, \qquad a \cdot v = m_{p+q}^{\otimes} \big( \ell_{p,q}(a), v \big). \]
Similarly, there is a right extension map $r: b \in \mathbf{P} \longmapsto m^{\boxtimes} \big(\mathds{1}^{\otimes}, b \big) \in \mathbf{P}$, or equivalently, a collection of maps $(r_{p,q})_{p,q \in \mathbb{N}}$ given by 
\[ r_{p,q}: b \in \mathbf{P}_q \longmapsto m_{p,q}^{\boxtimes} \big( \mathds{1}_{p}^{\otimes}, b \big) \in \mathbf{P}_{p+q} \]
which turns $\mathbf{P}_{p+q}$ canonically into a (right) $(\mathbf{P}_q, \otimes)$-module with the action 
\[ \forall b \in \mathbf{P}_q, \quad \forall v \in \mathbf{P}_{p+q}, \qquad v \cdot b = m_{p+q}^{\otimes} \big( v, r_{p,q}(b) \big). \]
Since, $m^{\otimes}$ is commutative, these left and right actions commute and hence, $\mathbf{P}_{p+q}$ is naturally a $\mathbf{P}_p$-$\mathbf{P}_q$-bimodule.

\begin{Definition} 
An \textbf{equivariant Poisson $2$-algebra bundle} is an equivariant $2$-algebra $\big( \mathbf{P}, m^{\boxtimes}, m^{\otimes} \big)$ with a Poisson bracket 
\[ B: \mathbf{P} \boxtimes \mathbf{P} \longrightarrow \mathbf{P} \] 
such that \begin{enumerate}
    \item $\mathbf{P}$ is an equivariant Poisson $\boxtimes$-algebra,
    \item $B$ satisfies the $\otimes$-Leibniz rule induced by the $\otimes$-module structures defined above, namely for each $p,q \in \mathbb{N}$, \begin{itemize}
    \item[$\bullet$] $\left\{ \cdot, \; \cdot \right\}_{p,q}$ is a derivation in the first variable with values in the left $\mathbf{P}_p$-module $\mathbf{P}_{p+q}$,
    \item[$\bullet$] $\left\{ \cdot, \; \cdot \right\}_{p,q}$ is a derivation in the second variable with values in the right $\mathbf{P}_q$-module $\mathbf{P}_{p+q}$.
    \end{itemize}
\end{enumerate}
\end{Definition}

Equivalently, the second condition can be written explicitly as follows: for all $a,b \in \mathbf{P}_p$ and $c \in \mathbf{P}_q$,
\[ \begin{array}{lll} \left\{ m_p^{\otimes}(a,b), c \right\}_{p,q} &=& m_{p+q}^{\otimes} \Big( \left\{ a,c \right\}_{p,q}, \ell_{p,q}(b) \Big) + m_{p+q}^{\otimes} \Big( \ell_{p,q}(a), \left\{ b,c \right\}_{p,q} \Big) \\ 
&=& m_{p+q}^{\otimes} \Big( \left\{ a,c \right\}_{p,q}, m_{p,q}^{\boxtimes} \big(b, \mathds{1}_q^{\otimes} \big) \Big) + m_{p+q}^{\otimes} \Big( m_{p,q}^{\boxtimes} \big(a, \mathds{1}_q^{\otimes} \big), \left\{ b,c \right\}_{p,q} \Big). \end{array} \]
Similarly, for all $a \in \mathbf{P}_p$ and $b,c \in \mathbf{P}_q$,
\[ \left\{ a, m_{q}^{\otimes}(b,c) \right\}_{p,q} = m_{p+q}^{\otimes} \Big( \left\{ a,b \right\}_{p,q}, m_{p,q}^{\boxtimes} \big( \mathds{1}_{p}^{\otimes}, c \big) \Big) + m_{p+q}^{\otimes} \Big( \left\{a,c \right\}_{p,q}, m_{p,q}^{\boxtimes} \big( \mathds{1}_p^{\otimes}, b \big) \Big). \]

\begin{Lemma} 
A Poisson bracket automatically vanishes on $\otimes$-units, that is for all $p,q \in \mathbb{N}$, we have 
\[ \forall a \in \mathbf{P}_p, \quad \forall b \in \mathbf{P}_q, \qquad \left\{ a, \mathds{1}_q^{\otimes} \right\}_{p,q} = \left\{ \mathds{1}_p^{\otimes}, b \right\}_{p,q} = 0. \]
\end{Lemma}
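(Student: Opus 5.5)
The plan is to apply the $\otimes$-Leibniz rule to the trivial factorization $\mathds{1}_q^{\otimes} = m_q^{\otimes}\big(\mathds{1}_q^{\otimes}, \mathds{1}_q^{\otimes}\big)$ of the unit. This is the Hadamard analogue of the argument given above showing that a $\boxtimes$-derivation kills $\mathds{1}_0^{\boxtimes}$.

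Fix $a \in \mathbf{P}_p$. By the second-variable $\otimes$-Leibniz rule,
\[ \left\{ a, \mathds{1}_q^{\otimes} \right\}_{p,q} = \left\{ a, m_q^{\otimes}(\mathds{1}_q^{\otimes}, \mathds{1}_q^{\otimes}) \right\}_{p,q} = 2\, m_{p+q}^{\otimes}\Big( \left\{ a, \mathds{1}_q^{\otimes} \right\}_{p,q}, r_{p,q}(\mathds{1}_q^{\otimes}) \Big). \]
We need $r_{p,q}(\mathds{1}_q^{\otimes}) = m_{p,q}^{\boxtimes}(\mathds{1}_p^{\otimes}, \mathds{1}_q^{\otimes}) = \mathds{1}_{p+q}^{\otimes}$. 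This is exactly the unit compatibility of a $2$-algebra recorded in the Aguiar--Mahajan reformulation (the units multiply along $\boxtimes$), which is the first square in condition 4.

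Since $\mathds{1}_{p+q}^{\otimes}$ is the unit of $m_{p+q}^{\otimes}$, the right-hand side equals $2\{a,\mathds{1}_q^{\otimes}\}_{p,q}$. Hence $\{a,\mathds{1}_q^{\otimes}\}_{p,q} = 2\{a,\mathds{1}_q^{\otimes}\}_{p,q}$, so it vanishes; here we assume the ground field has characteristic different from $2$, as is implicit for $\mathbb{K} = \mathbb{R}, \mathbb{C}$.

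The first slot is handled symmetrically. Using the first-variable Leibniz rule with $\ell_{p,q}(\mathds{1}_p^{\otimes}) = m_{p,q}^{\boxtimes}(\mathds{1}_p^{\otimes}, \mathds{1}_q^{\otimes}) = \mathds{1}_{p+q}^{\otimes}$, we get $\{\mathds{1}_p^{\otimes}, b\}_{p,q} = 2\{\mathds{1}_p^{\otimes}, b\}_{p,q}$. Alternatively, we can deduce this case from the first via skew-symmetry, $\{\mathds{1}_p^{\otimes}, b\}_{p,q} = -\tau_{p,q}\cdot\{b,\mathds{1}_p^{\otimes}\}_{q,p} = 0$.

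The only point requiring care is identifying $\ell_{p,q}(\mathds{1}_p^{\otimes})$ and $r_{p,q}(\mathds{1}_q^{\otimes})$ with $\mathds{1}_{p+q}^{\otimes}$. This comes directly from the unit axioms of the double monoid, or equivalently from the preceding Lemma that $\ell_{p,q}$ is a unital $\otimes$-algebra morphism. The rest is formal.
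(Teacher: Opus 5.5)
Your argument is correct and is essentially the paper's proof: the paper writes $\mathds{1}_p^{\otimes} = m_p^{\otimes}(\mathds{1}_p^{\otimes},\mathds{1}_p^{\otimes})$, applies the first-variable $\otimes$-Leibniz rule, and uses $\ell_{p,q}(\mathds{1}_p^{\otimes}) = m_{p,q}^{\boxtimes}(\mathds{1}_p^{\otimes},\mathds{1}_q^{\otimes}) = \mathds{1}_{p+q}^{\otimes}$ to get $\{\mathds{1}_p^{\otimes},b\}_{p,q} = 2\{\mathds{1}_p^{\otimes},b\}_{p,q}$. You start from the second slot instead, which the paper leaves implicit, and you state explicitly two things the paper uses silently: the characteristic $\neq 2$ assumption and where the unit identity comes from.
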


\begin{proof} 
We compute 
\[ \begin{array}{lll} \left\{ \mathds{1}_p^{\otimes}, b \right\}_{p,q} &=& \left\{ m_{p}^{\otimes} \big( \mathds{1}_p^{\otimes}, \mathds{1}_p^{\otimes} \big), b \right\}_{p,q} \\ 
&=& m_{p+q}^{\otimes} \Big( \left\{ \mathds{1}_p^{\otimes}, b \right\}_{p,q}, \underbrace{\ell_{p,q}(\mathds{1}_p^{\otimes})}_{=\mathds{1}_{p+q}^{\otimes}} \Big) + m_{p+q}^{\otimes} \Big( \underbrace{m_{p,q}^{\boxtimes} \big( \mathds{1}_p^{\otimes}, \mathds{1}_q^{\otimes} \big)}_{=\mathds{1}_{p+q}^{\otimes}}, \left\{ \mathds{1}_p^{\otimes}, b \right\}_{p,q} \Big) \\ 
&=& 2 \left\{ \mathds{1}_p^{\otimes}, b \right\}_{p,q}. \end{array} \]
Thus, $\left\{ \mathds{1}_p^{\otimes}, b \right\}_{p,q} = 0$.
\end{proof}

The following lemmas will be used in the proof of our main theorem. Although they are analogous to standard results, the presence of two distinct monoidal structures requires a precise formulation. We fix a local vector bundle $V \longrightarrow M$ and set $\mathbf{A} = \mathbf{S}^{\otimes}(V)$.

\begin{Theorem} \label{free_poisson}
Let $V$ be a local vector bundle and $k: V \boxtimes V \longrightarrow \mathbf{I}_{\otimes}$ a skew-symmetric bundle map. Then, $k$ induces a unique Poisson bracket $B$ on $\mathbf{P} := \mathbf{S}^{\boxtimes} \big( \mathbf{S}^{\otimes} V \big)$ such that $B$ extends $k$, i.e. the following diagram commutes 
\[ \xymatrix{ V \boxtimes V \ar[rr]^{k} \ar[d]_{\iota \boxtimes \iota} && \mathbf{I}_{\otimes} \ar[d]^{u^{\otimes}} \\
\mathbf{P} \boxtimes \mathbf{P} \ar[rr]^{B} && \mathbf{P} } \]
where $\iota: V \rightarrow \mathbf{P}$ is the inclusion and $u^{\otimes}: \mathbf{I}_{\otimes} \longrightarrow \mathbf{P}$ is the $\otimes$-unit. 
\end{Theorem}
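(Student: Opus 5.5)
The plan is to build $B$ in two extension steps, using the two derivation lemmas (Lemma \ref{lem27} and Proposition \ref{lem28}), and then check the Poisson axioms on generators.

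\textbf{Step 1: the $\otimes$-extension.} Since $V$ is local, $V\boxtimes V$ lives in degree $2$, and $k$ is a $\mathfrak{S}_2$-equivariant skew map $k_{1,1}:V\boxtimes^{\mathrm{ext}}V\to M^2\times\mathbb{K}$ with $k_{1,1}(v,w)=-\tau_{1,1}\cdot k_{1,1}(w,v)$. The degree-$1$ part of $\mathbf{P}$ is $\mathbf{A}=\mathbf{S}^{\otimes}(V)$. We first extend $k$ to $\{\cdot,\cdot\}_{1,1}:\mathbf{A}\boxtimes^{\mathrm{ext}}\mathbf{A}\to\mathbf{P}_2$, and we need it to be a biderivation for the $\otimes$-module structures given by $\ell_{1,1}$ and $r_{1,1}$.

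Fix $w\in V_y$. The map $v\mapsto u^{\otimes}k(v,w)$ is a morphism $V\to\mathbf{P}_2$ in the first variable. Lemma \ref{lem27}, applied fibrewise over $M$ with the $\mathbf{A}$-module $\mathbf{P}_2$ restricted along the slice $\{y\}$, extends it uniquely to a $\otimes$-derivation $\mathbf{A}\to\mathbf{P}_2$. We then extend in the second variable in the same way. Concretely,
\[ \{a_1\odot\cdots\odot a_n,\ b_1\odot\cdots\odot b_m\}_{1,1}=\sum_{i,j}k(a_i,b_j)\,(\widehat{a}\boxdot \widehat{b}), \]
where the hats denote removal of $a_i$ and $b_j$. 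Smoothness and equivariance follow from the formula, and skew-symmetry is inherited from $k$.

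\textbf{Step 2: the $\boxtimes$-extension.} By Proposition \ref{lem28}, $\boxtimes$-biderivations $\mathbf{P}\boxtimes\mathbf{P}\to\mathbf{P}$ correspond bijectively to morphisms $\mathbf{A}\boxtimes\mathbf{A}\to\mathbf{P}$, and we take the morphism from Step 1. This gives $B$ on monomials:
\[ \{a^{(1)}\boxdot\cdots\boxdot a^{(p)},\,b^{(1)}\boxdot\cdots\boxdot b^{(q)}\}_{p,q}=\sum_{i,j}\pm\text{(block permutation)}\cdot\big(\{a^{(i)},b^{(j)}\}_{1,1}\boxdot\text{rest}\big). \]
By construction this satisfies the $\boxtimes$-Leibniz rule and extends $k$.

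\textbf{Uniqueness.} Any such $B$ is a $\boxtimes$-biderivation, so Proposition \ref{lem28} determines it by its restriction to $\mathbf{A}\boxtimes\mathbf{A}$. The $\otimes$-Leibniz rule in degrees $(1,1)$, combined with Lemma \ref{lem27}, determines that restriction by its values on $V\boxtimes V$, where $B$ equals $u^{\otimes}k$.

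\textbf{Remaining axioms.} We then need the $\otimes$-Leibniz rule in all degrees $(p,q)$, skew-symmetry, and Jacobi.
\begin{enumerate}
\item For $\otimes$-Leibniz, use the interchange law: $m_p^{\otimes}$ acts slotwise on monomials, $(a\odot b)=\boxdot_i(a^{(i)}\odot b^{(i)})$, and $\ell_{p,q}$ inserts $\mathds{1}^{\otimes}$ in the new slots. The rule then reduces to the $(1,1)$ rule in each slot plus the vanishing of brackets on units.
\item Skew-symmetry is checked on $\mathbf{A}\boxtimes\mathbf{A}$: both $B$ and $-B\circ\beta^{\boxtimes}$ are $\boxtimes$-biderivations agreeing there, so they coincide by Proposition \ref{lem28}.
\item For Jacobi, show the Jacobiator $J$ is a $\boxtimes$-triderivation, which is the standard argument using the Leibniz rule and skew-symmetry. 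It is then determined by its values on $\mathbf{A}^{\boxtimes 3}$. Similarly, on $\mathbf{A}$-slots it is a $\otimes$-triderivation, so it is determined by $V^{\boxtimes 3}$. There it vanishes, because $\{v,w\}$ lies in the image of $u^{\otimes}$ and brackets with units vanish.
\end{enumerate}

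\textbf{Main obstacle.} The hard part is the Jacobi step. One has to make precise that the Jacobiator, with its block permutations $\sigma_{p,q,r}$ and $\sigma^{(2)}_{p,q,r}$, is a derivation for both module structures simultaneously, and that a multiderivation vanishing on generators vanishes identically. My first approach is to avoid the general triderivation formalism: expand on monomials using the explicit Step 2 formula, so that every double bracket involves $\{\text{unit},\cdot\}=0$ or reduces to cyclic sums of products $k\cdot k$. These cancel pairwise by skew-symmetry, as in the pairing argument used for $\mathbf{P}_1\boxtimes\mathbf{P}_2$.
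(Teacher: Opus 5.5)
Your proposal is correct, and most of it follows the paper's proof. The construction (Lemma \ref{lem27} in each variable, then Proposition \ref{lem28}) is the same. So are skew-symmetry via uniqueness in Proposition \ref{lem28} and the Jacobi identity via reduction of the Jacobiator to $\mathbf{A}^{\boxtimes 3}$ and then to $V^{\boxtimes 3}$, where it vanishes because $\{v,w\}_{1,1}=k(v,w)\,\mathds{1}_2^{\otimes}$.

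The real difference is the $\otimes$-Leibniz rule in degree $(p,q)$. The paper forms the defect $\Delta_{p,q}(a,b,c)$ and shows it is a $\boxtimes$-derivation in $a$. For fixed $a$, it then shows the defect is a $\boxtimes$-derivation on the diagonal algebra $\mathbf{Q}=\mathbf{P}\otimes\mathbf{P}$, which is free on $\mathbf{A}\otimes\mathbf{A}$ because $V$ is local. Everything then reduces to $\Delta_{1,1}=0$.

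You instead use $\mathbf{P}_p\cong\mathbf{A}^{\boxtimes^{\mathrm{ext}}p}$ to reduce to pure $\boxdot$-monomials and compute slot by slot with the interchange law. For each pair of slots $(i,j)$, both sides have the same outer factors surrounding $\{a^{(i)},c^{(j)}\}_{1,1}\odot\ell_{1,1}(b^{(i)})+\ell_{1,1}(a^{(i)})\odot\{b^{(i)},c^{(j)}\}_{1,1}$, so the $(1,1)$ rule suffices. This is more elementary and makes the mechanism visible. The paper's version keeps every step inside the single principle that a derivation on a free algebra is determined on generators. Both rest on the same consequence of locality.

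You also give an explicit uniqueness argument, which the paper leaves implicit.

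The obstacle you flag is real, and the paper shares it: it simply asserts that the Jacobiator is a $\boxtimes$-triderivation and, on $\mathbf{A}$-slots, a $\otimes$-triderivation. Justifying this requires checking that the cross terms cancel pairwise by skew-symmetry. These are products of two disjoint $(1,1)$-brackets, for example $\{b,a'\}_{1,1}$ and $\{c,a\}_{1,1}$ when $a$ is replaced by $a\odot a'$. Your fallback of expanding on monomials is a valid way to close this, at the cost of block-permutation bookkeeping. There the second-derivative $k\cdot k$ terms cancel by symmetry of mixed partials and skew-symmetry of $k$.
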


\begin{proof} 
Set $\mathbf{A} = \mathbf{S}^{\otimes}(V)$, so that $\mathbf{P} = \mathbf{S}^{\boxtimes}(\mathbf{A})$. Since $V$ is local, so is $\mathbf{A} = \mathbf{S}^{\otimes}(V)$. Therefore, $\mathbf{P}_2 = \mathbf{A} \boxtimes^{\mathrm{ext}} \mathbf{A}$. Now $\mathbf{P}_2$ carries two $(\mathbf{A}, \otimes)$-module structures, one in each variable 
\[ \forall a \in \mathbf{A}, \quad \forall u,v \in \mathbf{P}_2, \qquad a \cdot (v \otimes w) = (a \odot u) \otimes w \]
and
\[ \forall b \in \mathbf{A}, \quad \forall u,v \in \mathbf{P}_2, \qquad (v \otimes w) \cdot b = u \otimes (w \odot b). \]
We compose $k$ with the $\otimes$-unit $u^{\otimes} \circ k: V \boxtimes V \longrightarrow \mathbf{P}_2$ sending $v \boxtimes w$ to $k(v,w) \cdot \big( \mathds{1}^{\otimes} \otimes \mathds{1}^{\otimes} \big)$. We now extend in each variable separately. Fix $w \in V$, the map $v \in V \longmapsto k(v,w) \cdot \big( \mathds{1}^{\otimes} \otimes \mathds{1}^{\otimes} \big)$ is linear. Since, $\mathbf{A} = \mathbf{S}^{\otimes}(V)$ is the free commutative algebra generated by $V$, Lemma \ref{lem27} gives a unique $\otimes$-derivation $\mathbf{A} \longrightarrow \mathbf{P}_2$. Similarly in the second variable, Lemma \ref{lem27} extends $u^{\otimes} \circ k: V \boxtimes V \longrightarrow \mathbf{P}_2$ into a $\otimes$-biderivation 
\[ b: \mathbf{A} \boxtimes \mathbf{A} \longrightarrow \mathbf{P}_2 \subset \mathbf{P}. \]
Explicitly, on monomials of $\mathbf{A}$, one has 
\[ b \big( v_1 \odot \ldots \odot v_p, w_1 \odot \ldots \odot w_q \big) = \sum \limits_{1 \leqslant i,j \leqslant p,q} k(v_i, w_j) \cdot \big( v_1 \odot \ldots \odot \widehat{v_i} \odot \ldots \odot v_p \big) \otimes \big( w_1 \odot \ldots \odot \widehat{w_j} \odot \ldots \odot w_q \big). \]
Since $\mathbf{P} = \mathbf{S}^{\boxtimes}(\mathbf{A})$ is the free commutative $\boxtimes$-algebra generated by $\mathbf{A}$, Proposition \ref{lem28} extends $b: \mathbf{A} \boxtimes \mathbf{A} \longrightarrow \mathbf{P}$ to a $\boxtimes$-biderivation 
\[ B: \mathbf{P} \boxtimes \mathbf{P} \longrightarrow \mathbf{P}. \]
It remains to verify that $B$ is skew-symmetric and satisfies the $\otimes$-Leibniz rule and the Jacobi identity. For skew-symmetry, define $B' = - B \circ \beta^{\boxtimes}$. Then $B'$ is also a $\boxtimes$-derivation $\mathbf{P} \boxtimes \mathbf{P} \longrightarrow \mathbf{P}$ since $\beta^{\boxtimes}$ is a $\boxtimes$-algebra isomorphism and $B$ is a $\boxtimes$-derivation. By uniqueness in Proposition \ref{lem28}, it suffices to check that $B'$ coincides with $b$ on $\mathbf{A} \boxtimes \mathbf{A}$. This is immediately verified using the explicit expression of $b$ and skew-symmetry of $k$.

For the $\otimes$-Leibniz rule, fix $p,q \in \mathbb{N}$, $a \in \mathbf{P}_p$ and $b,c \in \mathbf{P}_q$, and consider the defect in the second variable: 
\[ \begin{array}{lll} \Delta_{p,q}(a,b,c) &=& \left\{ a, b \odot c \right\}_{p,q} - \left\{ a,b \right\}_{p,q} \odot r_{p,q}(c) - \left\{ a,c \right\}_{p,q} \odot r_{p,q}(b) \\ 
&=& \left\{ a, b \odot c \right\}_{p,q} - \left\{ a,b \right\}_{p,q} \odot (\mathds{1}_p^{\otimes} \boxdot c) - \left\{ a,c \right\}_{p,q} \odot (\mathds{1}_p^{\otimes} \boxdot b). \end{array} \]
Since the brackets satisfy the $\boxtimes$-Leibniz rule, it is immediate that $\Delta$ is a $\boxtimes$-derivation in the first argument. Hence, by Proposition \ref{lem28}, for fixed $b,c \in \mathbf{P}_q$, it suffices to show that the map $a \longmapsto \Delta(a,b,c)$ vanishes on $\mathbf{A} = \mathbf{S}^{\otimes}(V)$.
To encode the condition that both $b$ and $c$ must lie in the same configuration, consider the diagonal $\boxtimes$-algebra: 
\[ \mathbf{Q} = \mathbf{P} \otimes \mathbf{P} = \bigoplus \limits_{q \in \mathbb{N}} \mathbf{P}_q \otimes \mathbf{P}_q \]
equipped with the diagonal $\boxtimes$-product
\[ (b_1, c_1) \boxdot (b_2, c_2) = \big( b_1 \boxdot b_2, c_1 \boxdot c_2 \big). \]
The map $\mu: (b,c) \in \mathbf{Q} \longmapsto b \odot c \in \mathbf{P}$ is a $\boxtimes$-algebra morphism by the interchange law. Thus $\mathbf{P}$ becomes a $(\mathbf{Q}, \boxtimes)$-module via
\[ x \cdot (b,c) = x \boxdot \mu(b,c) = x \boxdot (b \odot c). \]
Now define, for fixed $a$, the map $\Delta_a = \Delta(a, \cdot, \cdot): \mathbf{Q} \longrightarrow \mathbf{P}$. We claim that this map is a $\boxtimes$-derivation. A direct computation gives
\[ \begin{array}{lll} \Delta_a \big( (b_1, c_1) \boxdot (b_2, c_2) \big) &=& \Delta_a \big( b_1 \boxdot b_2, c_1 \boxdot c_2 \big) \\ 
&=& \left\{ a, \big( b_1 \boxdot b_2 \big) \odot \big( c_1 \boxdot c_2 \big) \right\} \\
&& - \left\{ a, b_1 \boxdot b_2 \right\} \odot \big( \mathds{1}^{\otimes} \boxdot (c_1 \boxdot c_2) \big) \\ 
&& - \left\{ a, c_1 \boxdot c_2 \right\} \odot \big( \mathds{1}_p \boxdot (b_1 \boxdot b_2) \big). \end{array} \]
Using successively the interchange law and the $\boxtimes$-Leibniz rule, the first term becomes
\[ \begin{array}{lll} \left\{ a, \big( b_1 \boxdot b_2 \big) \odot \big( c_1 \boxdot c_2 \big) \right\} &=& \left\{ a, \big( b_1 \odot c_1 \big) \boxdot \big( b_2 \odot c_2 \big) \right\} \\
&=& \left\{ a, b_1 \odot c_1 \right\} \boxdot (b_2 \odot c_2) + (b_1 \odot c_1) \boxdot \left\{ a, b_2 \odot c_2 \right\}. \end{array} \]
Likewise, the second term becomes
\[ \begin{array}{lll} \left\{ a, b_1 \boxdot b_2 \right\} \odot \big( \mathds{1}^{\otimes} \boxdot (c_1 \boxdot c_2) \big) 
&=& \Big( \left\{ a,b_1 \right\} \boxdot b_2 \Big) \odot \big( (\mathds{1}^{\otimes} \boxdot c_1) \boxdot c_2 \big) + \Big( b_1 \boxdot \left\{ a, b_2 \right\} \Big) \odot \big( (\mathds{1}^{\otimes} \boxdot c_1) \boxdot c_2 \big) \\ 
&=& \Big( \left\{ a, b_1 \right\} \odot \big( \mathds{1}^{\otimes} \boxdot c_1 \big) \Big) \boxdot (b_2 \odot c_2) 
+ (b_1 \odot c_1) \boxdot \Big( \left\{ a, b_2 \right\} \odot \big( \mathds{1}^{\otimes} \boxdot c_2 \big) \Big).
\end{array} \]
Similarly, the third term becomes
\[ \begin{array}{lll} \left\{ a, c_1 \boxdot c_2 \right\} \odot \big( \mathds{1}^{\otimes} \boxdot (b_1 \odot b_2) \big) &=& \Big( \left\{ a, c_1 \right\} \odot \big( \mathds{1}^{\otimes} \boxdot b_1 \big) \Big) \boxdot (c_2 \odot b_2) + (c_1 \boxdot b_1) \boxdot \Big( \left\{ a, c_2 \right\} \odot \big( \mathds{1}^{\otimes} \boxdot b_2 \big) \Big) \\ 
&=& \Big( \left\{ a, c_1 \right\} \odot \big( \mathds{1}^{\otimes} \boxdot b_1 \big) \Big) \boxdot (b_2 \odot c_2) + (b_1 \odot c_1) \boxdot \Big( \left\{ a, c_2 \right\} \odot \big( \mathds{1}^{\otimes} \boxdot b_2 \big) \Big). \end{array} \]
Therefore, 
\[ \begin{array}{lll} \Delta_a \big( (b_1, c_1) \boxdot (b_2, c_2) \big) &=& (b_2 \odot c_2) \boxdot \Big( \left\{ a, b_1 \odot c_1 \right\} - \left\{ a,b_1 \right\} \odot \big( \mathds{1}^{\otimes} \boxdot c_1 \big) - \left\{ a, c_1 \right\} \odot \big( \mathds{1}^{\otimes} \boxdot b_1 \big) \Big) \\ 
&& + (b_1 \odot c_1) \boxdot \Big( \left\{ a, b_2 \odot c_2 \right\} - \left\{ a,b_2 \right\} \odot \big( \mathds{1}^{\otimes} \boxdot c_2 \big) - \left\{ a, c_2 \right\} \odot \big( \mathds{1}^{\otimes} \boxdot b_2 \big) \Big) \\ 
&=& \Delta_a(b_1, c_1) \boxdot (b_2 \odot c_2) + (b_1 \odot c_1) \boxdot \Delta_a(b_2, c_2). \end{array} \]
Hence, $\Delta_a$ is indeed a $\boxtimes$-derivation on $\mathbf{Q}$. Since $V$ is local, so is $\mathbf{A} = \mathbf{S}^{\otimes}(V)$, and therefore $\mathbf{Q}$ is the free commutative $\boxtimes$-algebra generated by $\mathbf{A} \otimes \mathbf{A}$. It follows that it suffices to prove 
\[ \forall a,b,c \in \mathbf{A}, \qquad \Delta_{1,1}(a,b,c) = 0. \]
But this holds by construction of $b = \left\{ \cdot, \; \cdot \right\}_{1,1}$, which satisfies the $\otimes$-Leibniz rule.

We now verify the Jacobi identity. Consider the jacobiator 
\[ J_{p,q,r}(a,b,c) = \left\{ a, \left\{ b,c \right\}_{q,r} \right\}_{p,q+r} + \left\{ b, \left\{ c, a \right\}_{r,p} \right\}_{q, r+p} + \left\{ c, \left\{ a,b \right\}_{p,q} \right\}_{r,p+q} \in \mathbf{P}_{p+q+r}\]
where $a \in \mathbf{P}_p$, $b \in \mathbf{P}_q$ and $c \in \mathbf{P}_r$. Note that $J$ is a $\boxtimes$-triderivation (i.e., a $\boxtimes$-derivation in each of its three variables), hence it is determined by its values on $\mathbf{A} \boxtimes \mathbf{A} \boxtimes \mathbf{A}$. Thus it suffices to prove 
\[ \forall a,b,c \in \mathbf{A}, \qquad J_{1,1,1}(a,b,c) = 0. \]
By construction, $J_{1,1,1}$ is a $\otimes$-triderivation (i.e., a $\otimes$-derivation in each of its three variables), so it suffices to show that $J_{1,1,1}$ vanishes on $V \boxtimes V \boxtimes V$. Recall that if $a,b,c \in V$, then 
\[ \left\{ b,c \right\}_{1,1} = k(b,c) \big( \mathds{1}^{\otimes}_1 \boxdot \mathds{1}^{\otimes}_1 \big) = k(b,c) \mathds{1}^{\otimes}_2. \]
Therefore, 
\[ \left\{ a, \left\{ b,c \right\}_{1,1} \right\}_{1,2} = k(b,c) \left\{ a, \mathds{1}^{\otimes}_2 \right\}_{1,2} = 0 \]
since the brackets vanish on units. Therefore, $J = 0$ on $\mathbf{P}$, and the Jacobi identity holds.
\end{proof}

\begin{Remark}[Computation rules] On $\mathbf{P} = \mathbf{S}^{\boxtimes} \big( \mathbf{S}^{\otimes}(V) \big)$, there is a well-defined bracket $B: \mathbf{P} \boxtimes \mathbf{P} \longrightarrow \mathbf{P}$ characterized by the following properties:
\begin{enumerate}
    \item The bracket vanishes on units, that is 
    \[ \forall a \in \mathbf{P}_p, \quad \forall b \in \mathbf{P}_q, \qquad \left\{ a, \mathds{1}_q^{\otimes} \right\}_{p,q} = \left\{ \mathds{1}_p^{\otimes}, b \right\}_{p,q} = 0, \]
    \item The bracket extends $k$, that is 
    \[ \forall a,b \in \mathbf{P}_1, \qquad \left\{ a,b \right\}_{1,1} = k(a,b) \big( \mathds{1}_1^{\otimes} \boxdot \mathds{1}_1^{\otimes} \big) = k(a,b) \mathds{1}^{\otimes}_2, \]
    \item The bracket satisfies the $\otimes$-Leibniz rule. More precisely, for all $a,b \in \mathbf{P}_p$ and $c \in \mathbf{P}_q$,
    \[ \left\{ a \odot b, c \right\}_{p,q} = \left\{ a,c \right\}_{p,q} \odot \big(b \boxdot \mathds{1}_q^{\otimes} \big) + \big( a \boxdot \mathds{1}_q^{\otimes} \big) \odot \left\{ b,c \right\}_{p,q}. \]
    Likewise, for all $a \in \mathbf{P}_p$ and $b,c \in \mathbf{P}_q$,
    \[ \left\{ a, b \odot c \right\}_{p,q} = \left\{ a,b \right\}_{p,q} \odot \big( \mathds{1}_p^{\otimes} \boxdot c \big) + \big( \mathds{1}_p^{\otimes} \boxdot b \big) \odot \left\{ a,c \right\}_{p,q}, \]
    \item The bracket satisfies the $\boxtimes$-Leibniz rule. That is, for all $a \in \mathbf{P}_p$, $b \in \mathbf{P}_q$ and $c \in \mathbf{P}_r$,
    \[ \left\{ a \boxdot b, c \right\}_{p+q, r} = \left\{ a,c \right\}_{p,r} \boxdot b + a \boxdot \left\{ b,c \right\}_{q,r}, \]
    \[ \left\{ a, b \boxdot c \right\}_{p,q+r} = \left\{ a,b \right\}_{p,q} \boxdot c + b \boxdot \left\{ a,c \right\}_{p,r}. \]
\end{enumerate}
\end{Remark}

\subsection{Examples} 

\subsubsection{Trivial bundles over $\mathbb{R}$}

We now illustrate the constructions of the previous sections in the simplest concrete setting. Recall from the introduction that a classical field on a spacetime manifold $M$ is a section $\varphi \in \Gamma(E)$ of a vector bundle $E \rightarrow M$. A polynomial local observable of degree $m$ at a point $x \in M$ is described by a kernel $f_m(x) \in S^{\otimes m}(E^*)_x$, which pairs with the field configuration via $\big \langle f_m(x), \varphi(x)^{\otimes m} \big \rangle \in \mathbb{R}$. More generally, a multilocal observable of degree $(m_1, \ldots, m_n)$ at a configuration $(x_1, \ldots, x_n) \in M^n$ is described by a kernel $f_{m_1}(x_1) \otimes \ldots \otimes f_{m_n}(x_n)$ where each $f_{m_j} \in S^{\otimes m_j}(E^*)_{x_j}$, which pairs to $\prod \limits_{j=1}^{n} \big \langle f_{m_j}(x_j), \varphi(x_j)^{\otimes m_j} \big \rangle$. The full space of such kernels over all configurations is precisely the Poisson $2$-algebra bundle $\mathbf{P} = \mathbf{S}^{\boxtimes} \big( \mathbf{S}^{\otimes}(V) \big)$ constructed in the previous sections, where $V = E^*$. The Hadamard product $\odot$ encodes pointwise polynomial multiplication of kernels at a given point, and the Cauchy product $\boxdot$ encodes the concatenation of independent observations at separate points. The passage from kernels to actual observables requires a further step of integration against densities and, along the diagonals, the renormalization of singular products of distributions, which is future work.

In what follows, we take $M = \mathbb{R}$ and $E = M \times F$ where $F$ is a finite-dimensional vector space of dimension $r$. The dual bundle is $V = E^* = M \times F^*$. Fix a basis $(e_1, \ldots, e_r)$ of $F$, then write $(\xi^1, \ldots, \xi^r)$ for the dual basis in $F^*$, viewed as coordinate functions on $F$. For a field $\varphi: \mathbb{R} \longrightarrow F$, its components are $\varphi^a = \xi^a \circ \varphi$. Note also that since $M=\mathbb{R}$ is orientable, the integration measure may be chosen canonically (up to a positive scalar) as the standard Lebesgue measure $\mathrm{d} t$. The treatment of densities mentioned in the introduction therefore does not intervene in this example.

The Hadamard symmetric algebra at a point $t \in M$ is 
\[ \mathbf{S}^{\otimes}(V)_t = S(F^*) = \mathbb{R} \big[ \xi^1, \ldots, \xi^r \big] \]
the algebra of polynomial functions on the fibre $F$. A monomial $\big(\xi^{a_1} \odot \ldots \odot \xi^{a_m} \big)_t$ is the kernel representing the local observable $\varphi^{a_1}(t) \ldots \varphi^{a_m}(t)$. Since $V$ is local, there is a canonical identification
\[ \mathbf{P}_n \cong \mathbf{S}^{\otimes}(V)^{\boxtimes^{\mathrm{ext}} n} \]
as $\mathfrak{S}_n$-equivariant vector bundles over $\mathbb{R}^n$. Its fibre at a point $(t_1, \ldots, t_n) \in \mathbb{R}^n$ is 
\[ \mathbf{P}_{(t_1, \ldots, t_n)} = \mathbb{R} \big[ \xi_{t_1}^{1}, \ldots, \xi_{t_1}^{r} \big] \otimes_{\mathbb{R}} \ldots \otimes_{\mathbb{R}} \mathbb{R} \big[ \xi_{t_n}^{1}, \ldots, \xi_{t_n}^{r} \big] \]
with the $\mathfrak{S}_n$-action permuting the factors together with the base points. 

The rank drop phenomenon is most naturally seen at the level of equivariant smooth sections. An equivariant smooth section is a smooth map $f: \mathbb{R}^n \longrightarrow S(F^*)^{\otimes n}$ such that 
\[ \forall \sigma \in \mathfrak{S}_n, \qquad  f \big( t_{\sigma^{-1}(1)}, \ldots, t_{\sigma^{-1}(n)} \big) = \sigma \cdot f(t_1, \ldots, t_n). \]
Hence if $x \in \mathbb{R}^n$ has isotropy subgroup $G_x \subset \mathfrak{S}_n$, then the value of an equivariant section at $x$ must lie in the fixed subspace $\big( S(F^*)^{\otimes n} \big)^{G_x}$. For instance, in degree $2$, the fibre at a generic point $(t_1, t_2)$ with $t_1 \neq t_2$ is $\mathbb{R} \big[ \xi_{t_1}^1, \ldots, \xi_{t_1}^r \big] \otimes \mathbb{R} \big[ \xi_{t_2}^1, \ldots, \xi_{t_2}^r \big]$, a polynomial algebra in $2r$ variables. Above a diagonal point $(t,t) \in \mathbb{R}^2$, the transposition $\sigma = \begin{pmatrix} 1&2 \end{pmatrix} \in \mathfrak{S}_2$ exchanges the two tensor factors, therefore equivariant sections must satisfy
\[ \sigma \cdot (f \otimes g)(t,t) = (g \otimes f)(t,t) \]
so their values lie in the invariant subspace $\big( S(F^*) \otimes_{\mathbb{R}} S(F^*) \big)^{\mathfrak{S}_2}$. In total polynomial degree $1$, the generic fibre is $2r$-dimensional with basis $\big( \xi_{t_1}^a \otimes 1, 1 \otimes \xi_{t_2}^a \big)_{1 \leqslant a \leqslant r}$, while the allowed values of equivariant sections along the diagonal form only the $r$-dimensional subspace spanned by $\big( \xi_{t_1}^a \otimes 1 + 1 \otimes \xi_{t_2}^a \big)_{1 \leqslant a \leqslant r}$. This is the rank drop phenomenon discussed in the introduction.

Over a fixed configuration $(t_1, \ldots, t_n)$, the Hadamard product multiplies polynomials fibrewise at each site: 
\[ (f_1 \otimes \ldots \otimes f_n) \odot (g_1 \otimes \ldots \otimes g_n) = (f_1 \odot g_1) \otimes \ldots \otimes (f_n \odot g_n) \]
where $f_i \odot g_i$ denotes the polynomial product in the fibre algebra $\mathbf{S}^{\otimes}(V)_{t_i}$. The Cauchy product concatenates configurations: if $f$ lies over $(t_1, \ldots, t_p)$ and $g$ lies over $(t_{p+1}, \ldots, t_{p+q})$, then $f \boxdot g$ lies over the concatenated configuration $(t_1, \ldots, t_{p+q})$. The Hadamard product corresponds to multiplying local polynomial kernels at the same spacetime point - for instance, $(\xi^a \odot \xi^b)_t$ is the kernel of the degree-$2$ observable $\varphi^a(t) \varphi^b(t)$ - while the Cauchy product concatenates kernels at different points into a multilocal kernel - for instance, $(\xi^a)_{t_1} \boxdot (\xi^b)_{t_2}$ is the kernel of the bilocal observable $\varphi^a(t_1) \varphi^b(t_2)$.

A bundle map $k: V \boxtimes V \rightarrow \mathbf{I}_{\otimes}$ is determined by smooth functions $k^{ab}: \mathbb{R}^2 \rightarrow \mathbb{R}$ for $1 \leqslant a,b \leqslant r$. The skew-symmetry of $k$ imposes 
\[ k^{ab}(t_1, t_2) = - k^{ba}(t_2, t_1). \]
In particular, notice that the matrix $\big( k^{ab}(t,t) \big)_{1 \leqslant a,b \leqslant r}$ is a skew-symmetric bilinear form on $F^*$. By Theorem \ref{free_poisson}, this kernel induces a Poisson bracket $B: \mathbf{P} \boxtimes \mathbf{P} \rightarrow \mathbf{P}$ determined on generators by 
\[ \left\{ \xi^a_{t_1}, \xi^b_{t_2} \right\}_{1,1} = k^{ab}(t_1, t_2) \cdot \mathds{1}^{\otimes}_{(t_1,t_2)} \in \mathbf{P}_{(t_1,t_2)}. \]
Iterating the $\otimes$-Leibniz rule yields the following formula. 
\begin{Corollary} \label{poisson_mech}
For $f(t_1) \in \mathbf{S}^{\otimes}(V)_{t_1}$ and $g(t_2) \in \mathbf{S}^{\otimes}(V)_{t_2}$, the elementary Poisson bracket is 
\[ \left\{ f(t_1), g(t_2) \right\}_{1,1} = \sum \limits_{1 \leqslant a,b \leqslant r} k^{ab}(t_1, t_2) \frac{\partial f}{\partial \xi^a}(t_1) \otimes \frac{\partial g}{\partial \xi^b}(t_2) \]
where the partial derivatives are the usual polynomial derivatives in the fibre coordinates. 
\end{Corollary}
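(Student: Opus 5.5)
The approach is to reduce to monomials and then expand by the $\otimes$-Leibniz rule of the computation rules (item 3) in each variable separately. Both sides of the claimed identity are bilinear in $(f,g)$, so it suffices to take $f = \xi^{a_1} \odot \ldots \odot \xi^{a_m}$ at $t_1$ and $g = \xi^{b_1} \odot \ldots \odot \xi^{b_n}$ at $t_2$. The degenerate cases are settled by the lemma that brackets vanish on $\otimes$-units. If $m=0$ or $n=0$, the left side vanishes, and so does the right side, since constants have zero partial derivatives.

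The core step is an induction on $m$ in the first variable, with $g$ fixed. The first-variable $\otimes$-Leibniz rule with $p=q=1$ reads
\[ \left\{ f' \odot \xi^{a}, g \right\}_{1,1} = \left\{ f', g \right\}_{1,1} \odot \big( \xi^{a} \boxdot \mathds{1}_1^{\otimes} \big) + \big( f' \boxdot \mathds{1}_1^{\otimes} \big) \odot \left\{ \xi^{a}, g \right\}_{1,1}. \]
Under the identification $\mathbf{P}_2 \cong \mathbf{S}^{\otimes}(V) \boxtimes^{\mathrm{ext}} \mathbf{S}^{\otimes}(V)$, the element $h \boxdot \mathds{1}_1^{\otimes}$ is $h \otimes 1$, and $\odot$ is componentwise multiplication. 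The induction therefore yields
\[ \left\{ f, g \right\}_{1,1} = \sum_{i} \big( \xi^{a_1} \odot \ldots \widehat{\xi^{a_i}} \ldots \odot \xi^{a_m} \otimes 1 \big) \odot \left\{ \xi^{a_i}, g \right\}_{1,1}, \]
which is exactly $\sum_a \big( \tfrac{\partial f}{\partial \xi^a} \otimes 1 \big) \odot \{\xi^a, g\}_{1,1}$ by the usual product rule for polynomial derivatives. Running the same argument in the second variable, using $\mathds{1}_1^{\otimes} \boxdot h = 1 \otimes h$, gives $\{\xi^a, g\}_{1,1} = \sum_b \big( 1 \otimes \tfrac{\partial g}{\partial \xi^b} \big) \odot \{\xi^a, \xi^b\}_{1,1}$. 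To finish, insert $\{\xi^a_{t_1}, \xi^b_{t_2}\}_{1,1} = k^{ab}(t_1,t_2)\,\mathds{1}^{\otimes}_{(t_1,t_2)}$ and use that $\mathds{1}^{\otimes}_{(t_1,t_2)} = 1 \otimes 1$ is the $\odot$-unit. This gives
\[ \sum_{a,b} k^{ab}(t_1,t_2)\, \frac{\partial f}{\partial \xi^a}(t_1) \otimes \frac{\partial g}{\partial \xi^b}(t_2). \]

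An alternative route avoids the induction entirely. One checks that the right-hand side is itself a $\otimes$-biderivation $\mathbf{S}^{\otimes}(V) \boxtimes \mathbf{S}^{\otimes}(V) \to \mathbf{P}_2$ that agrees with $k$ on $V \boxtimes V$, and then concludes by the uniqueness in Lemma \ref{lem27}. This matches how $b = \{\cdot,\cdot\}_{1,1}$ was built in the proof of Theorem \ref{free_poisson}; indeed the explicit monomial formula for $b$ displayed there is already the statement after regrouping the double sum.

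The main obstacle is bookkeeping rather than any substantive difficulty. One has to identify correctly the left and right $\mathbf{P}_1$-module structures on $\mathbf{P}_2$, namely $\ell_{1,1}(h) = h \otimes 1$ and $r_{1,1}(h) = 1 \otimes h$. With those identifications in place, the product-rule expansion is routine.
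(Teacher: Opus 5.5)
Your proposal is correct and follows essentially the paper's own argument. The paper's proof is the single remark that iterating the $\otimes$-Leibniz rule in each variable, together with $\ell_{1,1}(h)=h\otimes 1$, $r_{1,1}(h)=1\otimes h$ and $\{\xi^a_{t_1},\xi^b_{t_2}\}_{1,1}=k^{ab}(t_1,t_2)\,\mathds{1}^{\otimes}_{(t_1,t_2)}$, gives the formula; your alternative via uniqueness in Lemma \ref{lem27} simply regroups the explicit monomial formula for $b$ in the proof of Theorem \ref{free_poisson}.
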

The $\boxtimes$-Leibniz rule extends the bracket to multilocal observables. If $f(t_1, \ldots, t_n) = f_1(t_1) \boxdot \ldots \boxdot f_p(t_p)$ is a pure $\boxtimes$-tensor with $f_i(t_i) \in \mathbf{S}^{\otimes}(V)_{t_i}$ and $g(s) \in \mathbf{S}^{\otimes}(V)_{s}$. Then 
\[ \left\{ f_1(t_1) \boxdot \ldots \boxdot f_p(t_p), g(s) \right\}_{p,1} = \sum \limits_{i=1}^{p} f_1(t_1) \boxdot \ldots \boxdot f_{i-1}(t_{i-1}) \boxdot \left\{ f_i(t_i), g(s) \right\}_{1,1} \boxdot f_{i+1}(t_{i+1}) \boxdot \ldots \boxdot f_p(t_p) \]
where each summand lies over the $(p+1)$-point configuration obtained by reindexing the point $t_i$ with the pair $(t_i, s)$, and $\left\{ f_i(t_i), g(s) \right\}_{1,1} \in \mathbf{P}_{(t_i, s)}$ is computed by Corollary \ref{poisson_mech}. Thus the bracket acts as a derivation of the $\boxtimes$-product, introducing one additional base point at each step. 

\begin{Example} 
For $r=1$ and $f(t_1, t_2) = \xi_{t_1} \boxdot \xi_{t_2}$ and $g(s) = \xi_s$, 
\[ \left\{ f(t_1, t_2), g(s) \right\}_{2,1} = k(t_1, s) \cdot \mathds{1}_{(t_1,s)}^{\otimes} \boxdot \xi_{t_2} + \xi_{t_1} \boxdot k(t_2,s) \cdot \mathds{1}_{(t_2,s)}^{\otimes}. \]
The first term lies in $\mathbf{P}_{(t_1,s,t_2)}$ and the second in $\mathbf{P}_{(t_1,t_2,s)}$. These are distinct summands in the unshuffle decomposition of $\mathbf{P}_3$, reflecting positions at which the new base point $s$ is inserted into the configuration.
\end{Example}

\subsubsection{Classical mechanics}

We now specialize to the case of classical mechanics. Let $Q$ be a real vector space of dimension $d$ and set $F = Q \oplus Q^*$ so that $r=2d$. The field bundle is $E = M \times F$ and its dual is $V = E^* = M \times F^*$. Fix a basis $(e_1, \ldots, e_d)$ of $Q$ with dual basis $(e^1, \ldots, e^d)$ of $Q^*$, and write the corresponding fibre coordinates of $F^*$ as $(q^1, \ldots, q^d, p_1, \ldots, p_d)$ where $q^i$ is the coordinate function dual to $e_i$ and $p_j$ is dual to $e_j$. A field $\varphi: \mathbb{R} \longrightarrow Q \oplus Q^*$ has components 
\[ \varphi (t) = \big( q^1(t), \ldots, q^d(t), p_1(t), \ldots, p_d(t) \big). \]
The vector space $F = Q \oplus Q^*$ carries a canonical symplectic form
\[ \forall \alpha, \beta \in Q^*, \quad \forall v,w \in Q, \qquad \omega \big( (v, \alpha), (w, \beta) \big) = \beta(v) - \alpha(w). \]
In the ordered basis $\big( (e_1,0), \ldots, (e_d,0), (0, e^1), \ldots, (0,e^d) \big)$ of $F$, the matrix of $\omega$ is
\[ \big( \omega^{ab} \big)_{1 \leqslant a,b \leqslant 2d} = \begin{pmatrix} 0 & I_d \\ - I_d & 0 \end{pmatrix}. \] 
Since $\omega$ is a constant skew-symmetric form on $F$, we may define a skew-kernel by 
\[ k^{ab}(t_1, t_2) = G(t_1, t_2) \cdot \omega^{ab} \]
where $G: \mathbb{R}^2 \rightarrow \mathbb{R}$ is any smooth symmetric function, for instance $G=1$. The elementary Poisson bracket are 
\[ \left\{q^i(t_1), p_j(t_2) \right\}_{1,1} = G(t_1,t_2) \delta_j^i \cdot \mathds{1}_{(t_1,t_2)}^{\otimes} \qquad \left\{ q^i(t_1), q^j(t_2) \right\}_{1,1} = 0, \qquad \left\{ p_i(t_1), p_j(t_2) \right\}_{1,1} = 0. \]
For general polynomials $f \in \mathbb{R} \big[ q_{t_1}^1, \ldots, q_{t_1}^d, p_{1;t_1}, \ldots, p_{d;t_1} \big]$ and $g \in \mathbb{R} \big[ q_{t_2}^1, \ldots, q_{t_2}^d, p_{1;t_2}, \ldots, p_{d;t_2} \big]$, one has 
\[ \left\{f(t_1), g(t_2) \right\}_{1,1} = G(t_1, t_2) \sum \limits_{i=1}^{d} \Big( \frac{\partial f}{\partial q^i}(t_1) \otimes \frac{\partial g}{\partial p_i}(t_2) - \frac{\partial f}{\partial p_i}(t_1) \otimes \frac{\partial g}{\partial q^i}(t_2) \Big). \]

\begin{Example} 
Take $d=1$ and $G=1$. For a single degree of freedom $F = \mathbb{R}^2$ with coordinates $(q,p)$. Some elementary brackets are 
\[ \left\{q^2(t_1), p(t_2) \right\}_{1,1} = 2  q(t_1) \otimes \mathds{1}_{t_2}^{\otimes}, \qquad \left\{ q(t_1) p(t_1), q(t_2) p(t_2) \right\}_{1,1} = p(t_1) \otimes q(t_2) - q(t_1) \otimes p(t_2) \]
and 
\[ \left\{ q(t_1)^2 p(t_1), q(t_2) p(t_2)^2 \right\}_{1,1} = 2 q(t_1) p(t_1) \otimes 2 q(t_2) p(t_2) - q(t_1)^2 \otimes p(t_2)^2 = 4 q(t_1)p(t_1) \otimes q(t_2) p(t_2) - q(t_1)^2 \otimes p(t_2)^2. \]
\end{Example}

\begin{Remark} 
The bracket $B: \mathbf{P}_1 \boxtimes \mathbf{P}_1 \longrightarrow \mathbf{P}_2$ maps local observables to bilocal observables. To recover the usual Poisson bracket of classical mechanics, one must pull back along the diagonal and then compose with the Hadamard multiplication. More precisely, one composes the elementary bracket with the diagonal pullback $\Delta_2^*: \mathbf{P}_2 \rightarrow \Delta_2^* \mathbf{P}_2 \cong \mathbf{S}^{\otimes}(V) \otimes \mathbf{S}^{\otimes}(V)$ and applies the $\otimes$-multiplication within a single fibre. This gives 
\[ m^{\otimes} \circ \Delta_2^* \Big( \left\{f,g \right\} \Big) = \sum \limits_{i=1}^{d} \Big( \frac{\partial f}{\partial q^i} \frac{\partial g}{\partial p_i} - \frac{\partial f}{\partial p_i} \frac{\partial g}{\partial q^i} \Big) \]
which is the standard canonical Poisson bracket on $\mathbb{R}\big[ q^1, \ldots, q^d, p_1, \ldots, p_d \big]$. More generally, for a modified kernel $k^{ab}(t_1, t_2) = G(t_1, t_2) \omega^{ab}$, the diagonal restriction gives the classical bracket scaled by $G(t,t)$. If $G$ vanishes on the diagonal, then the diagonal bracket vanishes.
\end{Remark}

The rank drop phenomenon takes a concrete form in the canonical setting. At a diagonal point $(t,t) \in \mathbb{R}^2$, the fibre of $\mathbf{P}_2$ is $\mathbb{R}[q_1,p_1] \otimes \mathbb{R}[q_2,p_2]$ where the subscripts distinguish the two copies of the fibre at $t$. The transposition acts by simultaneous exchange $(q_1, p_1) \leftrightarrow (q_2,p_2)$. In total polynomial degree $1$, the generic fibre has dimension $4$ with basis $(q_1 \otimes 1, p_1 \otimes 1, 1 \otimes q_2, 1 \otimes p_2)$. However, the value at $(t,t)$ of an equivariant smooth section must lie in the $2$-dimensional invariant subspace spanned by $(q_1 \otimes 1 + 1 \otimes q_2, p_1 \otimes 1 + 1 \otimes p_2)$.
Observe that the bracket output $\left\{ q(t_1), p(t_2) \right\}_{1,1} = 1 \otimes 1$ is symmetric and therefore survives on the diagonal. In contrast $\left\{ q(t_1) p(t_1), q(t_2) p(t_2) \right\}_{1,1} = p \otimes q - q \otimes p$ is antisymmetric and therefore cannot occur as the value at a diagonal point of an equivariant smooth section.